\newtheorem{theorem}{Theorem}
\newtheorem{lemma}{Lemma}
\newcommand{\NP}{{\sf NP}}
\newcommand{\dist}{{\rm dist}}
\newcommand{\tw}{{\mathbf{tw}}}
\newcommand{\pw}{{\mathbf{pw}}}
\newcommand{\ad}{{\rm ad}}
\newcommand{\problemdef}[3]{
	\begin{center}
		\begin{boxedminipage}{.99\textwidth}
			\textsc{{#1}}\\[2pt]
			\begin{tabular}{ r p{0.8\textwidth}}
				\textit{~~~~Instance:} & {#2}\\
				\textit{Question:} & {#3}
			\end{tabular}
		\end{boxedminipage}
	\end{center}
}
\title{Algorithms for Outerplanar Graph Roots \\ and Graph Roots of Pathwidth at Most~$2$\thanks{
This paper received support from the Research Council of Norway via the project ``CLASSIS'' and the Leverhulme Trust via Grant RPG-2016-258. An extended abstract of it appeared in the proceedings of WG 2017~\cite{GolovachHKLP17}. }}
\author{
Petr A. Golovach\thanks{Department of Informatics, University of Bergen, PB 7803
N-5020 Bergen, Norway, \texttt{\{petr.golovach,pinar.heggernes,paloma.lima\}@uib.no}}
\addtocounter{footnote}{-1}
\and 
Pinar Heggernes\footnotemark{}
\and
Dieter Kratsch\thanks{Universit\'e de Lorraine, LITA, Metz, France, \texttt{dieter.kratsch@univ-lorraine.fr}}
\addtocounter{footnote}{-2}
\and
Paloma T. Lima\footnotemark{}
\addtocounter{footnote}{1}
\and
Dani{\"e}l Paulusma\thanks{Department of Computer Science, Durham University, Durham DH1 3LE, UK,
\texttt{daniel.paulusma@durham.ac.uk}}
}
\date{}
\begin{document}
\pagestyle{plain}

\maketitle

\begin{abstract}
Deciding if a graph has a square root is a classical problem, which has been studied extensively both from graph-theoretic and algorithmic perspective. As the problem is \NP-complete, substantial effort has been dedicated to determining the complexity of deciding if a graph has a square root belonging to some specific graph class~${\cal H}$. There are both polynomial-time solvable and \NP-complete results in this direction, depending on ${\cal H}$. 
We present a general framework for the problem if ${\cal H}$ is a class of sparse graphs.
This enables us to generalize a number of known results and to
give polynomial-time algorithms for the cases where ${\cal H}$ is the class of outerplanar graphs and
${\cal H}$ is the class of graphs of pathwidth at most~$2$. 
\end{abstract} 

\section{Introduction}\label{s-intro}

Squares and square roots of graphs form a classical and well-studied topic in graph theory, which has also attracted significant attention from the algorithms community. A graph $G$ is the {\it square} of a graph $H$ if $G$ and $H$ have the same vertex set, 
and two vertices are adjacent in $G$ if and only if the
distance between them is at most~$2$ in $H$. This situation is denoted by $G = H^2$, and $H$ is called a {\it square root} of $G$. A square root of a graph need not be unique; it might even not exist. That is, there are graphs without square roots, graphs with a unique square root, and graphs with several different square roots. Characterizing and recognizing graphs with square roots has therefore been an intriguing and important graph-theoretic problem for more than 50 years (see e.g.~\cite{Ge68,Mukhopadhyay67,RossH60}).

In 1967, Mukhopadhyay~\cite{Mukhopadhyay67} proved that a graph $G$ on vertex set $\{v_1,\ldots,v_n\}$ has a square root if and only if $G$ contains complete subgraphs $\{K^1,\dots,K^n\}$, such that each $K^i$ contains $v_i$, and vertex $v_j$ belongs to $K^i$ if and only if $v_i$ belongs to $K^j$. Unfortunately, this characterization does not yield a polynomial-time algorithm for deciding whether~$G$ has a square root. 
This problem is called the {\sc Square Root} problem.
In 1994, Motwani and Sudan~\cite{MotwaniS94} proved that {\sc Square Root} is \NP-complete.

Motivated by its computational hardness, special cases of {\sc Square Root} have been studied where the input graph $G$ belongs to a particular graph class. 
It is known that  {\sc Square Root} is  polynomial-time 
solvable on planar graphs~\cite{LinS95}, and more generally, on every non-trivial minor-closed graph class~\cite{NT14}. Polynomial-time algorithms also exist if the input graph $G$ belongs to one of the following graph classes:
block graphs~\cite{LeT10}, line graphs~\cite{MOS14}, trivially perfect graphs~\cite{MilanicS13}, threshold graphs~\cite{MilanicS13},
graphs of maximum degree~6~\cite{CochefertCGKP13}, 
graphs of maximum average degree smaller 
than $\frac{46}{11}$~\cite{GKPS16b}\footnote{The average degree of a graph $G$ is defined as $\ad(G)=\frac{1}{|V_G|}\sum_{v\in V_G}d_G(v)=\frac{2|E_G|}{|V_G|}$. The maximum average degree of $G$ is then defined as $\max\{\ad(H)\; |\; H\text{ is a subgraph of }G\}$.}
graphs with clique number at most~3~\cite{GKPS16b}, and 
graphs with bounded clique number and no long induced path~\cite{GKPS16b}.
On the negative side, {\sc Square Root} is \NP-complete on chordal graphs~\cite{LauC04}. 
There also exist a number of parameterized complexity results for the problem~\cite{CCGKP,GKPS16}.

The intractability of {\sc Square Root} has also been attacked by restricting properties of the square root.
In this case, the input graph $G$ is an arbitrary graph, and the question is whether $G$ has a square root that belongs to some graph class~${\cal H}$ specified in advance. 
This problem is called  
{\sc $\mathcal{H}$-Square Root}, and this is the problem which we focus on in this paper. 

Significant advances have also been made on the complexity of {\sc $\mathcal{H}$-Square Root}.
Previous results show that {\sc ${\cal H}$-Square Root} 
is polynomial-time solvable for the following graph classes ${\cal H}$: 
trees~\cite{LinS95}, 
proper interval graphs~\cite{LauC04}, 
bipartite graphs~\cite{Lau06}, 
block graphs~\cite{LeT10}, 
strongly chordal split graphs~\cite{LeT11}, 
ptolemaic graphs~\cite{LOS15},  
3-sun-free split graphs~\cite{LOS15}, 
cactus graphs~\cite{GKPS16b}, cactus block graphs~\cite{Du17} and 
graphs with girth at least~$g$ for any fixed $g\geq 6$~\cite{FarzadLLT12}. The result for 3-sun-free split graphs was extended to a number of other subclasses of split graphs in~\cite{LOS}.  We observe that 
 if  $\mathcal{H}$-{\sc Square Root} is polynomial-time solvable for some class~${\cal H}$, then this does not automatically imply 
 that  $\mathcal{H'}$-{\sc Square Root} is polynomial-time solvable for a subclass
$\mathcal{H}'$ of~$\mathcal{H}$. 

On the negative side, {\sc ${\cal H}$-Square Root} remains \NP-complete for each of the following classes ${\cal H}$: 
graphs of girth at least~5~\cite{FarzadK12},
graphs of girth at least~4~\cite{FarzadLLT12},
split graphs~\cite{LauC04}, and 
chordal graphs~\cite{LauC04}.
All known \NP-hardness constructions involve dense graphs \cite{FarzadK12,FarzadLLT12,LauC04,MotwaniS94}, and the square roots that occur in these constructions are dense as well. This, in combination with the aforementioned polynomial-time results, leads to  our underlying research question:

\medskip
\noindent
{\it Is  {\sc ${\cal H}$-Square Root} polynomial-time solvable for 
every
sparse graph class $\mathcal{H}$?}

\medskip
\noindent
\subsection*{Our Results}
We give further evidence for the above question by proving that $\mathcal{H}$-{\sc Square Root} is polynomial-time solvable for two classes~${\cal H}$, namely when $\mathcal{H}$ is the class of outerplanar graphs, and when $\mathcal{H}$ is the class of graphs of pathwidth at most~$2$. 
Both classes are well studied. In particular, Syslo~\cite{Syslo79} 
characterized outerplanar graphs by a list of two forbidden minors, and Kinnersley and Langston~\cite{KinnersleyL94} gave a characterization of graphs of pathwidth at most~2 by a list of 110 forbidden minors (see~\cite{BH01,BHLY12} for an alternative approach). Outerplanar graphs have treewidth at most~2~\cite{Bo86}. However, they can have arbitrarily large pathwidth (as every tree is outerplanar and trees can have arbitrarily large pathwidth). Moreover, there exist graphs of pathwidth at most~2 that are not outerplanar; take, for instance, the complete bipartite graph $K_{2,t}$ on $t+2$ vertices for any $t\geq 3$.

The proofs of our results rely on structural properties that are specific for outerplanar graphs or graphs of pathwidth at most~2, respectively. However, despite the fact that the two classes are incomparable, the approach to obtain polynomial-time algorithms for each of them is
based  on the same general framework.
The basic idea is to use appropriate polynomial-time reduction rules, in which we try to recognize edges of the input graph~$G$ that belong to any square root or to no square root of $G$ at all.  
The goal is to obtain a graph whose treewidth is bounded by a constant, which enables us to solve the problem in polynomial time after expressing it in Monadic Second-Order Logic and applying a classical result of Courcelle~\cite{Courcelle92}. 
This idea has been used before (see, for instance,~\cite{CochefertCGKP13,CCGKP,GKPS16b,GKPS16}), but in this paper we {\it formalize} the idea into a general framework. We discuss this framework in detail in Section~\ref{sec:idea}.

Sections~\ref{sec:outerplanar-root} and~\ref{sec:pw} are dedicated to outerplanar graphs and graphs of pathwidth at most~2, respectively. In each of these two sections, we first prove the necessary structural properties of the graph class followed by a description of the algorithm, proof of correctness and running time analysis. Afterwards we prove that our general framework
enables us to solve $\mathcal{H}$-{\sc Square Root} in polynomial time for every subclass~${\cal H}$ of outerplanar graphs or graphs of pathwidth at most~2, respectively, that satisfies the following two conditions:
\begin{itemize}
\item [(i)] ${\cal H}$ is closed under taking a subgraph, and
\item [(ii)] ${\cal H}$ can be defined in Counting Monadic Second-Order Logic.
\end{itemize}
To give a few examples, our results imply the aforementioned results for the cases where
${\cal H}$ is the class of forests~\cite{LinS95} or cactus graphs (graphs in which every edge belongs to at most one cycle)~\cite{GKPS16b}, which both are subclasses of outerplanar graphs that satisfy conditions~(i) an~(ii).
To give another example, a connected graph has pathwidth~1 if and only if it is a caterpillar (a tree which can be modified in a path after removing all vertices of degree~1). 
The problem of deciding if a graph has a square root that is a caterpillar can be solved in polynomial time via a straightforward adaptation of the algorithm of~\cite{LinS95,RossH60} for trees. As the class of unions of caterpillars satisfy conditions~(i) and~(ii), this also follows from our results. Moreover, graphs of bandwidth at most~2, or equivalently, graph of proper pathwidth at most~2~\cite{KS96} have pathwidth at most~2 and satisfy conditions~(i) and~(ii). Hence we can also recognize squares of such graphs in polynomial time due to our results.

\section{Preliminaries}\label{sec:defs}
We consider only finite undirected graphs without loops and multiple edges. 
We refer to the textbook by Diestel~\cite{Diestel10} for any undefined graph terminology.
In the remainder we let $G$ be a graph.  

We denote the vertex set of $G$ by $V_G$ and the edge set by $E_G$. 
We use $n$ to denote the number of vertices of a graph (if this does not create confusion).
The subgraph of $G$
induced by a subset $U\subseteq V_G$ is denoted by $G[U]$. 
The graph $G-U$ is the graph obtained from $G$ after removing the vertices of $U$. If $U=\{u\}$, we also write $G-u$. 
Similarly, we denote the graph obtained from $G$ by deleting a set of edges $S$, or a single edge $e$, by $G-S$ and $G-e$, respectively.

The \emph{distance} $\dist_G(u,v)$ between a pair of vertices $u,v\in V_G$ is the number of edges of a shortest path between them in~$G$. We write $\dist_G(v,U)=\min\{\dist_G(u,v)\mid u\in U\}$ for a set of vertices $U\subseteq V_G$. 
For a positive integer $r$ and $u\in V_G$, we write $N_G^r(u)=\{v\in V_G\mid \dist_G(u,v)=r\}$. For $r=1$, we write $N_G(u)$ instead of $N_G^1(u)$ and say that   
$N_G(u)$ is the \emph{open neighborhood} of $u$.
The \emph{closed neighbourhood} of a vertex $u\in V_G$ is defined as $N_G[u] = N_G(u) \cup \{u\}$. 
For $S\subseteq V_G$, we let $N_G(S)=(\bigcup_{v\in S}N_G(v))\setminus S$.
Two distinct vertices $u,v$ are said to be \emph{true twins} if $N_G[u]=N_G[v]$ and $u,v$ are \emph{false twins} if $N_G(u)=N_G(v)$.
A vertex $v$ is \emph{simplicial} if $N_G[v]$ is a {\it clique}, that is, if there is an edge between any two vertices of $N_G[v]$.
The \emph{degree} of a vertex
$u\in V_G$ is defined as $d_G(u)=|N_G(u)|$.
The maximum degree of $G$ is $\Delta(G)=\max\{d_G(v)\; |\; v\in V_G\}$.
A vertex of degree~1 is said to be a \emph{pendant} vertex of~$G$. 

Let $K_r$ denote the complete graph on $r$ vertices and $K_{r,s}$ the complete bipartite graph with partition classes of size $r$ and $s$, respectively.

A {\it (connected) component} of $G$ is a maximal connected subgraph.
A vertex $u$ is a \emph{cut vertex} of a graph $G$ 
if $G-u$ has more connected components than $G$.  A connected graph without cut vertices is said to be \emph{biconnected}.
An inclusion-maximal 
induced biconnected 
subgraph of $G$ is called a \emph{block} of $G$.  

The \emph{contraction} of an edge $uv$ of a graph $G$ is the operation that deletes the vertices $u$ and $v$ and replaces them by a vertex $w$ adjacent to every vertex of $(N_G(u)\cup N_G(v))\setminus\{u,v\}$.
A graph $G'$ is a contraction of a graph $G$ if $G'$ can be obtained from $G$ by edge contractions. A graph $G'$ is a \emph{minor} of $G$ if $G'$ can be obtained from $G$ by vertex deletions, edge deletions and edge contractions.

The syntax of \emph{Monadic Second-Order Logic} (MSO) on graphs includes 
\begin{itemize}
\item logical connectivities $\vee$, $\wedge$ and $\neg$,
\item variables for vertices, edges, sets of vertices and sets of edges,  
\item the quantifiers $\exists$ and $\forall$ that apply to variables,
\item the predicates $=$, $\in$, $\mathbf{adj}$ and $\mathbf{inc}$ for equality, inclusion of an element in a set, adjacency of vertices and incidence of a vertex with an edge, respectively.
\end{itemize}
\emph{Counting Monadic Second-Order Logic} (CMSO) is the extension of MSO with the predicate $\mathbf{card}_{q,p}(S)$ defined 
on sets for some integer constants~$p$ and~$q$ with $0\leq q<p$ and $p\geq2$, 
such that  $\mathbf{card}_{q,p}(S)={\sf true}$ if and only if $|S|\mod p=q$. 
For a CMSO formula $\varphi$ on graphs, we write $G\models \varphi$ to denote that $\varphi$ evaluates ${\sf true}$ on $G$.
We refer to the book of Courcelle and Engelfriet~\cite{CourcelleE12} for an introduction to MSO and CMSO.

We will use the following well-known fact  (see, for example,~\cite{CourcelleE12}).

\begin{lemma}\label{l-engel}
The property that a graph $G$ contains a fixed graph $F$ as a minor can be expressed in  MSO. 
\end{lemma}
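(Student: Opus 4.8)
The plan is to use the standard characterization of the minor relation via branch sets. Recall that $G$ contains $F$ as a minor if and only if there is a collection of pairwise disjoint, nonempty, connected vertex sets (branch sets) $V_1,\ldots,V_k$ of $G$, one for each vertex of $F$, such that whenever two vertices of $F$ are adjacent, there is an edge of $G$ joining the corresponding two branch sets. Since $F$ is fixed, the number $k=|V_F|$ of branch sets is a constant, so I can introduce them by a fixed number of existential set quantifiers $\exists V_1\cdots\exists V_k$ and then assert the required properties by a formula of bounded (constant) size.

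First I would write down the easy properties. Nonemptiness of $V_i$ is $\exists x\,(x\in V_i)$. Pairwise disjointness of $V_i$ and $V_j$ is $\neg\exists x\,(x\in V_i\wedge x\in V_j)$. For every edge $ij$ of $F$, the existence of an edge of $G$ between the corresponding branch sets is $\exists a\,\exists b\,(a\in V_i\wedge b\in V_j\wedge\mathbf{adj}(a,b))$. Each of these is an MSO formula, and since $F$ is fixed there are only finitely many of them to conjoin.

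The one nontrivial ingredient is expressing that a vertex set $S$ induces a connected subgraph. The key observation is that $S$ is connected exactly when it admits no partition into two nonempty parts with no edge between them. This is captured by the MSO formula
\[
\mathrm{Conn}(S)\;\equiv\;\forall A\,\Bigl[\bigl(A\subseteq S\wedge\exists x\,(x\in A)\wedge\exists y\,(y\in S\wedge y\notin A)\bigr)\;\rightarrow\;\exists u\,\exists v\,(u\in A\wedge v\in S\wedge v\notin A\wedge\mathbf{adj}(u,v))\Bigr],
\]
where $A\subseteq S$ abbreviates $\forall z\,(z\in A\rightarrow z\in S)$. The crucial point is that connectivity, although a priori a reachability property, is expressed here purely by a single \emph{universal set} quantifier over $A$ together with first-order adjacency tests, which is exactly what MSO permits.

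Finally I would assemble everything into the single sentence
\[
\exists V_1\cdots\exists V_k\;\Bigl[\bigwedge_{i=1}^{k}\bigl(\exists x\,(x\in V_i)\wedge\mathrm{Conn}(V_i)\bigr)\wedge\bigwedge_{1\le i<j\le k}\neg\exists x\,(x\in V_i\wedge x\in V_j)\wedge\bigwedge_{ij\in E_F}\exists a\,\exists b\,(a\in V_i\wedge b\in V_j\wedge\mathbf{adj}(a,b))\Bigr],
\]
and verify that $G$ satisfies it if and only if $G$ has $F$ as a minor, which is just the branch-set characterization restated. The main thing to be careful about is the correctness of the connectivity subformula and the equivalence between ``$F$ is a minor of $G$'' and the branch-set formulation; neither presents a real obstacle, and the whole argument is routine once the connectivity predicate is in hand.
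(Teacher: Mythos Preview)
Your proof is correct and is the standard argument via branch sets; the connectivity subformula and the assembly of the sentence are both fine. The paper, however, does not prove this lemma at all: it simply states it as a well-known fact and refers the reader to the Courcelle--Engelfriet monograph~\cite{CourcelleE12}, so there is no ``paper's own proof'' to compare against. What you have written is exactly the routine justification one would supply if asked to unpack that citation.
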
  

\subsection{Square Roots}\label{s-square}

For a positive integer $k$, the \emph{$k$-th power} of a graph $H$ is the graph $G=H^k$ with  
vertex set $V_G=V_H$,
such that every pair of distinct vertices $u$ and $v$ of $G$ are adjacent if and only if $\dist_H(u,v)\leq k$. 
If $k=2$, then $H^2$ is called a \emph{square} of~$H$, and $H$ is called a \emph{square root} of $G$ if $G=H^2$.

We say that a square root $H$ of a graph $G$ is \emph{minimal} if no proper subgraph of $H$ is a square root of $G$. 
We need two basic lemmas on minimal square roots. The first lemma follows immediately from the definition. We give a short proof for
the second lemma.

\begin{lemma}\label{obs:min}
Let $\mathcal{H}$ be a graph class closed under 
taking edge deletions and vertex deletions.
If  a graph~$G$ has a square root in $\mathcal{H}$, then $G$ has a minimal square root in $\mathcal{H}$.
\end{lemma}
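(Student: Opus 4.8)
The plan is to extract a minimal square root directly from the given one by minimising the number of edges among the square roots that sit inside it. The key observation that makes this work is that, by the definition of the square, every square root of $G$ has the same vertex set $V_G$ as $G$ itself. Consequently, any two square roots of $G$ differ only in their edge sets, and a proper subgraph of a square root is precisely a square root from which at least one edge has been removed (the vertex set being fixed).

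First I would take a square root $H\in\mathcal{H}$ of $G$, which exists by hypothesis, and consider the family $\mathcal{F}$ of all subgraphs of $H$ that are square roots of $G$. This family is nonempty, since $H\in\mathcal{F}$, and it is finite because $H$ has finitely many subgraphs. Next I would choose $H'\in\mathcal{F}$ with the smallest possible number of edges; such an $H'$ exists by finiteness of $\mathcal{F}$. Because $H'$ is obtained from $H$ by deleting edges (and no vertices, as all members of $\mathcal{F}$ share the vertex set $V_G$) and $\mathcal{H}$ is closed under taking edge deletions and vertex deletions, we have $H'\in\mathcal{H}$.

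It then remains to verify that $H'$ is minimal. Suppose, for contradiction, that some proper subgraph $H''$ of $H'$ is a square root of $G$. Since $H''$ and $H'$ both have vertex set $V_G$, the graph $H''$ has strictly fewer edges than $H'$. Moreover $H''$ is a subgraph of $H'$, hence of $H$, so $H''\in\mathcal{F}$. This contradicts the choice of $H'$ as a member of $\mathcal{F}$ with the fewest edges. Therefore no proper subgraph of $H'$ is a square root of $G$, so $H'$ is a minimal square root of $G$ lying in $\mathcal{H}$, as required.

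There is no real obstacle here: the only point that needs a moment's care is the remark that all square roots of $G$ share the vertex set $V_G$, which is what guarantees that ``minimal'' can be witnessed purely by minimising edges and that edge deletion (rather than arbitrary subgraph formation) suffices to reach the minimal root while staying in $\mathcal{H}$.
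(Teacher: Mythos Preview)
Your proof is correct and is exactly the immediate argument the paper has in mind; the paper in fact does not write out a proof at all, merely noting that the lemma ``follows immediately from the definition.'' Your explicit remark that every square root of $G$ has vertex set $V_G$, so that minimality reduces to edge-minimality, is precisely the point that makes the statement obvious.
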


\begin{lemma}\label{obs:triangle}
Let $H$ be a minimal square root of a graph $G$ that contains three vertices $u,v,w$ that are pairwise adjacent in $H$. Then $v$ or $w$ has a neighbour $x\neq u$ in $H$ such that $x$ is not adjacent to $u$ in $H$ 
and $x$ is adjacent to exactly one of $v,w$ in $H$.
\end{lemma}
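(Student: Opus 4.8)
\emph{Proof proposal.} The plan is to exploit minimality of $H$ through the following covering reformulation. For any subgraph $H'$ of $H$, deleting edges only increases distances, so $(H')^2\subseteq H^2=G$ as edge sets; hence $H'$ is a square root of $G$ if and only if every edge $ab$ of $G$ is \emph{covered}, meaning $\dist_{H'}(a,b)\le 2$. Consequently, minimality forces every edge of $H$ to be essential: for each edge $e$ of $H$ there is a pair $a,b$ with $\dist_H(a,b)\le 2$ but $\dist_{H-e}(a,b)>2$. I would apply this to the edge $e=vw$, which exists because $u,v,w$ are pairwise adjacent. This yields a witnessing pair $a,b$, and the rest of the proof extracts the desired vertex $x$ from it.

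First I would pin down the witnessing pair. Since $v$ and $w$ have the common neighbour $u$, the path $v-u-w$ survives in $H-vw$, so $\dist_{H-vw}(v,w)\le 2$ and therefore $\{a,b\}\neq\{v,w\}$. If $ab$ were an edge of $H$ it would then be distinct from $vw$ and survive in $H-vw$, keeping the distance at $1$; so $ab\notin E_H$ and $\dist_H(a,b)=2$. Thus $a$ and $b$ have a common neighbour in $H$ but none in $H-vw$. Because deleting $vw$ changes adjacency only between $v$ and $w$, every common neighbour $c$ of $a,b$ in $H$ must satisfy $\{a,c\}=\{v,w\}$ or $\{c,b\}=\{v,w\}$.

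From here the argument is a short case analysis. If neither $a$ nor $b$ lies in $\{v,w\}$, then no common neighbour can meet the condition above, contradicting $\dist_H(a,b)=2$; hence, up to swapping the roles of $a,b$ and of $v,w$, I may assume $a=v$, and then $b\notin\{v,w\}$. The condition now forces the unique common neighbour of $v$ and $b$ to be $w$. I would set $x=b$ and verify the required properties: $x$ is a neighbour of $w$ (so $w$ is the promised vertex of $\{v,w\}$); $x$ is not adjacent to $v$ because $\dist_H(v,b)=2$, so $x$ is adjacent to exactly one of $v,w$; and $x$ is not adjacent to $u$, since otherwise $u$ would be a second common neighbour of $v$ and $b$, impossible as $u\neq w$. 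The same inequalities give $x\neq u$.

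The main obstacle I anticipate is not any single hard step but ordering the reductions correctly: one must establish $\dist_H(a,b)=2$ before the common-neighbour reformulation becomes available, and one must track the distinctness of $u,v,w$ to guarantee that the produced $x$ is genuinely different from $u$ and non-adjacent to it. Making the two symmetry reductions (swapping $a\leftrightarrow b$ and $v\leftrightarrow w$) explicit is what keeps the case analysis from proliferating.
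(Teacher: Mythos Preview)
Your proof is correct and follows essentially the same approach as the paper: both arguments apply minimality to the edge $vw$, obtain a witnessing pair whose unique length-$2$ path in $H$ uses $vw$, and read off the vertex $x$ from that pair. Your version spells out the case analysis and the uniqueness of the common neighbour more carefully than the paper's terse proof, but the underlying idea is identical.
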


\begin{proof}
As $H$ is a minimal square root of $G$, $H-vw$ is not a square root of $G$. 
Hence, there is an edge $xy\in E_G\setminus E_H$, such that $H$ has a unique $(x,y)$-path~$P$ of length~$2$ and $wv$ is an edge of this path. Therefore, exactly one of $v,w$ is adjacent to $x$ in~$H$.
As $uv$ and $uw$ are both edges in $E_H$, this means that $P=xvw$ or $P=xwv$ for some $x\neq u$. Note that $x$ is not adjacent to $u$, because, otherwise, either $P'=xuw$ or $P'=xuv$ would be the second $(x,y)$-path of length~2.
\end{proof}

We also need a lemma that is implicit in~\cite{GKPS16b}. This lemma enables us to identify some edges that are not included in any square root.

\begin{lemma}\label{lem:not-incl}
Let $x,y$ be two neighbours of a vertex $u$ in a graph $G$ that are of distance at least~$3$ in $G-u$. Then $ux,uy\notin E_H$ for any square root $H$ of $G$.
\end{lemma}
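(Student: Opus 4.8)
The plan is to prove the contrapositive by showing that if $ux \in E_H$ (the case $uy \in E_H$ being symmetric) for some square root $H$ of $G$, then $x$ and $y$ must be at distance at most~$2$ in $G - u$, contradicting the hypothesis. The key observation is that since $H$ is a square root of $G$ and $y \in N_G(u)$, the vertices $u$ and $y$ are at distance at most~$2$ in $H$; I would analyze these two cases separately.

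First I would consider the case where $uy \in E_H$. Then if also $ux \in E_H$, the path $x\,u\,y$ is a path of length~$2$ in $H$, so $\dist_H(x,y) \leq 2$. But this path uses $u$ as its internal vertex, so I cannot immediately conclude anything about $G-u$. Instead, the useful direction is: I want to find a short path between $x$ and $y$ that avoids $u$. I would therefore focus on producing, in $H-u$, a walk of length at most~$2$ between $x$ and $y$. Since $ux \in E_H$, every neighbour of $x$ in $H$ (other than $u$) is within distance~$2$ of $u$ in $H$, hence adjacent to $u$ in $G$; the goal is to certify that some vertex witnesses $\dist_{G-u}(x,y) \leq 2$.

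The cleaner route is to argue directly in $G-u$. Assume $ux, uy \in E_H$ toward a contradiction (if $uy \notin E_H$ then since $uy \in E_G$ there is a common neighbour $z$ of $u,y$ in $H$; I treat this by the same device). Because $xy$ or some connecting structure is forced: the crucial point is that $ux \in E_H$ forces $x$ to be \emph{simplicial-like} with respect to $N_H(u)$ in the square, so the $H$-neighbours of $u$ distinct from $x$ form short connections. Concretely, I would show that $x$ and $y$ have a common neighbour in $G - u$, or are themselves adjacent in $G-u$, yielding $\dist_{G-u}(x,y) \leq 2$. The main obstacle is handling the subcase where the $H$-path realizing $\dist_H(u,y)=2$ passes through a vertex that might coincide with $x$ or be separated from $x$ only via $u$; I expect to resolve this by a careful case split on whether the midpoint of the $u$--$y$ path in $H$ equals $x$, is adjacent to $x$ in $H$, or shares a common $H$-neighbour with $x$, in each case exhibiting an explicit $(x,y)$-walk of length at most~$2$ in $H - u$ and hence in $G - u$.

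The hard part will be ruling out that every short $x$--$y$ connection is forced through $u$; the resolution is that $ux \in E_H$ makes $x$ adjacent in $G$ to all of $N_H(u) \cup N_H^2(u)$, which is exactly the $G$-neighbourhood of $u$, so any midpoint of a $u$--$y$ path in $H$ is a $G$-neighbour of both $u$ and $x$, giving the length-$2$ connection off of $u$. Assembling these cases gives $\dist_{G-u}(x,y) \leq 2 < 3$, the desired contradiction.
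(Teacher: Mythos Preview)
Your overall approach matches the paper's: assume $ux\in E_H$, split on whether $uy\in E_H$, and in the second case use the midpoint $z$ of an $H$-path $u\,z\,y$ to build a short $(x,y)$-path in $G-u$. Two things need fixing.

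First, you overcomplicate the case $ux,uy\in E_H$. You observe that the $H$-path $x\,u\,y$ passes through $u$ and say you ``cannot immediately conclude anything about $G-u$''. But you can: $\dist_H(x,y)\le 2$ gives $xy\in E_G$, and since $x,y\neq u$ this edge is in $G-u$, so $\dist_{G-u}(x,y)=1$. That finishes the case in one line; no further structure is needed.

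Second, the claim that ``$ux\in E_H$ makes $x$ adjacent in $G$ to all of $N_H(u)\cup N_H^2(u)$'' is false. If $b\in N_H^2(u)$ via a path $u\,a\,b$ with $a\neq x$ and $xa,xb\notin E_H$, then $\dist_H(x,b)=3$ and $xb\notin E_G$. What is true, and what you actually use in your last sentence, is the weaker statement that $x$ is $G$-adjacent to every vertex of $N_H(u)\setminus\{x\}$ (via the length-$2$ $H$-path through $u$). Since the midpoint $z$ lies in $N_H(u)$, this suffices: if $z=x$ then $xy\in E_H\subseteq E_G$; if $z\neq x$ then $xz\in E_G$ and $zy\in E_H\subseteq E_G$, so $x\,z\,y$ is a path in $G-u$. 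This is exactly the paper's argument, and your anticipated three-way case split on the midpoint is unnecessary.
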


\begin{proof}
Suppose  $H$ is a square root of $G$. For contradiction, assume $ux\in E_H$. If $uy\in E_H$, then $xy\in E_G$ contradicting the assumption that $\dist_{G-u}(x,y)\geq 3$.  Hence $uy\notin E_H$. As $uy\in E_G$, there exists a vertex $z$ such that $uz,zy\in E_H$. If $z=x$, then $xy\in E_G$; a contradiction. If $z\neq x$, then $ux\in E_H$ and $uz\in E_H$ imply that
 $xz\in E_G$. Hence $xzy$ is a path in $G-u$ of length~2, and again we obtain a contradiction with our assumption that $\dist_{G-u}(x,y)\geq 3$. We conclude that $ux\notin E_H$ and for the same reason we obtain $uy\notin E_H$.
\end{proof}

\subsection{Treewidth and Pathwidth}\label{s-treepath}

A \emph{tree decomposition} of a graph $G$ is a pair $(T,X)$ where $T$
is a tree, 
whose vertices are called {\it nodes}, 
and $X=\{X_{i} \mid i\in V_T\}$ is a collection of subsets, called {\em bags},
of $V_G$ such that the following three conditions hold: 
\begin{itemize}
\item[i)] $\bigcup_{i \in V_T} X_{i} = V_G$;
\item[ii)] for all $xy \in E_G$, $x,y\in X_i$ for some  $i\in V_T$; and 
\item[iii)] for all $x\in V_G$, $\{ i\in V_T \mid x \in X_{i} \}$ induces a connected subtree of $T$.
\end{itemize}
The \emph{width} of a tree decomposition $(\{ X_{i} \mid i \in V_T \},T)$ is $\max_{i \in V_T}\,\{|X_{i}| - 1\}$. The \emph{treewidth} $\tw(G)$ of a graph $G$ is the minimum width over all tree decompositions of $G$.
If $T$ is a path, then we say that $(X,T)$ is a \emph{path decomposition} of $G$.
The \emph{pathwidth} $\pw(G)$ of $G$ is the minimum width over all path decompositions of $G$.
Notice that a path decomposition of $G$ can be seen as a sequence $(X_1,\ldots,X_r)$ of bags. 
We always assume that the bags $(X_1,\ldots,X_r)$ are distinct and  inclusion
incomparable, that is, there are no bags $X_i$ and $X_j$ such that $X_i\subset X_j$.

The next lemma gives two fundamental results on treewidth and pathwidth, which are due to Bodlaender, and Bodlaender and Kloks, respectively.

\begin{lemma}[\cite{Bodlaender96,BodlaenderK96}]\label{l-bod}
For every 
constant~$c$, it is possible to decide in linear time whether the treewidth or the pathwidth of a graph is at most~$c$.
\end{lemma}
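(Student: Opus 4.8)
The plan is to prove the treewidth statement first and then indicate the (minor) modifications needed for pathwidth, since both follow the same self-reduction scheme. The starting observation is that a graph with $\tw(G)\le c$ has at most $cn$ edges, hence average degree at most $2c$; such graphs are sparse, which is what makes a linear-time recursion possible. The overarching idea is divide-and-conquer on the number of vertices: I would reduce $G$ to a graph $G'$ that is a \emph{minor} of $G$ with at most $(1-\epsilon)n$ vertices for some $\epsilon=\epsilon(c)>0$, decide the problem recursively on $G'$, and then lift a tree decomposition of $G'$ back to one of $G$. Since each recursive call shrinks the vertex count by a constant fraction and the non-recursive work at each level is $O(f(c)\cdot n)$, the recurrence $T(n)=T((1-\epsilon)n)+O(f(c)\cdot n)$ solves to $O(f(c)\cdot n)$, i.e.\ linear time for every fixed~$c$.

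The combinatorial heart is a \emph{reduction lemma}: there is a constant $\epsilon=\epsilon(c)>0$ such that every graph $G$ with $\tw(G)\le c$ and at least a constant number of vertices contains \emph{either} a set of at least $\epsilon n$ simplicial vertices of degree at most $c$, \emph{or} a matching with at least $\epsilon n$ edges all of whose endpoints have bounded degree. In the first case I would delete these vertices: removing a simplicial vertex of degree at most~$c$ does not change whether $\tw\le c$, and if some simplicial vertex has degree larger than~$c$ the answer is immediately negative. In the second case I would contract the matching to form $G'$; as contraction is minor-monotone we have $\tw(G')\le\tw(G)$, so a negative answer for $G'$ propagates to~$G$. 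Proving that one of the two alternatives always yields an $\Omega(n)$-size reducible structure is where the delicate counting over an (unknown) optimal tree decomposition must be carried out.

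From a tree decomposition of $G'$ of width at most~$c$, I would reconstruct an \emph{approximate} tree decomposition of~$G$: each deleted simplicial vertex is reinserted in a fresh bag alongside its clique neighbourhood, which keeps the width at most~$c$; un-contracting a matching edge replaces one vertex by two in each bag that contains it, and since the matching edges are disjoint this at most doubles each bag, so the lifted decomposition of~$G$ has width $O(c)$. The final ingredient is the exact dynamic-programming routine of Bodlaender and Kloks: given \emph{any} tree decomposition of $G$ of width $\ell=O(c)$, it decides in time $g(c,\ell)\cdot n$ whether $\tw(G)\le c$ and, if so, outputs an optimal decomposition. Its state is the set of \emph{characteristics} of partial tree decompositions restricted to a bag, and the crux is to prove that the number of such characteristics is bounded by a function of $c$ and $\ell$ alone, so that each node is processed in constant time.

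I expect the main obstacles to be precisely these last two ingredients: (i) the reduction lemma, where one must simultaneously guarantee a constant-fraction reduction and that the lift inflates the width only to $O(c)$; and (ii) the Bodlaender--Kloks dynamic program, whose correctness rests on a careful definition of decomposition characteristics together with the bound on their number. For \emph{pathwidth} the same scheme applies almost verbatim: pathwidth is again minor-monotone, so contractions cannot increase it and deletions of simplicial vertices can be reversed within a path decomposition; one simply replaces the tree-decomposition characteristics in the dynamic program by the corresponding path-decomposition characteristics, whose number is likewise bounded by some $g'(c,\ell)$, yielding the linear-time bound for deciding $\pw(G)\le c$.
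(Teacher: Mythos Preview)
The paper does not prove this lemma at all: it is stated with citations to \cite{Bodlaender96,BodlaenderK96} and used as a black box. There is therefore no ``paper's own proof'' to compare against; in the context of this paper a single sentence citing those references is the intended and sufficient justification.

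Your sketch goes well beyond what the paper does and is, in broad strokes, a faithful outline of Bodlaender's actual linear-time algorithm together with the Bodlaender--Kloks dynamic program. A couple of technical points if you ever want to turn the sketch into a proof: the dichotomy in the reduction step is not quite ``many simplicial vertices \emph{or} a large bounded-degree matching'' as you phrase it. In Bodlaender's argument one first takes a \emph{maximal} matching $M$; if $|M|$ is large one contracts it, and if $|M|$ is small then the unmatched vertices form an independent set of linear size, among which one finds many low-degree vertices with a low-degree (``friendly'') neighbour, and it is from this structure that one extracts a linear number of removable (twin or $I$-simplicial) vertices. The matching is not required to have bounded-degree endpoints, and the removable vertices are not simplicial in the usual sense. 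These refinements are exactly where you correctly anticipate the difficulty to lie, so your identification of the obstacles is accurate even if the stated dichotomy needs adjustment.
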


We will also need the following two well-known lemmas. We refer to~\cite{Diestel10} for the first lemma.
Note that this lemma also holds if $H$ is a contraction of $G$ (as this immediately implies that $H$ is a minor of $G$).
We provide a proof of the second lemma. 
This lemma is also folklore, but might not have been stated in this way. In particular, 
we formulate it for arbitrary $k\geq 1$ instead of for $k=2$ only, as we will need this observation in general form to prove our results.

\begin{lemma}\label{obs:minor}
Let $G$ and $H$ be graphs.
If $H$ is a minor of $G$, then $\tw(H)\leq \tw(G)$ and $\pw(H)\leq \pw(G)$.
\end{lemma}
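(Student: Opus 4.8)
The plan is to reduce the statement to the three elementary operations out of which every minor is built---vertex deletion, edge deletion, and edge contraction---and to show that none of them can increase treewidth or pathwidth. Since $H$ is a minor of $G$, there is a sequence of graphs $G=G_0,G_1,\ldots,G_t=H$ in which each $G_{i+1}$ is obtained from $G_i$ by one such operation; once we prove $\tw(G_{i+1})\leq\tw(G_i)$ and $\pw(G_{i+1})\leq\pw(G_i)$ for a single operation, the full claim follows by transitivity. Throughout, I would start from an optimal tree decomposition (respectively path decomposition) $(T,X)$ of $G_i$ and transform it into a valid decomposition of $G_{i+1}$ of no larger width, keeping the underlying tree $T$ (hence a path, if we began with a path) unchanged so that the path case is handled simultaneously with the tree case.

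First I would dispose of edge deletion, which is immediate: if $(T,X)$ satisfies conditions~i)--iii) for $G_i$, it still satisfies them for $G_i-e$, because deleting an edge only weakens condition~ii) and leaves i) and iii) untouched; the width is literally unchanged. For vertex deletion $G_i-v$, I would remove $v$ from every bag, i.e.\ replace each $X_j$ by $X_j\setminus\{v\}$. Conditions~i)--iii) are inherited (the subtree for any remaining vertex is unaffected), and since no bag grows, the width does not increase.

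The substantive case is edge contraction. Suppose $G_{i+1}$ is obtained by contracting $uv\in E_{G_i}$ into a new vertex $w$. I would form a new family of bags by replacing, in each $X_j$, any occurrence of $u$ or $v$ by $w$. Condition~i) and condition~ii) are then routine to check (an edge of $G_{i+1}$ incident to $w$ corresponds to an edge of $G_i$ incident to $u$ or to $v$, which was covered by some bag). Crucially, the width cannot increase: a bag that contained both $u$ and $v$ shrinks by one upon identifying them, while any other bag keeps its size. I expect the one genuinely delicate point to be condition~iii) for the new vertex $w$. The set of nodes whose (new) bag contains $w$ is exactly $\{j : u\in X_j\}\cup\{j : v\in X_j\}$, a union of two subtrees of $T$; each is connected by condition~iii) applied to $u$ and to $v$ in $G_i$. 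The argument that their union is connected rests precisely on the hypothesis that $uv\in E_{G_i}$: by condition~ii) some bag $X_j$ contains both $u$ and $v$, so the two subtrees share the node $j$ and their union is again a connected subtree. For every vertex other than $w$ its subtree is unchanged, so condition~iii) holds throughout.

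Assembling the three cases gives $\tw(G_{i+1})\leq\tw(G_i)$ and $\pw(G_{i+1})\leq\pw(G_i)$ for each elementary step, and iterating along the sequence $G_0,\ldots,G_t$ yields $\tw(H)\leq\tw(G)$ and $\pw(H)\leq\pw(G)$, as required. The only subtlety to watch is ensuring that the underlying tree $T$ is not modified in any of the constructions, so that a path decomposition of $G$ produces a path decomposition of $H$ and the pathwidth bound follows in lockstep with the treewidth bound.
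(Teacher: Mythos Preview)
Your argument is correct and is exactly the standard textbook proof: handle vertex deletion, edge deletion, and edge contraction separately on an optimal (tree or path) decomposition, noting for contraction that the two subtrees for $u$ and $v$ intersect because some bag contains the edge $uv$. The paper itself does not supply a proof of this lemma at all; it simply cites Diestel~\cite{Diestel10}, so your write-up is more detailed than what the paper provides, but fully in line with the referenced source.
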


\begin{lemma}\label{lem:tw-power} 
For a graph $G$ and an integer $k\geq 1$, the following hold:
$\tw(G^k)\leq (\tw(G)+1)\Delta(G)^{\lfloor k/2\rfloor+1}$ and 
$\pw(G^k)\leq (\pw(G)+1)\Delta(G)^{\lfloor k/2\rfloor+1}.$
\end{lemma}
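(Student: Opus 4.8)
The plan is to bound the width of a power $G^k$ by exhibiting an explicit tree (resp.\ path) decomposition built from a decomposition of $G$ itself. Start from an optimal tree decomposition $(T,X)$ of $G$ of width $\tw(G)$, so every bag has at most $\tw(G)+1$ vertices. The key idea is that to certify adjacencies in $G^k$ we do not need to see vertices at distance more than $k$ in $G$, and in fact by a ball-growing argument each bag only needs to be enlarged by the vertices lying within a bounded-distance neighbourhood. The natural construction is to replace each bag $X_i$ by $X_i^+ := \bigcup_{v\in X_i} N_G^{\leq \lfloor k/2\rfloor}[v]$, the set of vertices within distance $\lfloor k/2\rfloor$ of some vertex of $X_i$, and keep the same underlying tree~$T$ (resp.\ path, for pathwidth). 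I would then verify the three tree-decomposition axioms for $(T,X^+)$ with respect to the graph $G^k$.

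The bag-size bound is the routine part: each vertex of $X_i$ contributes at most $1 + \Delta(G) + \Delta(G)^2 + \cdots + \Delta(G)^{\lfloor k/2\rfloor}$ vertices to its ball, which is at most $\Delta(G)^{\lfloor k/2\rfloor+1}$ after a crude geometric-series estimate (this accounts for the exact exponent $\lfloor k/2\rfloor+1$ and the factor $\tw(G)+1$ in front). First I would check covering and the edge condition: any vertex $v$ lies in some original bag, hence in $X_i^+\supseteq X_i$; and for an edge $uv$ of $G^k$ we have $\dist_G(u,v)\leq k$, so on a shortest $u$--$v$ path in $G$ there is a midpoint $m$ with $\dist_G(m,u)\leq\lfloor k/2\rfloor$ and $\dist_G(m,v)\leq\lceil k/2\rceil$. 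The delicate point is that I want a single bag containing both $u$ and $v$, so I should instead take $m$ to be a vertex of the path with $\dist_G(u,m),\dist_G(v,m)\le\lfloor k/2\rfloor$ when possible and otherwise route through an original bag of~$G$ that contains~$m$; choosing the enlargement radius as $\lfloor k/2\rfloor$ on one endpoint and using that the original bag containing the midpoint edge of the path already separates, one gets $u,v\in X_i^+$ for whatever bag $X_i$ contains that midpoint.

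The step I expect to be the genuine obstacle is the connectivity axiom~(iii): I must show that for each vertex $w\in V_G$ the set $\{i : w\in X_i^+\}$ induces a connected subtree of~$T$. This does not follow formally from the connectivity of $\{i : w\in X_i\}$, because $w$ can be dragged into $X_i^+$ by any of its near-neighbours, and those near-neighbours occupy different, possibly scattered, regions of~$T$. The clean way to handle this is to argue that $w\in X_i^+$ iff some vertex of $B := N_G^{\le\lfloor k/2\rfloor}[w]$ lies in $X_i$, i.e.\ $\{i: w\in X_i^+\}=\bigcup_{u\in B}\{i : u\in X_i\}$, a union of subtrees; since $G[B]$ is connected, the subtrees $\{i:u\in X_i\}$ for $u\in B$ pairwise overlap or are joined (any edge $uu'$ inside $B$ forces a common bag, by axiom~(ii) for the original decomposition), so their union is connected by the Helly property of subtrees of a tree. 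I would make this precise and observe that the same argument, with $T$ a path, simultaneously yields the pathwidth bound, so no separate proof is needed for the second inequality.
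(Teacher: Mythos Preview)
Your proposal is correct and follows the same construction as the paper: enlarge each bag $X_i$ to the set of vertices within distance $\lfloor k/2\rfloor$ of $X_i$, keep the same tree, and verify the three axioms for $G^k$; the edge axiom is handled exactly as in the paper by picking a middle edge of a shortest $(u,v)$-path in $G$ and using the original bag containing its two endpoints. The only genuine difference is in the connectivity axiom~(iii): the paper argues by contradiction via the separator property of tree decompositions (if $w$ lies in the enlarged bags at $i$ and $j$ but not at an intermediate node $h$, then $X_h$ would fail to separate two vertices of $G$ that are both within distance $\ell$ of $w$), whereas you argue directly that $\{i:w\in X_i^+\}=\bigcup_{u\in B}\{i:u\in X_i\}$ for the connected ball $B=N_G^{\le\ell}[w]$, with adjacent vertices of $B$ forcing overlapping subtrees. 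Both arguments are valid; your invocation of the ``Helly property'' is a slight misnomer---what you actually use is the simpler fact that two subtrees of a tree sharing a node have connected union, applied inductively along the connected graph $G[B]$---but the reasoning is sound.
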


\begin{proof}
We show that $\tw(G^k)\leq (\tw(G)+1)\Delta(G)^{\lfloor k/2\rfloor+1}$. The proof of the second inequality uses the same arguments. The inequality is trivial if $k=1$ or $\Delta(G)\leq 1$. Assume that $k\geq 2$ and $\Delta(G)\geq 2$. Let also  $\ell=\lfloor k/2\rfloor$.

Let  $(T,X)$ be a tree decomposition of $G$ of minimum width.
For $i\in V_T$, we define 
$Y_i=\{y\in V_G\mid \dist_G(y,X_i)\leq \ell\}$ and $Y=\{Y_{i} \mid i\in V_T\}$.
We show that $(T,Y)$ is a tree decomposition of $G^k$ by proving that conditions (i)--(iii) of the definition of treewidth are satisfied.

\medskip
\noindent
(i).  As $X_i\subseteq Y_i \subseteq V_G$ for all $i\in V_T$, we obtain $V_G=\cup_{i\in V_T}X_i\subseteq \cup_{i\in V_T}Y_i\subseteq V_G$, so  $\cup_{i\in V_T}Y_i= V_G$.
As $V_{G^k}=V_G$, this means that $\cup_{i\in V_T}Y_i=V_{G^k}$, so
(i) holds.

\medskip
\noindent
(ii). Consider an edge $uv$ of $G^k$. By the definition of $G^k$, $G$ contains a $(u,v)$-path $P$ of length at most~$k$. 
Then $P$ has an edge $xy$ such that $\dist_G(u,x)\leq\ell$ and $\dist_G(y,v)\leq \ell$. Because $(T,X)$ is a tree decomposition of $G$, there is
a node~$i\in V_T$ such that $x,y\in X_i$. We find that $u,v\in Y_i$. Hence (ii) holds.

\medskip
\noindent
(iii). For contradiction, assume that  there is $v\in V_{G^k}$ such that the set $\{i\in V_T\mid v\in Y_i\}$ is disconnected. Then there exist two distinct nonadjacent nodes $i,j\in V_T$ such that $v\in Y_i$, 
$v\in Y_j$,
and $v\notin Y_h$ for every internal node $h$ of the unique $(i,j)$-path in $T$. Since $v\in Y_i$, there exists a vertex $x\in X_i$ such that $\dist_G(x,v)\leq\ell$. Similarly, there exists a 
vertex~$y\in X_j$
such that $\dist_G(y,v)\leq\ell$. Let $P_x$ and $P_y$ be shortest $(x,v)$-paths and $(y,v)$-paths in $G$ respectively. Then $G$ contains an $(x,y)$-path $P$ whose edges belong to $E_{P_x}\cup E_{P_y}$. 
Note that every vertex of $P$ is of distance at most~$\ell$ 
from $v$ in $G$.

Consider an arbitrary internal node $h$ of the unique $(i,j)$-path in $T$. Let $z\in X_h$. As $v\notin Y_h$, it follows that $\ell<\dist_G(v,X_h)\leq \dist_G(v,z)$. Hence, $z\notin V_P$. We conclude that $X_h\cap V_P=\emptyset$. This means that the bag $X_h$ does not separate $x$ and $y$ in $G$, 
which contradicts a basic property of a tree decomposition  (see, for example, Lemma~12.3.1~\cite{Diestel10}).

We now prove the bound on the width of $(T,Y)$. For $i\in V_T$, we find that
\[\begin{array}{lcl}
|Y_i| &\leq&|X_i|(1+\ldots+\Delta(G)^\ell)\\[5pt]
 &= &\displaystyle |X_i|\frac{\Delta(G)^{\ell+1}-1}{\Delta(G)-1}\\[13pt]
 &\leq &|X_i|\Delta(G)^{\ell+1}\\[5pt]
&\leq&(\tw(G)+1)\Delta(G)^{\ell+1}.
\end{array}\]
Hence
$\tw(G^k)\leq\max_{i\in V_T}|Y_i|-1\leq (\tw(G)+1)\Delta(G)^{\ell+1}$.
\end{proof}

We will also need the following characterization of graphs of pathwidth at most~$2$, which is due to Kinnersley and Langston (we do not specify the graphs on their list, as this is irrelevant for our purposes).

\begin{lemma}[\cite{KinnersleyL94}]\label{lem:pw2c}
A graph has pathwidth at most~$2$ if and only if does not contain a graph from a specific list of 110 graphs as a minor.
\end{lemma}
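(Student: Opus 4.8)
The plan is to recognise this as a forbidden-minor characterisation of a minor-closed graph class, and to split the argument into an easy direction backed by a general principle and a hard direction that requires explicit enumeration.

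First I would verify that the class of graphs of pathwidth at most~$2$ is closed under taking minors. This is immediate from Lemma~\ref{obs:minor}: if $H$ is a minor of $G$ then $\pw(H)\le\pw(G)$, so deleting vertices, deleting edges, or contracting edges can never increase the pathwidth. In particular, if a graph contains any fixed graph $F$ with $\pw(F)\ge 3$ as a minor, then its own pathwidth is at least~$3$. This already yields the ``only if'' direction: once we know that each of the $110$ graphs on the list has pathwidth exactly~$3$ (a finite check, one graph at a time, decidable by Lemma~\ref{l-bod}), any graph having one of them as a minor cannot have pathwidth at most~$2$.

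For the converse---no forbidden minor implies $\pw\le 2$---I would argue by the contrapositive and study the \emph{minor-minimal} graphs of pathwidth at least~$3$, i.e.\ the obstruction set. The existence of a \emph{finite} obstruction set follows from the Robertson--Seymour graph minors theorem, since the class is minor-closed; the entire content of the statement is therefore the claim that this obstruction set consists of precisely $110$ graphs. To pin it down I would first derive an a priori bound on the number of vertices and edges of any minor-minimal obstruction, exploiting the separator structure underlying pathwidth (a graph has pathwidth at most~$k$ iff it is a subgraph of an interval graph of clique number at most $k+1$, so a minimal obstruction to $\pw\le 2$ is ``just barely'' non-representable). Given such a bound, the obstruction set lies among finitely many graphs, and one finishes by an exhaustive search: enumerate all candidates up to the bound, discard those of pathwidth at most~$2$ using Lemma~\ref{l-bod}, and retain only those whose every proper minor has pathwidth at most~$2$.

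The main obstacle is establishing completeness, i.e.\ proving that the size bound on obstructions is correct and that the finite search misses nothing. Bounding the order of a minor-minimal obstruction for $\pw\le 2$ is delicate: one must rule out arbitrarily large obstructions by showing that any sufficiently large graph of pathwidth~$\ge 3$ contains a \emph{smaller} graph of pathwidth~$\ge 3$ as a proper minor, typically via a pumping or uncrossing argument on path decompositions of near-optimal width. This is exactly the hard part carried out by Kinnersley and Langston~\cite{KinnersleyL94}, and the specific value~$110$ emerges only from their explicit computation; the conceptual difficulty lies entirely in the structural bound on obstruction size rather than in the mechanics of the enumeration.
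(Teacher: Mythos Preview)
The paper does not prove this lemma at all: it is stated as a citation of Kinnersley and Langston~\cite{KinnersleyL94} and used as a black box, with no accompanying argument. So there is no ``paper's own proof'' to compare against.

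Your outline is a reasonable description of the \emph{shape} of such a result---minor-closure gives one direction via Lemma~\ref{obs:minor}, Robertson--Seymour guarantees a finite obstruction set, and the substance lies in bounding and enumerating the obstructions---but it is not a proof, and you acknowledge as much when you say the hard part ``is exactly the hard part carried out by Kinnersley and Langston.'' In effect you and the paper are doing the same thing: citing~\cite{KinnersleyL94}. For the purposes of this paper that is entirely appropriate; the lemma is only needed so that the property ``$\pw(H)\le 2$'' can be expressed in MSO via Lemma~\ref{l-engel}, and for that all one needs is the \emph{existence} of a finite forbidden-minor list, not the value~$110$ or the specific graphs on it. If you wanted to make your write-up self-contained at the level the paper actually requires, you could drop the enumeration discussion and simply invoke Robertson--Seymour for finiteness.
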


As mentioned we will also need the following classical result of Courcelle as a lemma.

\begin{lemma}[\cite{Courcelle92}]\label{l-courcelle}
For  every fixed integer~$k$ and every problem ${\cal P}$ expressible in CMSO, there  exists a
linear-time algorithm that solves ${\cal P}$ for the class of graphs of treewidth at most~$k$.
\end{lemma}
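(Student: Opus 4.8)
The plan is to prove this by the standard route: reduce CMSO model-checking on graphs of bounded treewidth to the evaluation of a finite tree automaton on a tree decomposition. The problem~$\mathcal{P}$ is specified by a fixed CMSO sentence~$\varphi$, so it suffices to decide $G\models\varphi$ in linear time whenever $\tw(G)\leq k$.

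First I would invoke Lemma~\ref{l-bod} to compute, in linear time, a tree decomposition $(T,X)$ of $G$ of width at most~$k$, and then convert it into a rooted \emph{nice} tree decomposition whose nodes are of the usual types (leaf, introduce-vertex, forget-vertex, join), still with $O(n)$ nodes and width at most~$k$. Equivalently, I would view $G$ together with this decomposition as the value of a term~$t$ over a finite signature: each node of~$T$ becomes an operation on \emph{$k$-boundaried graphs} (graphs carrying at most $k+1$ distinguished, labelled boundary vertices), and $G$ is obtained by composing these operations bottom-up. The crucial point is that, because the width is bounded by the constant~$k$, this signature is \emph{finite}.

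The heart of the argument is the correspondence between CMSO and finite tree automata. I would show that for every CMSO sentence~$\varphi$ and every fixed~$k$ there is a finite bottom-up tree automaton $\mathcal{A}_\varphi$ over the above signature that accepts~$t$ if and only if $G\models\varphi$. This is proved by induction on the structure of~$\varphi$: the atomic predicates $\mathbf{adj}$, $\mathbf{inc}$, $\in$ and $=$ give automata that need only track, within the current boundary, the finitely many interactions among the finitely many free set variables; the connectives $\vee$, $\wedge$, $\neg$ correspond to product and complementation of automata, preserving finiteness; and existential quantification over a vertex, edge, or set variable corresponds to projection, followed by the subset construction to redeterminize the resulting nondeterministic automaton. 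The additional predicate $\mathbf{card}_{q,p}(S)$ is absorbed by letting each state also record $|S|\bmod p$ for the relevant variables; since $p$ is a fixed constant and there are finitely many variables, the state space stays finite. Intuitively, the states are exactly the \emph{types} of the boundaried graph processed so far with respect to the subformulas of~$\varphi$, and boundedness of the treewidth is precisely what keeps the number of such types finite.

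Once $\mathcal{A}_\varphi$ is in hand, the algorithm evaluates it on~$t$ in a single bottom-up pass over~$T$: at each node the new state is a function only of the child states and the node's operation, so the pass runs in time linear in $|V_T|=O(n)$, with the multiplicative constant depending on~$\varphi$ and~$k$ but not on~$n$; the property holds iff the root state is accepting. The step I expect to be the main obstacle, and the source of the (possibly astronomical) hidden constants, is the inductive construction of $\mathcal{A}_\varphi$, specifically the quantifier/projection step: determinization after projection can square the state space, and across nested quantifiers this iterates, so the automaton size is non-elementary in the quantifier depth of~$\varphi$. For \emph{fixed}~$\varphi$ and fixed~$k$, however, this size is a constant, which is all that is needed for the claimed linear running time.
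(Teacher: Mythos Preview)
The paper does not prove this lemma: it is stated as a classical result of Courcelle~\cite{Courcelle92} and is used as a black box throughout, with no proof or proof sketch given. Your proposal, by contrast, supplies an actual argument, and it is the standard one: compute a bounded-width tree decomposition in linear time via Lemma~\ref{l-bod}, view the decomposed graph as a term over a finite signature of $k$-boundaried graph operations, translate the CMSO sentence into a finite bottom-up tree automaton by induction on the formula (handling $\mathbf{card}_{q,p}$ by carrying residues modulo~$p$ in the state), and evaluate the automaton in a single linear-time pass. This is correct, and your remark that the hidden constant is non-elementary in the quantifier depth is accurate and worth noting. There is nothing to compare against in the paper itself; your sketch simply fills in what the paper takes for granted.
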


\subsection{Outerplanar Graphs}
A graph $G$ is \emph{planar} if $G$ admits a {\it planar} embedding, which is an embedding on the plane in such a way that the edges of $G$ only intersect at their end-points. 
A planar graph $G$ is 
{\it outerplanar} if it admits a planar embedding in which
all its vertices
belong to the outerface. 
When considering an outerplanar graph, we always assume that such an embedding is given.  

\begin{figure}[ht]
\centering
\scalebox{0.8}{
\input{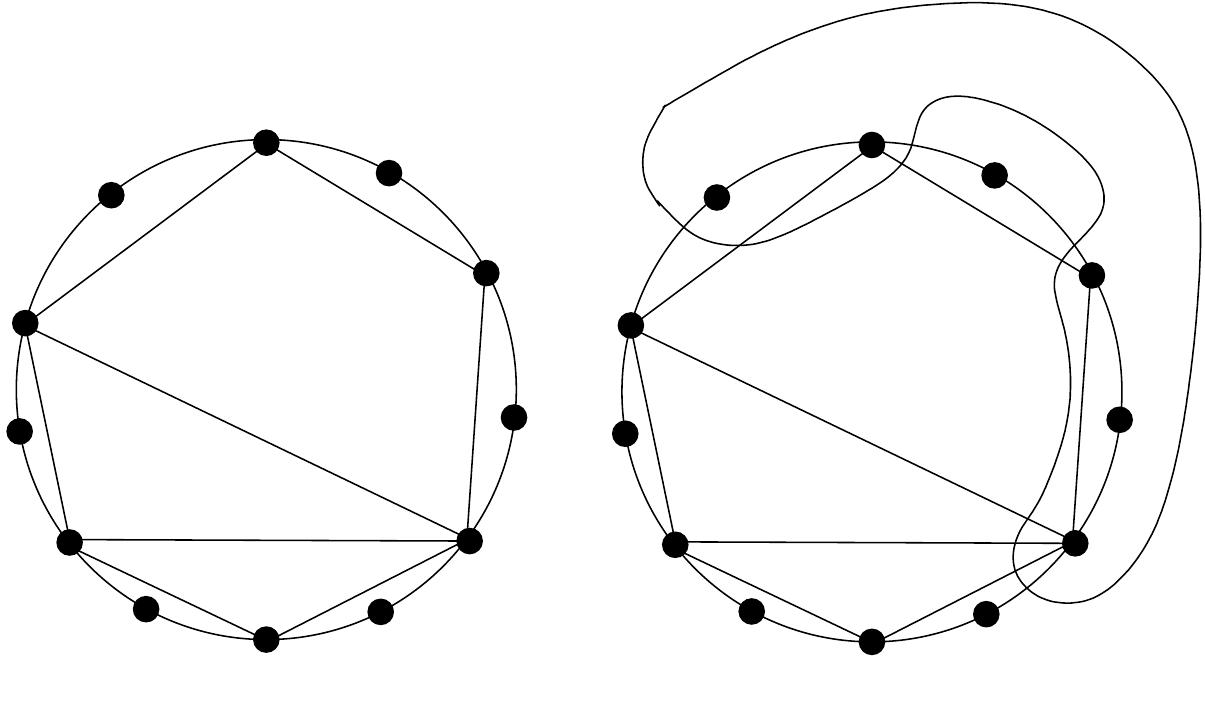_t}}
\caption{A clockwise ordering of the vertices $v_1,\ldots,v_n$ of a biconnected outerplanar graph~$G$ with respect to vertex $u=v_1$ and a clockwise ordering of a set $X=\{x_1,\ldots,x_k\}$ of $G$ with respect to $u$. 
\label{fig:ord}}
\end{figure}

\begin{figure}[ht]
\centering
\scalebox{0.8}{
\input{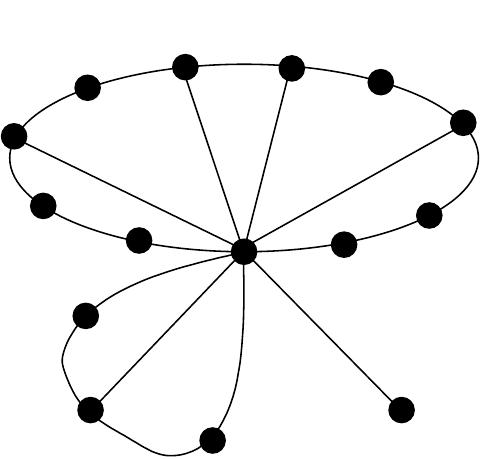_t}}
\caption{An example of an outerplanar graph with a set $X=\{x_1,x_2,x_3\}$ that is consecutive with respect to $u$; note that 
$x_1$ and $x_2$ are consecutive with respect to $u$, just as $x_2$ and $x_3$, while $x_1$ and $x_3$ are not consecutive with respect to $u$.
\label{fig:ord2}}
\end{figure}

If $G$ is a planar biconnected graph different from $K_2$, then for any of its  
embeddings,
the boundary of each face is a cycle (see, e.g., \cite{Diestel10}). If $G$ is a biconnected outerplanar graph distinct from $K_2$, then the cycle $C$ forming the boundary of the external face is unique~\cite{Syslo79}. We call $C$ 
the \emph{boundary cycle} of $G$. 
Every vertex of $G$ belongs to $C$, and every edge of $G$ is either an edge of $C$ or a \emph{chord} of $C$, that is, its endpoints are vertices of $C$ that are non-adjacent in $C$. By definition, these chords are not intersecting in the embedding.   
We define the
\emph{clockwise ordering} of $C$ with respect to some vertex $u$ of $G$ as
the clockwise ordering 
of the vertices on $C$ starting from $u$. For a subset of vertices $X$, the 
\emph{clockwise ordering} of $X$ with respect to $u$ is the restriction of the clockwise ordering of $C$ to the vertices of $X$.
See Figure~\ref{fig:ord} for 
an example of these notions.

We use the above terms for blocks of an outerplanar graph that are distinct from~$K_2$.
 We say that  two distinct vertices $x,y\in N_G(u)$ are \emph{consecutive with respect to $u$} if $x$ and $y$ are in the same block $F$ of $G$ and  there are no vertices of $N_G(u)$ between $x$ and $y$  in the clockwise ordering  of the vertices of the boundary cycle of $F$ with respect to $u$.  For a set of vertices $X\subseteq N_G(u)$, we say that the 
vertices of 
$X$ are  \emph{consecutive with respect to $u$} if the vertices of $X$ are in the same block of $G$ and any two vertices of $X$ consecutive in the clockwise ordering of 
the vertices  of $X$ with respect to $u$ are consecutive with respect to $u$. 
See Figure~\ref{fig:ord2} for an illustration of these notions.
  
Sys{\l}o characterized the class of outerplanar graphs via a set of two forbidden minors. 
\begin{lemma}[\cite{Syslo79}]\label{lem:out-minor}
A graph is outerplanar if and only if it  does not contain $K_{2,3}$ or $K_4$ as a minor.
\end{lemma}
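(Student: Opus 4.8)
The plan is to reduce outerplanarity to ordinary planarity by means of an \emph{apex} construction and then invoke Wagner's excluded-minor characterization of planar graphs (forbidden minors $K_5$ and $K_{3,3}$), which I take as known. For a graph $G$, let $\hat G$ denote the graph obtained from $G$ by adding one new vertex $w$ adjacent to every vertex of $G$. The heart of the argument is the following equivalence, which I will call the \emph{apex lemma}: $G$ is outerplanar if and only if $\hat G$ is planar. One direction is easy: given an outerplanar embedding of $G$ with all vertices on the outer face, place $w$ in the outer region and join it to every vertex without crossings. For the converse, take a planar embedding of $\hat G$; since $w$ is adjacent to all vertices of $G$, every vertex of $G$ borders at least one face incident with $w$, and deleting $w$ merges all faces incident with $w$ into a single face whose boundary contains every vertex of $G$. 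Choosing that face as the outer face yields an outerplanar embedding of $G$. I expect this embedding manipulation to be the most delicate point and the one requiring the most care.

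Next I will transfer the forbidden minors across the apex construction, claiming that $\hat G$ contains $K_5$ or $K_{3,3}$ as a minor if and only if $G$ contains $K_4$ or $K_{2,3}$ as a minor. This is routine case analysis on the position of $w$ among the branch sets. If $G$ has a $K_4$ minor, adjoining $\{w\}$ as a fifth branch set gives a $K_5$ minor of $\hat G$; if $G$ has a $K_{2,3}$ minor with sides of sizes two and three, adding $w$ to the size-two side gives a $K_{3,3}$ minor of $\hat G$. Conversely, suppose $\hat G$ has a $K_5$ (resp.\ $K_{3,3}$) minor. Since $w$ is universal it lies in at most one branch set; if $w$ lies in no branch set the whole minor already lives in $G$ (and $K_5$, $K_{3,3}$ contain $K_4$, $K_{2,3}$ respectively after deleting a vertex), while if $w$ lies in one branch set, discarding that branch set leaves a $K_4$ (resp.\ $K_{2,3}$) minor inside $G$, realized by edges and paths avoiding $w$. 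Combining the two directions gives the stated equivalence.

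With these pieces in place the theorem follows at once. For the forward implication I only need that $K_4$ and $K_{2,3}$ are themselves not outerplanar and that outerplanarity is closed under taking minors; the former is checked directly ($K_4$ has $6>2\cdot 4-3$ edges, and $K_{2,3}$ is biconnected yet has no Hamiltonian cycle, since every cycle in a bipartite graph has even length whereas a biconnected outerplanar graph is Hamiltonian via its boundary cycle), while the latter holds because vertex deletion, edge deletion and edge contraction all preserve an outerplanar embedding. For the backward implication, assume $G$ has neither $K_4$ nor $K_{2,3}$ as a minor; by the minor-transfer step $\hat G$ has neither $K_5$ nor $K_{3,3}$ as a minor, so by Wagner's theorem $\hat G$ is planar, and by the apex lemma $G$ is outerplanar. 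The main obstacle throughout is establishing the apex lemma rigorously; once it is in hand, the rest is bookkeeping with branch sets together with the citable planar excluded-minor theorem.
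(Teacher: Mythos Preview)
The paper does not give its own proof of this lemma: it is quoted as a known characterization and attributed to Sys\l{}o~\cite{Syslo79}. So there is nothing in the paper to compare against at the level of argument; the authors simply import the result.

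Your proof is the standard one via the apex construction and is correct. A few remarks on the places where you could tighten it if you want a fully self-contained write-up. In the converse of the apex lemma, the step ``deleting $w$ merges all faces incident with $w$ into a single face whose boundary contains every vertex of $G$'' is the right idea; to make it precise, note that in any planar embedding of $\hat G$ the cyclic order of the edges $wv$ around $w$ determines a cyclic sequence of faces, each bounded in part by two consecutive such edges, and removing $w$ with its incident edges glues these into one face whose boundary visits every $v\in V_G$. In the minor-transfer step, the key observation you use implicitly is that in a minor model the connectivity of each branch set and the edges witnessing adjacency between branch sets are realized by edges of the host graph, so once $w$ is confined to at most one branch set, the remaining branch sets and their witnessing edges lie entirely in $G$; this justifies discarding $w$'s branch set and reading off a $K_4$ or $K_{2,3}$ model in $G$. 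Finally, your direct checks that $K_4$ and $K_{2,3}$ are not outerplanar are fine; the edge bound $|E|\le 2n-3$ for simple outerplanar graphs with $n\ge 2$ is itself a consequence of the apex lemma together with Euler's formula, so you are not relying on anything stronger than what you have already set up.
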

\noindent
We will also need the following well-known result.

\begin{lemma}[\cite{Bo86}]\label{obs:tw-op}
Every outerplanar graph has treewidth at most~$2$.
\end{lemma}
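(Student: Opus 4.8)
The plan is to reduce the statement to \emph{maximal} outerplanar graphs and then exhibit an explicit tree decomposition of width~$2$. Since deleting vertices and edges turns a graph into a minor of itself, Lemma~\ref{obs:minor} tells us that treewidth can only decrease when passing to a subgraph; hence it suffices to bound $\tw$ on the largest outerplanar graphs. Concretely, I would first argue that every outerplanar graph $G$ is a subgraph of a maximal outerplanar graph $G^{+}$ on the same vertex set, obtained by repeatedly adding chords (and, if necessary, boundary edges) while maintaining an outerplanar embedding; the obstruction to adding further edges is precisely that every inner face becomes a triangle. For $n\le 2$ the claim is trivial, so I would assume $n\ge 3$, in which case $G^{+}$ is a triangulation of a convex polygon: its boundary is a Hamiltonian cycle $C=v_1v_2\cdots v_n$, every inner face is a triangle, and every edge is either an edge of $C$ or a non-crossing chord.

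The heart of the argument is to build a width-$2$ tree decomposition of the triangulation $G^{+}$ from its \emph{weak dual}. I would let the nodes of $T$ be the $n-2$ triangular inner faces, joining two faces by an edge of $T$ exactly when they share a chord of $C$. Because deleting any chord splits the polygon into two sub-polygons, $T$ is connected and acyclic, i.e.\ a genuine tree with $n-3$ edges. I would then set the bag $X_f$ of a face $f$ to be its three vertices, so that every bag has size~$3$. Verifying the tree-decomposition axioms: condition~(i) holds because every vertex lies on some triangle; condition~(ii) holds because each edge of $G^{+}$, being either a boundary edge or a chord, is an edge of at least one triangle.

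The one point genuinely requiring care is the connectivity condition~(iii). For a fixed vertex $v$, the triangles incident to $v$ form a ``fan'' around $v$, and any two fan triangles that are consecutive in the rotation around $v$ share a chord incident to $v$; such triangles are therefore adjacent in $T$, so the set $\{\,f : v\in X_f\,\}$ induces a connected subtree. This is the step I expect to be the main obstacle, since it is where the planar/outerplanar geometry is actually used, and where a careless dual construction would fail. Once (i)--(iii) are established, the width of the decomposition is $3-1=2$, giving $\tw(G^{+})\le 2$ and hence $\tw(G)\le \tw(G^{+})\le 2$ by Lemma~\ref{obs:minor}.

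As an alternative I would keep in reserve two other routes. The first is an induction that removes an ``ear'': a triangulated polygon with $n\ge 3$ vertices always has a degree-$2$ vertex $v$ whose neighbours $a,b$ are adjacent, and deleting $v$ yields a smaller triangulation; by the inductive hypothesis it has a width-$2$ decomposition with a bag containing $\{a,b\}$ (such a bag exists because $ab$ is an edge), to which one attaches the new bag $\{v,a,b\}$. The second route invokes Lemma~\ref{lem:out-minor}: outerplanar graphs contain no $K_4$ minor, and $K_4$-minor-free graphs are exactly the graphs of treewidth at most~$2$. The latter is the cleanest in spirit but relies on the series-parallel characterization of treewidth-$2$ graphs, which is not developed in this excerpt, so I would favour the self-contained dual-tree construction above.
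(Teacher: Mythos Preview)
The paper does not prove this lemma at all: it is stated with a citation to Bodlaender's technical report~\cite{Bo86} and used as a black box. So there is no ``paper's own proof'' to compare against.

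Your proposal is correct and self-contained. The weak-dual construction is the standard textbook route, and your verification of axioms (i)--(iii) is sound; the one compressed step is the claim that the weak dual $T$ is a tree, where your chord-separation argument literally shows only that every edge of $T$ is a bridge (hence $T$ is acyclic), and you then need the count of $n-2$ faces and $n-3$ chords to conclude connectedness. Your ear-removal induction is an equally clean alternative and arguably easier to make fully rigorous without pictures. The $K_4$-minor-free route via Lemma~\ref{lem:out-minor} is the shortest in principle, but, as you note, it requires the equivalence of $K_4$-minor-freeness and treewidth at most~$2$, which the paper does not establish; given that the paper is happy to cite the lemma outright, invoking that equivalence would be in the same spirit.
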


\section{General Algorithmic Approach} \label{sec:idea}

Our algorithms for deciding whether a graph has an outerplanar square root or a square root that has pathwidth at most~2, respectively, rely on similar ideas and concepts.

The
framework underlying these two algorithms is general and has the potential to be applicable to find other restricted square roots as well. This section is devoted to explain this framework. 

We observe that even though outerplanar graphs and graphs of pathwidth at most~2 have bounded treewidth, squares of such graphs {\it may} have arbitrarily large treewidth.
The basic idea is to reduce an input graph of our problem in polynomial time to a graph of bounded treewidth that is an instance of a closely related auxiliary problem.  After showing that this auxiliary problem can be expressed in MSO, we can then apply the well-known result of Courcelle~\cite{Courcelle92}.

Let ${\cal H}$ be the class of outerplanar graphs or graphs of pathwidth at most~2. Let $G$ be the input graph. In order to find out if $G$ has a square root  $H\in{\cal H}$, we modify $G$ using the following polynomial-time rules, 
which we apply exhaustively in the order below:

\begin{enumerate}
\item \textbf{Deleting irrelevant vertices.} 
We try to identify vertices that can be deleted from $G$, such that the resulting graph has a square root that belongs to ${\cal H}$ if and only if $G$ has a square root that belongs to ${\cal H}$. If ${\cal H}$ is the class of outerplanar graphs, then this step allows us to bound the number of 
true twins of a simplicial vertex of $G$.
If ${\cal H}$ is the class of graphs of pathwidth at most~2, then this step allows us to bound the number of true twins of a
vertex of~$G$. 

\item \textbf{Labeling edges.} 
We try to identify edges in $G$ that we can give a specific label. That is, we label an edge~$e$ of $G$ {\it red} if we can determine that $e$ belongs to every minimal
square root of $G$ and we label $e$ {\it blue} if we can determine that $e$ does not belong to any minimal square root of $G$.
We let $R$ and $B$ denote the sets of red and blue edges, respectively.

\item \textbf{Deleting irrelevant edges.} 
We determine a set $U\subseteq V_G$ such that for each $u\in U$, all edges incident to $u$ are labeled either red or blue.
We return a no-answer if there exist two red edges that form an induced  path of length~2 in $G$. Otherwise,
for each $u\in U$, we identify a set of edges in $G[N(u)]$  that we may remove from $G$. 
The crucial properties of every deleted edge $xy$ are that (i) $xy$ is not included in any minimal square root of $G$ and (ii) there are two red edges incident to $u$ 
in $G$ that form an $(x,y)$-path. 
\end{enumerate}

\noindent
After performing 
each of these three rules exhaustively in the given order,
we obtain a new graph $G'$. The crucial point here is that 
if the treewidth of $G'$ is greater than some constant~$c$ that does not depend on~$G$ or~$H$, then $G$ does not have a square root~$H$ such that $H\in {\cal H}$. 
Assume that the treewidth of $G'$ is at most~$c$.
Recall that in order to obtain $G'$ we 
deleted some edges of $G$. Therefore, a square root of $G$ is not necessarily a square root of $G'$ and, moreover, $G'$ may not even have a square root. Nevertheless, 
by using properties~(i) and~(ii) above, we can recover the structure of 
a square root of $G$.
More formally, we prove that $G$ has a square root in ${\cal H}$ if and only if $G'$ contains a subset $L\subseteq E_{G'}$ with the following properties:

\begin{itemize}
\item[(i)] $R\subseteq L$ and $B\cap L=\emptyset$;
\item[(ii)]  for every $xy\in E_{G'}$, $xy\in L$ or there exists  a vertex~$z\in V_{G'}$ with $xz,zy\in L$;
\item[(iii)] for every two distinct edges $xz,yz\in L$, $xy\in E_{G'}$ or there is a vertex $u\in U$ with $xu,uy\in R$; and
\item[(iv)] the graph $H=(V_G,L)$ belongs to ${\cal H}$.
\end{itemize}
In fact, $H=(V_G,L)$ is a square root of the graph obtained from $G$ via exhaustive application of the first rule.
Since $G'$ has bounded treewidth and properties (i)--(iv) can be expressed in MSO, we can test the existence of the set $L$ in polynomial time by applying the aforementioned result of Courcelle~\cite{Courcelle92}.

\section{Outerplanar Roots}\label{sec:outerplanar-root}

We say that a square root $H$ of $G$ is an \emph{outerplanar root} if $H$ is outerplanar, and we define the following problem:

\problemdef{\sc Outerplanar Root}{a graph $G$.}{does $G$ have an outerplanar root?}

The main result of this section is the following theorem.

\begin{theorem}\label{thm:outerplanar}
\textsc{Outerplanar Root} can be solved in $O(n^4)$ time.
\end{theorem}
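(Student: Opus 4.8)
The plan is to instantiate the three-rule framework of Section~\ref{sec:idea} for the class $\mathcal{H}$ of outerplanar graphs. Since outerplanar graphs are closed under edge and vertex deletions, Lemma~\ref{obs:min} lets me restrict attention to a \emph{minimal} outerplanar root~$H$ of~$G$, and by Lemma~\ref{obs:tw-op} such an $H$ has $\tw(H)\leq 2$; together with the $K_{2,3}$- and $K_4$-minor-freeness of Lemma~\ref{lem:out-minor} this forces a tightly controlled local structure on~$H$. First I would apply reduction rules to produce a graph $G'$, show that $\tw(G')$ is bounded by a fixed constant~$c$ whenever an outerplanar root exists, and finally decide the existence of a certifying edge set $L$ satisfying properties~(i)--(iv) using Courcelle's theorem (Lemma~\ref{l-courcelle}).

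The reduction rules would be set up as follows. For Rule~1 I would bound the number of true twins of a simplicial vertex: if $v$ is simplicial then $N_G[v]$ is a clique, which an outerplanar root realizes around a single centre, and $K_{2,3}$-freeness caps the number of mutually interchangeable leaves that can be attached there, so all but a constant number of true twins of~$v$ are deletable without affecting the answer. For Rule~2 the blue edges come directly from Lemma~\ref{lem:not-incl}: whenever $x,y\in N_G(u)$ satisfy $\dist_{G-u}(x,y)\geq 3$, the edges $ux$ and $uy$ lie in no root and are coloured blue. The red edges are the forced ones: an edge to a pendant vertex must belong to every root, and more generally $ux$ is red whenever no length-$2$ path can otherwise realize a $G$-adjacency that $H$ is obliged to cover; Lemma~\ref{obs:triangle}, which constrains how triangles sit inside a minimal root, assists in certifying such forced edges. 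For Rule~3 I would collect the set $U$ of vertices all of whose incident edges are labelled, reject if two red edges form an induced path of length~$2$, and otherwise delete from each $G[N(u)]$ with $u\in U$ the clique-filling edges $xy$, $x,y\in N_G(u)$, for which $xu,uy\in R$; these are exactly the edges meeting properties~(i)--(ii) of the framework.

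The crux of the argument, and the step I expect to be the main obstacle, is proving that the reduced graph $G'$ has treewidth bounded by a fixed constant whenever $G$ has an outerplanar root. The difficulty is that $G'$ is a subgraph of $G=H^2$, and $H^2$ can have arbitrarily large treewidth, since a high-degree vertex of $H$ turns its neighbourhood into a large clique of $G$; indeed Lemma~\ref{lem:tw-power} only bounds $\tw(H^2)$ in terms of $\Delta(H)$, which is unbounded. The entire purpose of Rule~3 is to dismantle precisely these cliques. Once the clique-filling edges at every fully-labelled vertex $u\in U$ have been removed, I would build a tree decomposition of $G'$ of bounded width guided by the block structure of the outerplanar root: within each block the surviving adjacencies respect the cyclic order of the boundary cycle and the consecutiveness relation defined before Figure~\ref{fig:ord2}, so they can be laid out along that cycle in bags of bounded size. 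The real work is to show that after the cliques are broken the remaining $G'$-edges cannot reconnect far-apart parts of a block into a large bag, and this is exactly where the $K_{2,3}$- and $K_4$-minor-freeness of Lemma~\ref{lem:out-minor} (via Lemma~\ref{obs:minor}) is indispensable.

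Given the treewidth bound, the remainder follows the framework. I would test $\tw(G')\leq c$ in linear time by Lemma~\ref{l-bod}, returning ``no'' otherwise, and then prove the equivalence that $G$ has an outerplanar root if and only if $G'$ admits an edge set $L$ with properties~(i)--(iv); here property~(iv), that $H=(V_G,L)$ is outerplanar, is MSO-expressible by forbidding the $K_{2,3}$- and $K_4$-minors through Lemma~\ref{l-engel}, while (i)--(iii) are immediately MSO-expressible, so Lemma~\ref{l-courcelle} decides the existence of $L$ in time linear in the size of $G'$. For the running time, the dominant cost is Rule~2: for each vertex $u$ and each of its neighbours I run a breadth-first search in $G-u$ to obtain the distances needed for Lemma~\ref{lem:not-incl}, costing $\sum_{u\in V_G} d_G(u)\cdot O(n+m)=O(m^2)=O(n^4)$ with $m=|E_G|$; the vertex reduction, the edge deletions, the treewidth test, and the Courcelle step are all subsumed by this bound, giving the claimed $O(n^4)$ total.
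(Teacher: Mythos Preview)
Your high-level plan matches the paper's framework exactly, but two of the three rules are underspecified in ways that hide the real work.

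\textbf{The labelling rule is the crux, and your version is not algorithmic.} You propose to colour $ux$ blue via Lemma~\ref{lem:not-incl} and to colour $ux$ red ``whenever no length-$2$ path can otherwise realise a $G$-adjacency that $H$ is obliged to cover''. The second criterion refers to the unknown root $H$ and cannot be checked in polynomial time as stated. What the paper actually proves (and what you are missing) is a purely $G$-based characterisation: a vertex $u$ enters $U$ precisely when $N_G(u)$ contains \emph{three} vertices $v_1,v_2,v_3$ pairwise at distance at least~$3$ in $G-u$, and in that case $ux\in E_H$ if and only if $\dist_{G-u}(x,v_i)\leq 2$ for all $i$ (Lemma~\ref{lem:three}). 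A companion lemma (Lemma~\ref{lem:high}) shows that every vertex with $d_H(u)\geq 22$ admits such a triple, so every high-degree $H$-vertex is guaranteed to land in $U$. Without this pair of lemmas you cannot argue that $U$ captures all the vertices whose $H$-neighbourhood creates a large clique in $G$, and then the treewidth bound on $G'$ collapses.

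\textbf{The deletion rule is too aggressive.} You write that for $u\in U$ you delete every edge $xy$ with $xu,uy\in R$, calling these ``exactly the edges meeting properties~(i)--(ii)''. Property~(ii) is indeed $xu,uy\in R$, but property~(i) requires $xy\notin E_H$ for every minimal root $H$, and this is \emph{not} automatic: such an $xy$ may well lie in $H$, or may be needed in $G'$ because some $z\neq u$ has $xz,zy\in E_H$. The paper's test is again $G$-based: delete $xy$ only when there is no $v$ with $uv\in B$ and $x,y\in N_G(v)$; this is justified by Lemma~\ref{lem:non-edges} (via the sets $S(H,u)$), together with Lemmas~\ref{lem:cons} and~\ref{lem:del} which show that in an outerplanar root these sets are consecutive along the boundary cycle and have size at most~$4$. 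Your sketch of the treewidth bound (``the surviving adjacencies respect the cyclic order'') is morally the right picture, but it relies precisely on these consecutiveness and size lemmas, and the paper turns it into an explicit bound $\tw(G')\leq 3\cdot 42^3$ by constructing an auxiliary outerplanar graph $\hat{H}$ of bounded degree and showing $G'$ is a minor of $\hat{H}^4$ (Lemma~\ref{lem:bound-tw}); Lemma~\ref{lem:tw-power} then finishes. Your running-time analysis and the MSO endgame are fine.
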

\noindent
We first show a number of structural results in Section~\ref{sec:tech}.
We then use these results in the design of our polynomial-time algorithm for {\sc Outerplanar Root} in Section~\ref{sec:alg}.

\subsection{Structural Lemmas}\label{sec:tech}
As the class of outerplanar graphs is closed under 
vertex an edge deletions,
we may restrict ourselves to  minimal outerplanar roots by Lemma~\ref{obs:min}.

We start with the following lemmas.

\begin{lemma}\label{lem:adj}
Let $H$ be a minimal square root of a graph $G$, and let $u\in V_G$. If $x\in N_H(u)$ is not a pendant vertex of $H$, then  there is a vertex $y\in N_H^2(u)$
 that is adjacent to $x$ in $G$.
\end{lemma}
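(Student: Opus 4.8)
The plan is to use two elementary facts about the square root $H$: every edge of $H$ is an edge of $G$ (if $ab\in E_H$ then $\dist_H(a,b)=1\le 2$, so $ab\in E_G$), and, more generally, any two vertices at distance at most~$2$ in $H$ are adjacent in $G$ because $G=H^2$. Given these, I would distinguish cases according to the distance from $u$ of the neighbours of $x$ in $H$.

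First I would exploit that $x$ is not a pendant vertex, so $d_H(x)\ge 2$ and $x$ has a neighbour $w\neq u$ in $H$. If some such neighbour $w$ satisfies $\dist_H(u,w)=2$, that is $w\in N_H^2(u)$, then the proof is immediate: $xw\in E_H\subseteq E_G$, so $y=w$ is the required vertex.

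The interesting case is when every neighbour of $x$ other than $u$ lies in $N_H(u)$. Fixing such a neighbour $w$, the three vertices $u,x,w$ are then pairwise adjacent in $H$ and so form a triangle, which is exactly the hypothesis of Lemma~\ref{obs:triangle} (taking $u$ as the distinguished vertex). That lemma supplies a vertex $z\neq u$ which is a neighbour in $H$ of $x$ or of $w$, is non-adjacent to $u$ in $H$, and is adjacent in $H$ to exactly one of $x,w$. Because $z\neq u$, $z$ is non-adjacent to $u$, and $z$ is adjacent in $H$ to $x$ or to $w$ (both of which are neighbours of $u$), it follows that $\dist_H(u,z)=2$, i.e.\ $z\in N_H^2(u)$. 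To finish I would verify $xz\in E_G$: if $z$ is adjacent to $x$ in $H$, then $xz\in E_H\subseteq E_G$; otherwise $z$ is adjacent to $w$ in $H$, so $x,w,z$ is a path of length~$2$ in $H$ and hence $\dist_H(x,z)\le 2$, giving $xz\in E_G$ again. In either subcase $y=z$ does the job.

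I do not expect a serious obstacle. The only genuine choice is the case split on whether $x$ has a neighbour at distance exactly~$2$ from $u$, and the only step that requires an idea rather than bookkeeping is realising that in the remaining case the triangle $u,x,w$ is precisely what is needed to invoke Lemma~\ref{obs:triangle}; after that, checking that the returned vertex $z$ lies in $N_H^2(u)$ and is adjacent to $x$ in $G$ is a routine use of $G=H^2$ (note in particular that $z\neq x$, since $z$ is non-adjacent to $u$ whereas $x$ is adjacent to $u$).
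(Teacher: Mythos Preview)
Your proof is correct and follows essentially the same approach as the paper: both split on whether $x$ has an $H$-neighbour outside $N_H[u]$, and in the remaining case apply Lemma~\ref{obs:triangle} to the triangle $u,x,w$ to produce the desired vertex in $N_H^2(u)$. The only cosmetic difference is that the paper observes the subcase ``$z$ adjacent to $x$'' is in fact impossible under the case-2 assumption (so $z$ must be adjacent to $w$), whereas you harmlessly handle both subcases.
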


\begin{proof}
Since $x$ is not a pendant vertex of $H$, $x$ has at least one neighbour in $H$ distinct from $u$. If there exists a vertex $y\in  N_H(x)$ such that $y\notin N_H[u]$, then the claim holds. Assume that for every $y\in N_H(x)$ distinct from $u$, it holds that $y\in N_H(u)$. Consider such a neighbour $y$. By Lemma~\ref{obs:triangle}, $x$ or 
$y$ has a neighbour 
$y'$ in $H$ such that $y'\notin N_H(u)$. By our assumption on $x$, we find that this vertex cannot be $x$ and thus must be $y$.
Since $xy,yy'\in E_H$, we find that $xy'\in E_G$. 
Hence $y'\in N_G(u)\setminus N_H(u)=N_H^2(u)$ is adjacent to $x$ in $G$, as desired.
\end{proof}

Let $H$ be a minimal square root of a graph $G$ and let $u\in V_G$. We define the following set: 
$$S(H,u)=\{N_G(x)\cap N_H(u)\; |\; x\in N_H^2(u)\}.$$ 
We use $S(H,u)$ to detect edges with both endpoints in $N_H(u)$  that are excluded from every minimal square root of $G$.

\begin{lemma}\label{lem:non-edges}
Let $H$ be a minimal square root of a graph $G$, and let $u\in V_G$.  If for two distinct vertices $x,y\in N_H(u)$ there is no set $X\in S(H,u)$ such that $x,y\in X$, then $xy\notin E_H$.
\end{lemma}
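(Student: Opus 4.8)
The plan is to prove the contrapositive: assuming $x,y\in N_H(u)$ are distinct with $xy\in E_H$, I will exhibit a set $X\in S(H,u)$ containing both $x$ and $y$. Since $ux,uy,xy\in E_H$, the vertices $u,x,y$ are pairwise adjacent in $H$, so they form exactly the triangle configuration to which Lemma~\ref{obs:triangle} applies.

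First I would invoke Lemma~\ref{obs:triangle} on this triangle, with $u$ in the distinguished role and $x,y$ playing the roles of the other two vertices. It yields a vertex $z\neq u$ that is an $H$-neighbour of $x$ or of $y$ and that is not adjacent to $u$ in $H$; by the symmetry of $x$ and $y$ I may assume $xz\in E_H$. The crucial observation is then that $z$ lies at distance exactly $2$ from $u$ in $H$: the path $u$--$x$--$z$ gives $\dist_H(u,z)\leq 2$, while $z\neq u$ together with $uz\notin E_H$ gives $\dist_H(u,z)\geq 2$, so $z\in N_H^2(u)$. This is the step that makes $N_G(z)\cap N_H(u)$ a legitimate member of $S(H,u)$.

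Next I would verify that $z$ is $G$-adjacent to both $x$ and $y$. Adjacency to $x$ is immediate since $xz\in E_H\subseteq E_G$. Adjacency to $y$ follows because $z$--$x$--$y$ is a path of length $2$ in $H$, whence $\dist_H(z,y)\leq 2$ and so $zy\in E_G$. Consequently $x,y\in N_G(z)\cap N_H(u)$, and because $z\in N_H^2(u)$ the set $X=N_G(z)\cap N_H(u)$ belongs to $S(H,u)$ and contains both $x$ and $y$, contradicting the hypothesis and completing the argument.

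I do not anticipate a serious obstacle: the result is essentially a direct application of Lemma~\ref{obs:triangle}. The one point requiring care is checking that the vertex produced by that lemma lands in $N_H^2(u)$, rather than merely somewhere in $H$; it is precisely the two guarantees ``$z\neq u$'' and ``$uz\notin E_H$'' that pin $z$ to distance exactly $2$ from $u$. I note in passing that the final clause of Lemma~\ref{obs:triangle} (that $z$ is adjacent to exactly one of $x,y$) is not needed here, since $zy\in E_G$ is forced by the length-$2$ path regardless of whether $zy\in E_H$.
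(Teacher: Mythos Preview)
Your proof is correct and follows essentially the same approach as the paper's: both argue the contrapositive by applying Lemma~\ref{obs:triangle} to the triangle $u,x,y$ to obtain a vertex $z\notin N_H[u]$ adjacent in $H$ to one of $x,y$, and then observe that $x,y\in N_G(z)\cap N_H(u)\in S(H,u)$. Your write-up is in fact more careful than the paper's in explicitly verifying $z\in N_H^2(u)$, a point the paper leaves implicit.
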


\begin{proof}
Suppose that for two distinct vertices $x,y\in N_H(u)$, $xy\in E_H$. By Lemma~\ref{obs:triangle}, there is a vertex $z$ such that $z$ is adjacent to $x$ or $y$ in $H$, but $z$ is not adjacent to $u$ in $H$. We find that $x$ and $y$ are adjacent to $z$ in $G$ and, therefore,  $x,y\in N_G(z)\cap N_H(u)$. In other words, $x,y\in X=N_G(z)\cap N_H(u)\in S(H,u)$.
\end{proof}

We need the following two lemmas about the structure of $S(H,u)$ for minimal outerplanar roots. 

\begin{lemma}\label{lem:cons}
Let $H$ be a minimal outerplanar root of a graph $G$, and let $u\in V_G$.  Then, for each $X\in S(H,u)$, $X$ is consecutive with respect to $u$.
\end{lemma}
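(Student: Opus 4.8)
The plan is to fix a witness $w\in N_H^2(u)$ with $X=N_G(w)\cap N_H(u)$ and argue by contradiction: I will show that if $X$ fails to be consecutive with respect to $u$, then the short paths certifying membership in $X$ are forced to pass through a two-vertex separator in a way that is impossible. The engine of the whole argument is a separation property of the outerplanar embedding of a block: if $ab$ is an edge of a block $F$ with boundary cycle $C$, then $\{a,b\}$ separates the two arcs of $C\setminus\{a,b\}$ inside $F$, because no chord may cross $ab$ and boundary edges join only consecutive vertices of $C$. Using the maximality of blocks (a $v$–$v'$ path that leaves $F$ and returns would create a larger biconnected subgraph), this lifts to a separation in all of $H$.

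First I would show that all of $X$ lies in a single block $F$ of $H$. If two vertices $x,y\in X$ lay in different blocks, some cut vertex $c$ would separate them; since $u$ is adjacent to both, the path $xuy$ avoids every $c\neq u$, forcing $c=u$. But $\dist_H(u,w)=2$ and $u$ then separates $x$ from $y$, so $w$ is separated by $u$ from at least one of them, say $y$, giving $\dist_H(y,w)=\dist_H(y,u)+\dist_H(u,w)=3$ and contradicting $y\in X$.

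Next, suppose $X$ is not consecutive with respect to $u$. Since $X\subseteq F$, there are $v_i,v_{i+1}\in X$ that are consecutive in the clockwise ordering of $X$ but not consecutive with respect to $u$, so some $v'\in N_H(u)\setminus X$ lies strictly between them on $C$ (it is not in $X$, as otherwise it would sit between $v_i$ and $v_{i+1}$ in the ordering of $X$). In the clockwise ordering $u=p_0,p_1,\dots,p_m$ of $C$ we have $v_i=p_a$, $v'=p_b$, $v_{i+1}=p_c$ with $1\le a<b<c\le m$; hence $uv'$ is a chord and $v_i,v_{i+1}$ fall on the two distinct arcs of $C\setminus\{u,v'\}$. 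By the separation property, $\{u,v'\}$ separates $v_i$ from $v_{i+1}$ in $H$, and since $w\notin\{u,v'\}$, it shares a component of $H-\{u,v'\}$ with at most one of them; thus $\{u,v'\}$ separates $w$ from $v_i$ or from $v_{i+1}$, say from $v_i$.

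Finally I would extract the contradiction from $\dist_H(v_i,w)\le 2$. A length-$1$ path $v_iw$ would avoid the separator, which is impossible; a length-$2$ path $v_i z w$ forces $z\in\{u,v'\}$, where $z=u$ gives $uw\in E_H$ and hence $\dist_H(u,w)=1$, while $z=v'$ gives $v'w\in E_H$ and hence $v'\in X$ — both contradictions. Therefore no such $v'$ exists and $X$ is consecutive with respect to $u$. The step I expect to be the main obstacle is making the separation rigorous in all of $H$ rather than only inside $F$, namely ruling out $v_i$–$v_{i+1}$ (and $v_i$–$w$) paths that temporarily leave the block $F$; this is precisely where I would invoke the maximality of $F$ as a block, after which the remainder is routine bookkeeping with the clockwise ordering. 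Note that minimality of $H$ is not actually needed for this statement and is inherited only from the surrounding context.
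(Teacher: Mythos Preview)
Your argument is correct, and it follows a genuinely different route from the paper's proof.

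The paper argues by exhibiting a forbidden minor: given the intruder $y\in N_H(u)\setminus X$ between $x_{i-1}$ and $x_i$ on the boundary cycle, it takes the two short $(x_{i-1},x)$- and $(x_i,x)$-paths in $H$ (which avoid both $u$ and $y$), together with suitable arcs of the boundary cycle, and contracts edges to produce a $K_4$. This invokes Lemma~\ref{lem:out-minor} directly.

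You instead use the two-separator property of an edge in a biconnected outerplanar graph: the edge $uv'$ separates the two arcs of $C\setminus\{u,v'\}$ inside $F$, and the block structure lifts this to a separation in $H$. Then the short path from the witness $w$ to the ``wrong side'' vertex of $X$ must pass through $u$ or $v'$, each of which yields an immediate contradiction. This is more elementary in that it uses only the no-crossing-chords picture of outerplanarity rather than the forbidden-minor characterisation, and the contradiction is reached by distance bookkeeping rather than topological contraction. The step you flagged as the main obstacle --- showing the separation holds in $H$ and not just in $F$ --- is indeed the only nontrivial point, and your sketch via block maximality is the standard way to do it (any path in $H-\{u,v'\}$ between the two sides would have to leave $F$ through a cut vertex and return through the same one, hence restricts to a path in $F-\{u,v'\}$).

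Your remark that minimality of $H$ is not used is also correct; neither proof invokes it.
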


\begin{proof}
Let $X\in S(H,u)$ and consider a vertex~$x\in N_H^2(u)$ such that $X=N_G(x)\cap N_H(u)$. 
As $ux\in E_G$ but $ux\notin E_H$, we find that $x$ must be adjacent to a vertex of $N_H(u)$. Hence $X\neq \emptyset$.

If $|X|=1$, then the claims holds by definition. Assume that $|X|\geq 2$. 

We first observe that the vertices of $X$ are in the same block~$F$ of $H$.
This can be seen as follows.
Suppose $y,z\in X\subseteq N_H(u)$ are two vertices that are not in the same block of $H$. Then 
any vertex adjacent to $y$ and $z$ in $G$ must belong to $N_H[u]$. Hence $x\in N_H(u)$, which is not possible.

Let~$C$ be the boundary cycle of~$F$.
Assume that $X=\{x_1,\ldots,x_k\}$ and that the vertices are numbered in the clockwise order with respect to  $u$. 
Suppose that $X$ is not consecutive with respect to $u$. Then, by definition, 
there exists a vertex~$y\in N_H(u)$ that lies on $C$ between two
vertices $x_{i-1}$ and $x_i$ for some $i\in\{2,\ldots,k\}$, such that $y$ does not belong to $X$. The latter implies that 
$y\notin N_G(x)$. 
Since $x_{i-1}x$ and $x_ix$ belong to $E_G$ by definition of $X$, we find that $H$ has an
$(x_{i-1},x)$-path~$P_1$ and an $(x_i,x)$-path~$P_2$, each of length at most~$2$. The paths $P_1$ and $P_2$ do not contain $u$, due to the facts that $xu\notin E_H$ and the length of $P_1$ and $P_2$ is at most~$2$.
Moreover, $P_1$ and $P_2$ do not contain $y$, as $yx\notin E_G$ and both paths have length at most~$2$.

Let $Q_1$ be the subpath of $C$ from $x_{i-1}$ to $y$ that does not contain $x_i$, and let $Q_2$ be the subpath of $C$ from
$y$ to $x_i$ that does not contain $x_{i-1}$. First suppose that $x$ does not belong to $Q_1$ or $Q_2$.
 We contract all edges on $P_1$ and $P_2$ and every edge on $Q_1$ and $Q_2$. This yields a $K_4$ with vertices $u$, $x_{i-1}$, $x_i$ and $y$, contradicting Lemma~\ref{lem:out-minor}.
 Now suppose that $x$ belongs to $Q_1$ or $Q_2$, say to $Q_1$. We contract the edge $ux_{i-1}$, every edge on $Q_1$, every edge of $Q_2$ and every edge of $P_2$.
 This yields a $K_4$ with vertices $u$, $x$, $x_i$ and $y$, contradicting Lemma~\ref{lem:out-minor} again.
 We conclude that $X$ must be consecutive with respect to $u$.
\end{proof}

\begin{lemma}\label{lem:size}
Let $H$ be a minimal outerplanar root of a graph $G$, and let $u\in V_G$. Then any $X\in S(H,u)$ has size at most~$4$.
\end{lemma}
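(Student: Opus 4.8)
The plan is to show that if $|X|\ge 5$ then $H$ contains $K_{2,3}$ as a minor, contradicting Lemma~\ref{lem:out-minor}. Write $X=N_G(x)\cap N_H(u)$ for a vertex $x\in N_H^2(u)$. By definition, every vertex of $X$ is adjacent to $u$ in $H$ and adjacent to $x$ in $G$; the latter gives $\dist_H(x,x_i)\le 2$ for each $x_i\in X$, while $\dist_H(u,x)=2$ gives $ux\notin E_H$. The whole argument rests on these three facts, and I would build a $K_{2,3}$ minor whose two ``centres'' are $u$ and $x$.

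First I would bound the number of vertices of $X$ adjacent to $x$ in $H$. Let $A=\{x_i\in X\mid xx_i\in E_H\}$. If $|A|\ge 3$, then three vertices of $A$ together with $u$ and $x$ already form $K_{2,3}$ as a (not necessarily induced) subgraph, with $u,x$ on one side and the three chosen vertices on the other; since a subgraph is a minor, this contradicts Lemma~\ref{lem:out-minor}. Hence $|A|\le 2$. Next, for each $x_i\in X\setminus A$, since $xx_i\in E_G$ but $xx_i\notin E_H$, there is a common neighbour $w_i$ of $x$ and $x_i$ in $H$; moreover $w_i\neq u$, because $u$ is not adjacent to $x$ in $H$ and so cannot be a common neighbour. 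I would then form the branch set $B_x=\{x\}\cup\{w_i\mid x_i\in X\setminus A\}$, which induces a connected subgraph (a star centred at $x$) and is disjoint from $u$.

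The key observation closing the argument is that any vertex of $X$ equal to some $w_i$ is adjacent to $x$ in $H$, hence lies in $A$; so at most $|A|\le 2$ vertices of $X$ belong to $B_x$. Consequently, if $|X|\ge 5$, then at least three vertices $x_a,x_b,x_c$ of $X$ avoid $B_x$. The five branch sets $\{u\},B_x,\{x_a\},\{x_b\},\{x_c\}$ are then pairwise disjoint and connected, $u$ is adjacent to each of $x_a,x_b,x_c$, and $B_x$ is adjacent to each of them as well (through $x$ if the vertex lies in $A$, through the relevant $w_i$ otherwise). This realizes $K_{2,3}$ as a minor of $H$, the desired contradiction, so $|X|\le 4$. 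The only genuine subtlety is that the length-two connectors $w_i$ might coincide with other vertices of $X$ and thereby spoil disjointness of the branch sets; the step that disarms this is precisely the bound $|A|\le 2$, which limits how many vertices of $X$ can simultaneously serve as connectors. I would also remark that this argument in fact does not use the consecutiveness from Lemma~\ref{lem:cons}: it relies only on $u$ being a common $H$-neighbour of all of $X$ and on $x$ reaching every vertex of $X$ within distance two while missing $u$.
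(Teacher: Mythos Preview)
Your argument is correct. The paper's proof takes a different, more geometric route: it first invokes Lemma~\ref{lem:cons} to know that the vertices of $X$ are consecutive along the boundary cycle of a block, and then does a case analysis on whether $x$ lies on that boundary cycle or not, using the planar embedding directly to derive a contradiction (either a distance-$3$ argument or an explicit $K_{2,3}$ via a short $(x_1,x_k)$-path through $x$). Your proof, by contrast, is a pure minor construction: you bound the number of $H$-neighbours of $x$ inside $X$ by~$2$ (else $K_{2,3}$ is immediate), absorb the length-two connectors $w_i$ into a single branch set $B_x$ around $x$, and observe that any collision between $B_x$ and $X$ is confined to $A$. This avoids the embedding and the consecutiveness lemma entirely, as you note. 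The paper's approach gives a clearer geometric picture of where $x$ sits relative to $X$ in the embedding, which fits the surrounding narrative; your approach is more self-contained and would work verbatim for any $K_{2,3}$-minor-free class, not just outerplanar graphs.
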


\begin{proof}
For contradiction, assume that there exists a set $X\in S(H,u)$ of size at least~5. Let $X=\{x_1,\ldots,x_k\}$ for some $k\geq 5$.
By definition,  $X=N_G(x)\cap N_H(u)$ for some vertex 
$x\in N_H^2(u)$.
By Lemma~\ref{lem:cons}, $X$ is consecutive with respect to $u$. We assume that $x_1,\ldots,x_k$ is the clockwise order of the vertices of $X$ along the boundary cycle $C$ of the block of $H$ with respect to $u$.  
  
First suppose that $x$ belongs to $C$. If $x$ lies before $x_3$ in the clockwise  ordering of the vertices of $C$ with respect to $u$, then $x$ is not adjacent to $x_k$ in $G$ due to the outerplanarity, a contradiction. Similarly, if $x$ lies after $x_3$,  $x$ is not adjacent to $x_1$ in $G$ and we obtain the same contradiction. 

Now suppose that $x$ does not belong to $C$. Then, as every vertex in $X$ is adjacent to $x$ in $G$, we find that $x$ is at distance at most~$2$ in $H$ from $x_1$ and~$x_k$. It follows that there is a $(x_1,x_k)$-path $P$ in $H$ of length at most~$4$,
such that $P$, together with the edges $ux_1$ and $ux_k$, forms a cycle in $H$. The innerface of this cycle contain the
nonempty set $\{x_2,\ldots,x_{k-1}\}$, which is not possible as $G$ is outerplanar 
(alternatively, by contracting the edges of the subpath of $C$ from $x_2$ to $x_{k-1}$ and by contracting all but two edges of~$P$, we obtain a $K_{2,3}$, contradicting Lemma~\ref{lem:out-minor}).
\end{proof}

By combining Lemmas~\ref{lem:cons} and \ref{lem:size} we obtain the following lemma. 

\begin{lemma}\label{lem:del}
Let $H$ be a minimal outerplanar root of a graph $G$, and let $u\in V_G$. Then the following two statements hold:
\begin{itemize}
\item[(i)] If $x,y\in N_H(u)$ do not belong to the same block of $H$, then for any $X\in S(H,u)$, at least one of $x,y$ does not belong to $X$.
\item[(ii)] Let $F$ be a block of $H$ containing $u$ and vertices $x_1,\ldots,x_k\in N_H(u)$ ordered in clockwise order with respect to $u$ in the boundary cycle of $F$. 
Then, for any $X\in S(H,u)$, at least one of $x_i,x_j$ does not belong to $X$ if 
$|i-j|\geq 4$.
\end{itemize}
\end{lemma}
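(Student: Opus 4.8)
The plan is to derive both statements directly from Lemmas~\ref{lem:cons} and~\ref{lem:size}, since the lemma is stated precisely as their combination. The two facts I will use repeatedly are: every $X\in S(H,u)$ is consecutive with respect to $u$ (Lemma~\ref{lem:cons}), which in particular forces all vertices of $X$ to lie in a single block of~$H$; and $|X|\le 4$ (Lemma~\ref{lem:size}).

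For part~(i) I would argue by contradiction. Suppose some $X\in S(H,u)$ contains both $x$ and $y$, where $x,y\in N_H(u)$ lie in different blocks of~$H$. By Lemma~\ref{lem:cons}, $X$ is consecutive with respect to~$u$, and by the definition of ``consecutive with respect to $u$'' this requires all vertices of~$X$, in particular $x$ and~$y$, to belong to one common block of~$H$. This contradicts the assumption, so at least one of $x,y$ is not in~$X$.

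For part~(ii) I would again assume toward a contradiction that some $X\in S(H,u)$ contains both $x_i$ and $x_j$ with $|i-j|\ge 4$; taking $i<j$ gives $j\ge i+4$. By Lemma~\ref{lem:cons}, $X$ lies in a single block, and since $x_i,x_j\in F$ while any two distinct blocks share at most one vertex, that block must be~$F$; hence $X\subseteq\{x_1,\dots,x_k\}$, as $x_1,\dots,x_k$ are exactly the neighbours of~$u$ in~$F$. The key step is to observe that consecutiveness forces $X$ to be a \emph{contiguous} segment $\{x_a,x_{a+1},\dots,x_b\}$ of this ordering: if two vertices were consecutive in the clockwise order of~$X$ but not of the form $x_c,x_{c+1}$, then some vertex of $N_H(u)\cap F$ would lie strictly between them on the boundary cycle of~$F$, contradicting that they are consecutive with respect to~$u$. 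Now Lemma~\ref{lem:size} gives $b-a+1=|X|\le 4$, so $b-a\le 3$; since $i,j\in\{a,\dots,b\}$ we obtain $|i-j|\le 3$, contradicting $|i-j|\ge 4$. Therefore at least one of $x_i,x_j$ is not in~$X$.

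The only step requiring genuine care, and the one I expect to be the main obstacle, is justifying that $X$ is a contiguous segment of $x_1,\dots,x_k$. This hinges on unfolding the precise definition of ``the vertices of $X$ are consecutive with respect to~$u$'', namely that consecutive elements of~$X$ in the clockwise order inherited from the boundary cycle have no vertex of $N_H(u)$ between them. Because within block~$F$ the set $N_H(u)\cap F$ is exactly $\{x_1,\dots,x_k\}$, ``no vertex of $N_H(u)$ in between'' becomes ``consecutive indices'', which yields contiguity and lets the size bound of Lemma~\ref{lem:size} control the index gap directly. Everything else is straightforward bookkeeping.
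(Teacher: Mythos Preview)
Your proof is correct and follows exactly the approach the paper intends: the lemma is stated in the paper as an immediate combination of Lemmas~\ref{lem:cons} and~\ref{lem:size}, and you have simply unpacked that combination carefully, with the contiguity argument for part~(ii) being the right way to turn ``consecutive with respect to~$u$'' plus the size bound $|X|\le 4$ into the index-gap bound $|i-j|\le 3$.
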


We now prove some structural results that help us to decide whether an edge incident to a vertex is in an outerplanar root of a graph or not.
Suppose that $X$ is a set of vertices of a graph~$G$ that are pairwise true twins, such that at least one vertex~$x\in X$ is a pendant vertex of a root~$H$ of $G$. Then in $G$, we find that $x$, and consequently, $X$ is simplicial. We therefore formulate the following lemma in terms of simplicial sets, although we do not need this fact for our proof.

\begin{lemma}\label{lem:pend} 
Let $H$ be a minimal outerplanar root of a graph $G$. If $G$ contains a set~$X$ of seven simplicial vertices that are pairwise true twins
in $G$, 
then at least one of the vertices in $X$ is a pendant vertex of~$H$.
\end{lemma}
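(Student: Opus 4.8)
The plan is to argue by contradiction: assume $H$ is a minimal outerplanar root of $G$ and that \emph{none} of the seven vertices $x_1,\dots,x_7\in X$ is pendant in $H$, so that each $x_i$ has $H$-degree at least~$2$. First I would record the basic facts that drive everything. Since every edge of $H$ is an edge of $H^2=G$, the graph $H$ is a spanning subgraph of $G$; hence $N_H[x_i]\subseteq N_G[x_i]$ and, as the $x_i$ are pairwise true twins, all of them share the same closed neighbourhood $K:=N_G[x_1]=\cdots=N_G[x_7]$, which is a clique of $G$. Thus $X$ is a clique of $G$, every $H$-neighbour of each $x_i$ lies in $K$, and any two twins are at $H$-distance at most~$2$ (their connecting length-$2$ path runs through a vertex of $N_H(x_i)\subseteq K$). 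It is worth noting that minimality is \emph{essential} here: the fan $K_1+P_6$ is an outerplanar root whose square is $K_7$ and in which no vertex is pendant, but it is not minimal, and its minimal outerplanar roots (e.g. the star $K_{1,6}$) do exhibit pendant twins.

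The key quantitative step is a degree bound via the set $S(H,x_1)$. If some twin $y$ lies at $H$-distance exactly~$2$ from $x_1$, then since $N_G(y)=K\setminus\{y\}$ and $N_H(x_1)\subseteq K$ with $y\notin N_H(x_1)$, we get $N_G(y)\cap N_H(x_1)=N_H(x_1)$, so $N_H(x_1)\in S(H,x_1)$ and Lemma~\ref{lem:size} forces $|N_H(x_1)|\le 4$. I would then show that such a twin $y$ always exists: if instead all of $x_2,\dots,x_7$ were $H$-adjacent to $x_1$, then $x_1$ would be adjacent to all six other twins, a configuration I rule out separately (next paragraph). Applying the same argument to every $x_i$ shows that each twin has $H$-degree in the range $\{2,3,4\}$, and in particular the high-degree ``hub'' of the fan cannot occur.

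To exclude the case that a twin $x_1$ is $H$-adjacent to all of $x_2,\dots,x_7$, I would use minimality through Lemma~\ref{obs:triangle}. Outerplanarity (Lemma~\ref{lem:out-minor}) forces $H[\{x_2,\dots,x_7\}]$ to be a linear forest, since joining a universal vertex $x_1$ to any vertex of degree~$\ge 3$ or to any cycle creates a $K_{2,3}$ or $K_4$. Whenever a triangle $x_1x_jx_k$ is present (an edge inside this linear forest, or a triangle $x_1x_jw_j$ formed with a second $H$-neighbour $w_j\notin X$ needed to make $x_j$ non-pendant), Lemma~\ref{obs:triangle} supplies a vertex $p\neq x_1$ adjacent in $H$ to exactly one of the two triangle vertices with $p$ not $H$-adjacent to $x_1$. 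Such a $p$ lies in $K$ and therefore is $G$-adjacent to all seven twins; since $p\in N_H^2(x_1)$, the set $N_G(p)\cap N_H(x_1)\in S(H,x_1)$ contains $\{x_2,\dots,x_7\}$, so it has size at least~$6>4$, contradicting Lemma~\ref{lem:size}. (A short argument shows the remaining subcase, where the forest is edgeless, also produces such a triangle because the second neighbours $w_j$ must themselves be $H$-adjacent to $x_1$.) Hence no twin is adjacent to all six others, which justifies the existence of the distance-$2$ twin used above.

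At this point every twin has $H$-degree in $\{2,3,4\}$, all seven are pairwise within $H$-distance~$2$, all neighbours lie in the clique $K$, and (by Lemma~\ref{lem:cons}) each neighbourhood $N_H(x_i)=N_G(y)\cap N_H(x_i)$ appearing as an element of $S(H,x_i)$ is \emph{consecutive} with respect to $x_i$ on a single block's boundary cycle. The final step, which I expect to be the main obstacle, is to combine these constraints into an explicit forbidden minor: using the consecutiveness of the (size $\le 4$) neighbourhoods together with the clockwise bound of Lemma~\ref{lem:del}(ii), the $\ge 2$ twins lying at distance~$2$ from $x_1$ and attaching to $N_H(x_1)$ must either produce two vertices with three common twin-neighbours (a $K_{2,3}$) or nest two twin--twin connections inside the cycle through $x_1$ and its consecutive neighbours (a $K_4$), in each case contradicting Lemma~\ref{lem:out-minor}. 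Carrying out this minor extraction cleanly—organising the small number of cases by how the distance-$2$ twins distribute over the at most four consecutive neighbours of $x_1$—is the delicate part of the proof, whereas the reductions to bounded degree and the exclusion of the ``hub'' configuration are comparatively routine applications of the preceding lemmas.
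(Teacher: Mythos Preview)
Your route is genuinely different from the paper's. The paper never uses the $S(H,u)$ machinery here; it argues directly from block structure. First it disposes of the case where two twins lie in different blocks of $H$ (showing one is pendant using only Lemma~\ref{obs:triangle} and the twin condition). When all seven twins lie in one block, it finds a chord $uv$ of the boundary cycle that separates them, chooses the side with the fewest twins, shows that side contains a unique twin, and then proves that one chord endpoint $u$---which need not itself be a twin---is $H$-adjacent to at least six of the twins. The contradiction comes from a local analysis of a boundary-cycle neighbour of $x_3$, again via Lemma~\ref{obs:triangle}. So where you bound the $H$-degree of each twin by~$4$, the paper instead exhibits a (possibly non-twin) vertex of $H$-degree at least~$6$ among the twins and exploits its position on the boundary cycle.

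Your steps through the degree bound are essentially correct, and the hub exclusion is cleaner than you make it: if $x_1$ is $H$-adjacent to all of $x_2,\dots,x_7$, then Lemma~\ref{lem:adj} applied to any non-pendant $x_j\in N_H(x_1)$ yields some $y\in N_H^2(x_1)\subseteq K\setminus X$, whence $N_G(y)\cap N_H(x_1)\supseteq\{x_2,\dots,x_7\}$ already contradicts Lemma~\ref{lem:size}; the triangle analysis is unnecessary.

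The genuine gap is your final step. The plan you sketch---case-analysing how the distance-$2$ twins from $x_1$ distribute over the at most four consecutive neighbours of $x_1$---does not obviously force a forbidden minor: two distance-$2$ twins attaching to distinct neighbours of $x_1$ is locally harmless, and the contradiction must use the \emph{global} constraint that all $\binom{7}{2}$ pairs are within $H$-distance~$2$, which your local picture around $x_1$ does not capture. You also have not handled twins lying in different blocks (consecutiveness of $N_H(x_i)$ only puts the \emph{neighbours} of each $x_i$ in one block, not all the $x_i$ together), and this requires its own argument---the paper's Case~1. More fundamentally, your degree bound constrains only the twins, not the common-neighbour vertices in $K\setminus X$ that actually realise the distance-$2$ connections; the paper's chord argument is exactly what pins down such a high-degree common neighbour, and without an analogue of it your case analysis is unlikely to close.
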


\begin{proof}
First suppose that $X$ contains two vertices $x$ and $y$ that do not belong to the same block of $H$.  We claim that $x$ is a pendant vertex of $H$.   
Since $x$ and $y$ are adjacent in $G$, we find that $xu,yu\in E_H$ for a cut vertex $u$ that belongs to two blocks $F_x$ and $F_y$ of $H$ containing $x$ and $y$ respectively. 
To obtain a contradiction, assume that $x$ has a neighbour $z\neq u$ in $H$. Then $z$ is not in $F_y$. It follows that $zu\in E_H$, because $x$ and $y$ are true twins of $G$ and, therefore, $z\in N_G(y)$. By Lemma~\ref{obs:triangle}, $x$ or~$z$ has a neighbour $z'$ in $H$ with $z\notin N_H(u)$. In both cases, we find that $z'\in N_G(x)$, whereas $z\notin N_H(u)$ implies that $z\notin N_G(y)$. This is a contradiction to our assumption that~$x$ and~$y$ are true twins in~$G$.

Now suppose that all vertices of $X$ belong to the same block $F$ of $H$. Let $C$ be the boundary cycle of $F$. To choose some order, we 
let $x_1,\ldots,x_7$ be the vertices of $X$ numbered 
according to the clockwise order with respect to an arbitrary vertex of $C$. Because the vertices of $X$ are pairwise adjacent in $G$, $F$ has a chord $uv$ 
(where $u,v$ in $X$ is possible),
such that $X$ has vertices in both connected components of $F-\{u,v\}$. Among all such chords we choose $uv$ and a connected component $F'$ of $F-\{u,v\}$ in such a way that $F'$ contains the smallest number of vertices of $X$. Assume without loss of generality that $x_1\in V_{F'}$ and let $x_i,\ldots,x_j$ for some $1<i\leq j\leq 7$ be the vertices of $X$ in the other connected component $F''$ of $F-\{u,v\}$. Notice that $F''$ contains at least three vertices of $X$ by the choice of $uv$. Assume also that $v$ is after $x_1$ in the clockwise ordering  of the vertices of $C$ with respect to $u$.
Because the vertices of $X$ are adjacent in $G$, they are at distance at most~$2$ in $H$. As $F$ is outerplanar, this implies that for any $x\in X\cap V_{F'}$ and any $y\in X\cap V_{F''}$, $xu,uy\in E_H$ or $xv,vy\in E_H$.  

Suppose that $F'$ contains at least two vertices of $X$. By symmetry, we may assume that $x_1,x_2\in V_{F'}$. 
From our choice of $uv$, it follows that $x_1v, \notin E_H$ and $x_2u\notin E_H$, and thus $x_2v\in E_H$ and $x_1u\in E_H$. We find that $x_iu\in E_H$ and $x_jv\in E_H$, but these are intersecting chords of $C$; a contradiction. We conclude that $x_1$ is the unique vertex of $X$ in $F'$. 
This implies that $i\leq 3$ and $j\geq 6$, as it is possible that $u=x_7$ or $v=x_2$. 
We also deduce that $x_1u\in E_H$ or $x_1v\in E_H$. By symmetry we may  assume that $x_1u\in E_H$.

We now show that we may assume without loss of generality that~$u$ is adjacent to $x_1,\ldots,x_6$ in~$H$.
First suppose that $x_1v\notin E_H$. Then $x_hu\in E_H$ for every $h\in\{i,\ldots,j\}$. Hence, $u$ is adjacent to $x_1,\ldots,x_6$. 
Now suppose that $x_1v\in E_H$. We first show that $x_hu\in E_H$ for every $h\in\{i,\ldots,j\}$ or $x_hv\in E_H$ for every $h\in\{i,\ldots,j\}$. For contradiction, assume that there exists an index 
$s\in\{i,\ldots,j\}$ such that $x_su\notin E_H$ and an index $t\in\{i,\ldots,j\}$ such that $x_tv\notin E_H$. 
Since $x_su\notin E_H$, we find that $x_sv\in E_H$. As we cannot have intersecting chords in $C$, this means that $x_iu\notin E_H$ even if $i\neq s$. By using the same arguments with respect to index~$t$, we obtain $x_jv\notin E_H$ even if $j\neq t$. Because for $k\in \{i+1,j-1\}$, we have that $x_ku\in E_H$ or $x_kv\in E_H$ and $j-i\geq 3$, the distance between $x_i$ and $x_j$ in $H$ is at least~3 by the outerplanarity of $H$. As $x_i$ and $x_j$ are adjacent in $G$, we obtain a contradiction.  
Therefore, the claim holds. By symmetry, we may assume that $x_hu\in E_H$ for every $h\in\{i,\ldots,j\}$. Hence $u$ is adjacent to $x_1,\ldots,x_6$.

Let $y$ be the neighbour of $x_3$ on $C$ after $x_3$ in the clockwise order with respect to $u$. Note that $y\neq u$.
Because $x_3$ and $x_6$ are true twins of $G$, we have that $yx_6\in E_G$.
As $H$ is outerplanar 
and $u$ is adjacent to $x_4$ and $x_5$ in $H$, 
this means that 
$yu\in E_H$ and $yx_6\notin E_H$. 
We have that $uy,ux_3,yx_3\in E_H$. By Lemma~\ref{obs:triangle}, there is a vertex $z\neq u$ such that  either

\medskip
\noindent
 i) $x_3z\in E_H$ and $uz,yz\notin E_H$, or\\
 ii) $yz\in E_H$ and $uz,x_3z\notin E_H$.  

\medskip
\noindent
If $x_3z\in E_H$, then by the same arguments as for $y$, we find that $zu\in E_H$; a contradiction. 
Hence, we have that $yz\in E_H$. If $z\in \{x_1,\ldots,x_5\}$, then $zu\in E_H$; a contradiction.   
Therefore,   $z\notin \{x_1,\ldots,x_5\}$. Since $z$ is adjacent to $x_3$ in $G$, $z$ is a neighbour of $x_6$ in $G$.
As $H$ is outerplanar, the only possibility is that $x_4=y$ and moreover that $z$ lies on $C$ in between $x_4$ and $x_5$ and that $z$ is adjacent to $x_5$.
As $x_1$ and $x_5$ are true twins in $G$, we find that $z$ is adjacent to $x_1$ in $G$.
This means that $uz\in E_H$; a contradiction.  
We conclude that $x_3$ is a pendant vertex 
of~$H$. This completes the proof of the lemma.
\end{proof}

For the next lemmas, we need the following statement.

\begin{lemma}\label{lem:dist-cut}
Let $H$ be a square root of a graph $G$ and let $u$ be a cut vertex of $H$. Then for 
every $x,y\notin N_H[u]$ that are in distinct components of $H$, $\dist_{G-u}(x,y)\geq 3$.
\end{lemma}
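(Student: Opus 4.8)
The plan is to argue by contradiction: I will assume $\dist_{G-u}(x,y)\le 2$ and show that the resulting short path in $H$ is forced to run through the cut vertex~$u$, which in turn forces one of $x,y$ into $N_H(u)$, contradicting the hypothesis $x,y\notin N_H[u]$. The one structural fact I would isolate at the outset is that, since $u$ is a cut vertex of $H$ separating $x$ and $y$, these two vertices lie in distinct components of $H-u$, and more generally every path in $H$ between two vertices lying in different components of $H-u$ must pass through~$u$. Combined with the identity $G=H^2$ (so that $ab\in E_G$ if and only if $\dist_H(a,b)\le 2$), this is essentially the only ingredient needed.

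First I would rule out that $x$ and $y$ are adjacent in $G-u$, i.e.\ that $xy\in E_G$. In that case $\dist_H(x,y)\le 2$; but $x$ and $y$ lie in different components of $H-u$, so they are non-adjacent in $H$, forcing $\dist_H(x,y)=2$ via some middle vertex~$w$. Since every $(x,y)$-path in $H$ meets~$u$, we must have $w=u$, whence $xu,uy\in E_H$ and thus $x,y\in N_H(u)$, contradicting $x,y\notin N_H[u]$.

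Next I would treat the case where $x$ and $y$ have a common neighbour $z\neq u$ in $G-u$, so that $\dist_H(x,z)\le 2$ and $\dist_H(z,y)\le 2$. Here I would split on which component of $H-u$ contains~$z$. Since $x$ and $y$ lie in different components, $z$ must lie in a component different from at least one of them; say from~$x$ (the case of $y$ being symmetric). Then the $(x,z)$-path of length at most~$2$ crosses between two components of $H-u$ and hence runs through~$u$; as $xz\notin E_H$ (different components) this path has length exactly~$2$ with middle vertex~$u$, forcing $xu\in E_H$ and thus $x\in N_H(u)$, again a contradiction. As both cases are impossible, I conclude $\dist_{G-u}(x,y)\ge 3$.

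The main obstacle is not a genuine difficulty but a bookkeeping point: I must make sure the case split on the location of~$z$ is exhaustive and that in every branch the relevant length-$\le 2$ path is truly forced through~$u$. The subtlety to record is that such a path could a priori have length~$1$, but this is precisely ruled out because its two endpoints sit in different components of $H-u$. Once one observes that a length-$\le 2$ path crossing between two components of $H-u$ must be of the form $a$--$u$--$b$, every branch collapses to the single conclusion $a\in N_H(u)$, which is exactly the contradiction sought.
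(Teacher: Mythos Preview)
Your proof is correct. The paper's argument is organized slightly differently: instead of splitting on $\dist_{G-u}(x,y)\in\{1,2\}$, it takes a shortest $(x,y)$-path $P$ in $G-u$, observes that $P$ must contain an edge $x'y'$ with $x',y'$ in distinct components of $H-u$, and then shows (exactly as you do) that such an edge forces $x',y'\in N_H(u)$; since $x,y\notin N_H[u]$, the vertices $x',y'$ are both internal to $P$, giving length $\ge 3$ directly. This buys a single uniform argument in place of your two cases, but the underlying mechanism---an $E_G$-edge crossing between components of $H-u$ must be realized in $H$ by a length-$2$ path through $u$---is identical, and your case split is perfectly sound.
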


\begin{proof}
Consider a shortest $(x,y)$-path $P$ in $G-u$. Then $P$ contains at least one edge $x'y'$ such that $x'$ and $y'$ are in distinct components of $H-u$, because $u$ is a cut vertex of $H$ and $x,y$ are in distinct components of $H-u$. Clearly, $x'y'\notin E_H$ and, therefore,  $x'u,y'u\in E_H$, as $u$ is a cut vertex of $H$. Hence, $x',y'\in N_H(u)$. Since $x,y\notin N_H[u]$, the vertices $x'$ and $y'$ are pairwise distinct from $x$ and $y$. This means that neither $x'$ nor $y'$ is an end-vertex of $P$. We conclude that $P$ has length at least~$3$, that is,  $\dist_{G-u}(x,y)\geq 3$.
\end{proof}

We need the following two lemmas in order to be able to identify the edges incident to a vertex of sufficiently high degree in an outerplanar root.

\begin{lemma}\label{lem:three}
Let $G$ be a graph with a minimal outerplanar root $H$. Let $u\in V_G$ be such that there are three distinct vertices $v_1,v_2,v_3\in N_G(u)$ that are pairwise at distance at least~$3$ in $G-u$. Then for every $x\in N_G(u)$, it holds that 
$xu\in E_H$ if and only if  $\dist_{G-u}(x,v_i)\leq 2$ for every $i\in \{1,2,3\}$.     
\end{lemma}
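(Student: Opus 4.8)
The plan is to prove the two implications separately, handling the forward direction by direct square-root arithmetic and reserving the outerplanarity of $H$ for the reverse direction. First I would set up the common machinery. Since $v_1,v_2,v_3$ are pairwise at distance at least~$3$ in $G-u$, Lemma~\ref{lem:not-incl} applied to each pair yields $uv_i\notin E_H$ for every~$i$; as $uv_i\in E_G$ and $v_i\neq u$, each $v_i$ lies in $N_H^2(u)$, so there is a vertex $w_i$ with $uw_i,w_iv_i\in E_H$. I would record at the outset that the $w_i$ are pairwise distinct and satisfy $w_i\neq v_j$ for all $i,j$: a coincidence $w_i=w_j$ would produce the path $v_iw_iv_j$ of length~$2$ in $G-u$, and $w_i=v_j$ would force $v_j\in N_H(u)$, each contradicting a hypothesis.

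For the forward direction, assume $xu\in E_H$ and fix~$i$. Combining the edge $xu\in E_H$ with the first edge $uw_i$ of the length-$2$ path $uw_iv_i$ in~$H$ gives $\dist_H(x,w_i)\le 2$, hence $xw_i\in E_G$ whenever $x\neq w_i$; since $w_i\neq u$ and $w_iv_i\in E_G$, the walk $xw_iv_i$ (or the single edge $xv_i$ when $x=w_i$) avoids $u$ and certifies $\dist_{G-u}(x,v_i)\le 2$. This direction uses neither outerplanarity nor minimality.

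The engine for the reverse direction is the observation that closeness in $G-u$ lifts to closeness in $H-u$ for vertices of $N_H^2(u)$: if $a\in N_H^2(u)$ and $ac\in E_G$ with $c\neq u$, then the path of length at most~$2$ in $H$ realising $ac$ cannot run through~$u$, because its first edge would then be $au\notin E_H$; hence $\dist_{H-u}(a,c)\le 2$. Applying this along a shortest $(x,v_i)$-path in $G-u$ and using $x,v_i\in N_H^2(u)$ yields an $(x,v_i)$-path $P_i$ in $H-u$ of length at most~$4$, while $\dist_{H-u}(v_i,v_j)\ge\dist_{G-u}(v_i,v_j)\ge 3$ since $H-u$ is a subgraph of $G-u$. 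Thus, assuming $xu\notin E_H$, the union $Y=P_1\cup P_2\cup P_3$ is a connected subgraph of $H-u$ containing $v_1,v_2,v_3$, while $u$ reaches each $v_i$ through the distinct vertices $w_i$. If none of $w_1,w_2,w_3$ lies in~$Y$, then the five pairwise-disjoint connected sets $\{u\}$, $Y$, $\{w_1\}$, $\{w_2\}$, $\{w_3\}$ realise $K_{2,3}$ as a minor (hubs $\{u\}$ and $Y$, with each $w_i$ adjacent to $u$ and to $v_i\in Y$), contradicting Lemma~\ref{lem:out-minor}.

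The hard part is exactly the remaining case, where some $w_i$ lies on~$Y$, so that the three spokes and the connecting region collide. I would resolve it by showing that $v_1,v_2,v_3$ nonetheless lie in a common component of $H-u-\{w_1,w_2,w_3\}$, which lets $Y$ be re-chosen inside that component and restores the $K_{2,3}$ minor. Proving this is where I expect the real work to be: I would argue that the at most three vertices $w_1,w_2,w_3$ cannot separate the $v_i$ in $H-u$ while every $v_i$ simultaneously stays within $H-u$-distance~$4$ of the common vertex~$x$, using the pairwise bound $\dist_{G-u}(v_i,v_j)\ge 3$ to exclude the configurations in which a $w_j$ sits as a cut vertex between two of the $v_i$ at short range (such a configuration would shorten some $\dist_{G-u}(v_i,v_j)$ below~$3$). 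The associated degenerate subcases, in which the three $H-u$-paths meet at or immediately adjacent to some $v_i$, would be handled by the same distance estimates. In short, the pairwise-distance-$\ge 3$ hypotheses are used precisely to keep the three branches toward $v_1,v_2,v_3$ genuinely separated, which is what forces the forbidden $K_{2,3}$ minor and hence $xu\in E_H$.
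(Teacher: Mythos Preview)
Your forward direction is correct and cleaner than the paper's contrapositive appeal to Lemma~\ref{lem:not-incl}; the walk $x\,w_i\,v_i$ in $G-u$ is exactly the right witness. Your setup for the reverse direction is also sound: the observation that a $G$-edge with one endpoint in $N_H^2(u)$ lifts to an $(H-u)$-path of length at most~$2$ is correct, and it does yield $(x,v_i)$-paths $P_i$ in $H-u$ of length at most~$4$. The $K_{2,3}$ minor in the case $w_1,w_2,w_3\notin Y$ is valid.

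The gap is exactly where you locate it, and it is genuine. Your proposed fix---showing that $v_1,v_2,v_3$ lie in a common component of $H-u-\{w_1,w_2,w_3\}$---is stated but not proved, and the sketch you give (``such a configuration would shorten some $\dist_{G-u}(v_i,v_j)$ below~$3$'') does not go through as written: a $w_j$ can sit as a cut vertex of $H-u$ separating two of the $v_i$ without forcing them close in $G-u$, provided the arms on either side are long enough, and nothing in your distance hypotheses alone rules this out. What actually prevents it is outerplanarity itself, and extracting that requires the kind of block-and-boundary-cycle analysis you are trying to avoid. Every attempt to build a configuration with $w_j$ separating the $v_i$ while keeping $x$ close to all three and the $v_i$ mutually far seems to force a second hub opposite $u$, hence a $K_{2,3}$---but turning that intuition into a proof is essentially the whole difficulty of the lemma.

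The paper's proof takes a different route that sidesteps this obstacle entirely. Rather than building a forbidden minor, it works inside the (unique) block $F$ of $H$ containing $u$ and the relevant neighbour $y$ of $x$, places surrogates $v_i'$ for the $v_i$ on the boundary cycle of $F$, and argues directly with path lengths: any $(x,v_i)$-path in $H$ avoiding $u$ must traverse a boundary arc whose length, combined with the pairwise distance-$3$ constraint on the $v_i$, forces some such path to have length at least~$5$, hence $\dist_{G-u}(x,v_i)\ge 3$. This is more pedestrian but complete; your minor-based approach is conceptually appealing but would need substantial additional structural work---most likely converging on the same boundary-cycle analysis---to close the gap you identify.
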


\begin{proof}
Let  $x\in N_G(u)$.

Observe that if there is some $i\in\{1,2,3\}$ such that $\dist_{G-u}(x,v_i)\geq 3$, then $xu\notin E_H$ by Lemma~\ref{lem:not-incl}. 

Suppose that $xu\notin E_H$, that is, $x\notin N_H[u]$.
As $v_1,v_2,v_3$ are neighbours of $u$ in $G$ that are pairwise at distance at least~$3$ in $G-u$, it follows
from Lemma~\ref{lem:not-incl} that $uv_i\notin E_H$ for every $i\in\{1,2,3\}$.
We must show that there is some $i\in\{1,2,3\}$ such that $\dist_{G-u}(x,v_i)\geq 3$.  
Observe that this property trivially holds if $x=v_h$ for $h\in \{1,2,3\}$. 
Assume that $x\notin \{v_1,v_2,v_3\}$. 
From Lemma~\ref{lem:not-incl} it follows that $uv_i\notin E_H$ for every $i\in\{1,2,3\}$, 
that is, $v_1,v_2,v_3\notin N_H[u]$.

Assume that there is an index $i\in\{1,2,3\}$ such that $x$ and $v_i$ are in distinct connected components of $H-u$. Then 
from Lemma~\ref{lem:dist-cut}, it follows that  $\dist_{G-u}(x,v_i)\geq 3$.

\begin{figure}[ht]
\centering
\scalebox{1}{
\input{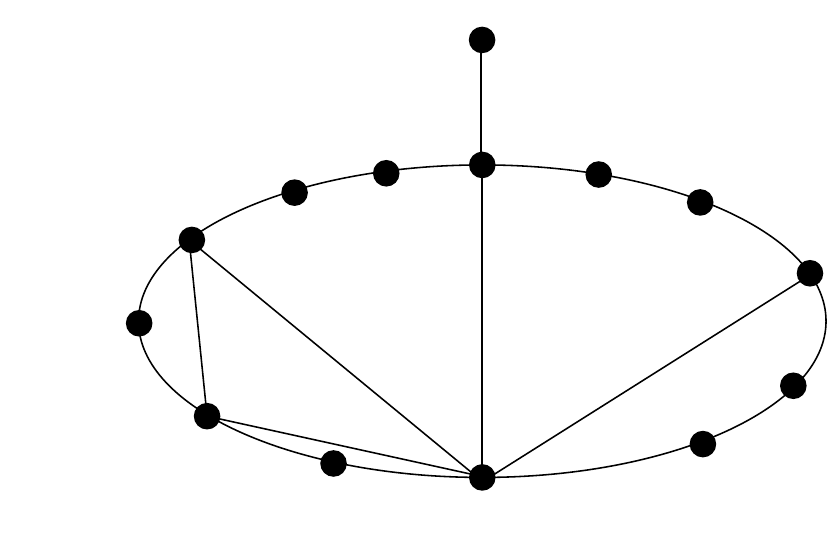_t}}
\caption{When $v_1,v_2,v_3$ are in the same connected component of $H-u$.
\label{fig:case-one-comp}}
\end{figure}

Now suppose that $v_1,v_2,v_3$ and $x$ are in the same connected component of $H-u$.  Since $xu\in E_G$, there is a 
vertex $y\in N_H(u)$ such that $xy\in E_H$. Let $F$ be
the
block of $H$ containing $u$ and $y$ and let $C$ be the boundary cycle of $F$. Because $v_1,v_2,v_3$ and $x$ are in the same connected component of $H-u$ 
and are neighbours of $u$ in $G$, 
each $v_i$ is either 

\medskip
\noindent
i) a vertex of $F$ and we let $v_i'=v_i$ in this case, or\\[2pt]
ii) $v_i\notin V_F$ and there is a unique $v_i'\in V_{F}$ such that $v_iv_i'\in E_H$ and $v_i'u\in E_H$\\
\hspace*{3mm} (see Figure~\ref{fig:case-one-comp} for an example). 

\medskip
\noindent
Assume that $v_1',v_2',v_3'$ are in clockwise order with respect to $u$ in~$C$. 
Let $L_1$ and $L_2$ denote the $(v_1',v_2')$ and $(v_2',v_3')$-paths in $C$ avoiding $u$, respectively.
As $v_1,v_2,v_3$ are at distance at least~$3$ from each other in $G-u$, we observe that $L_1$ and $L_2$ have length at least 3.
Moreover, if $v_1=v_1'$ or $v_2=v_2'$, then the length of $L_1$ is at least 4, and if $v_2=v_2'$ or $v_3=v_3'$, then the length of $L_2$ is at least 4.

Observe that for $i\in\{1,2,3\}$, a shortest $(x,v_i)$-path is $G-u$ is a shortest path in $P_i^2-u$ for some $(x,v_i)$-path $P_i$ in $H$. 
Suppose that there is $i\in\{1,2,3\}$ such that $P_i$ contains $u$. Then every shortest $(x,v_i)$-path $P'$ in $P_i^2-u$ contains an edge $vw$ for $v,w\in N_H(u)$. Since $x,v_i\notin N_H[u]$, we obtain that neither $v$ no $w$ in an end-vertex of $P'$. Therefore, $P'$ has length at least 3 and, therefore, $\dist_{G-u}(x,v_i)\geq 3$. Assume from now that $P_1,P_2,P_3$ do not contain $u$.  Observe that it is sufficient to show that one of these paths has length at least 5. Observe also that by outerplanarity, the inner vertices of each $P_i$ form a segment of $C$ avoiding $u$.

First suppose that $x$ is a vertex of $C$. Then $x\neq v_2'\in N_H(u)$ and, therefore, $x$
lies either before $v_2'$ or after $v_2'$ in  the clockwise order with respect to $u$. 
Assume without loss of generality that $x$ is before $v_2'$. Then $P_3$ contains $L_2$ and at least one edge that is before $v_2'$. 
If $v_3\neq v_3'$, then $P_3$ also contains
the edge~$v_3'v_3$.
We conclude that the length of $P_3$ is at least 5. Hence, $\dist_{G-u}(x,v_3)\geq 3$.

Now suppose that $x$ does not lie on $C$. Then $y$ is a cut vertex of $H$ and it holds that $y$ and $u$ are in different components of $G-y$.
First suppose that $y\neq v_2'$, that is, $y$ lies on $C$ either before $v_2'$ or after $v_2'$. By symmetry, we can assume that $y$ is before $v_2'$. 
Then $P_3$ contains $L_2$ and at least two other edges. We obtain that    the length of $P_3$ is at least 5 and $\dist_{G-u}(x,v_3)\geq 3$.
Now suppose that $y=v_2'$. Then $P_3$ contains $L_2$ and $xv_2'$. If $v_3\neq v_3'$, then $P_3$ also contains $v_3'v_3$. We have that the length of $P_3$ is at least 5 and therefore, $\dist_{G-u}(x,v_3)\geq 3$.
\end{proof}

\begin{lemma}\label{lem:high} 
Let $G$ be a graph with a minimal outerplanar root $H$ such that any vertex has at most seven pendant neighbours in $H$. Let $u$ be a vertex with at least 22 neighbours in $H$. 
Then there are three distinct vertices $v_1,v_2,v_3\in N_G(u)$ that are pairwise at distance at least~$3$ in $G-u$. 
\end{lemma}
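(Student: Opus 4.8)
The plan is to produce the three required vertices inside the second neighbourhood $N_H^2(u)$: every $z\in N_H^2(u)$ lies at distance exactly~$2$ from $u$ in $H$ and hence satisfies $uz\in E_G$, i.e.\ $z\in N_G(u)$. Thus it suffices to exhibit three vertices of $N_H^2(u)$ that are pairwise at distance at least~$3$ in $G-u$. The basic separation tool is Lemma~\ref{lem:dist-cut}: any two vertices of $N_H^2(u)$ lying in distinct connected components of $H-u$ are automatically at distance at least~$3$ in $G-u$, since they are outside $N_H[u]$. So I would first reduce the statement to finding ``deep'' vertices (vertices of $N_H^2(u)$) that are spread across the components of $H-u$, or else suitably separated inside a single block.

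The components of $H-u$ are exactly the branches at $u$, one per block of $H$ containing $u$. Call a component \emph{good} if it contains a deep vertex, and \emph{bad} otherwise. I claim a bad component can only be a $K_2$-block whose non-$u$ endpoint is a pendant vertex of $H$. Indeed, a $K_2$-block with non-pendant endpoint $p$ gives a neighbour $q\neq u$ of $p$ with $q\notin N_H(u)$, so $q\in N_H^2(u)$; and every biconnected block $F$ containing $u$ is good, because either some cycle vertex of $F$ is not adjacent to $u$ — then the first such vertex clockwise from $u$ lies in $N_H^2(u)$ — or all cycle vertices of $F$ are adjacent to $u$, in which case applying Lemma~\ref{obs:triangle} to a triangle $ux_jx_{j+1}$ of $F$ yields a vertex outside $N_H[u]$ adjacent to $x_j$, hence in $N_H^2(u)$. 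By hypothesis $u$ has at most seven pendant neighbours, so there are at most seven bad components.

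With this in hand the argument splits according to the number $g$ of good components. If $g\geq 3$, one deep vertex from three distinct good components finishes the proof via Lemma~\ref{lem:dist-cut}. If $g\leq 2$, then all remaining components are bad and contribute at most seven neighbours of $u$ in total, so the at most two good components together contain at least $22-7=15$ neighbours of $u$; consequently one biconnected block $F$ carries at least $\lceil 15/2\rceil=8$ (and at least $15$ when $g=1$) neighbours of $u$ on its boundary cycle. It then remains to find, inside such a block, two (resp.\ three) deep vertices that are pairwise at distance at least~$3$ in $G-u$; when $g=2$ the remaining deep vertex is taken from the other good component and is automatically far by Lemma~\ref{lem:dist-cut}.

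The crux is the within-block claim: if the boundary cycle $C$ of $F$ carries the neighbours $x_1,\dots,x_d$ of $u$ in clockwise order, and $z,z'$ are deep vertices sitting in gaps of $C$ separated by sufficiently many of the $x_j$, then $\dist_{G-u}(z,z')\geq 3$. To prove this I would assume $\dist_{G-u}(z,z')\leq 2$, so that $zz'\in E_G$ or $z,z'$ have a common $G$-neighbour $c\neq u$, and in each case extract a short $H$-walk from $z$ to $z'$ avoiding $u$; since the chords $ux_j$ are pairwise non-crossing in the outerplanar embedding, such a walk must pass through one of the separating $x_j$ or cross a chord, contradicting outerplanarity (Lemma~\ref{lem:out-minor}) once the separation exceeds the bounded reach of a deep vertex. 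That reach is controlled by Lemmas~\ref{lem:cons} and~\ref{lem:size}: the set $N_G(z)\cap N_H(u)\in S(H,u)$ is consecutive with respect to $u$ and has size at most~$4$, so a deep vertex sees only a consecutive block of at most four of the $x_j$. The main obstacle is precisely this within-block step, where the bounded reach must be turned into the explicit separation needed to place two or three gaps around the cycle; the outer case analysis only requires that after discarding the at most seven bad components at least~$15$ neighbours survive in the good components, which is exactly what the bound $22$ guarantees.
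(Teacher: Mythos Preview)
Your plan is correct and matches the paper's proof: the paper sets $X=N_H(u)\setminus P$ with $|X|\ge 15$, splits into the cases where $X$ meets at least three, exactly two, or one component of $H-u$, obtains the required deep vertices in $N_H^2(u)$ via Lemma~\ref{lem:adj} (which your good/bad-component argument re-derives), separates across components by Lemma~\ref{lem:dist-cut}, and handles the within-block case using the bounded consecutive reach of Lemma~\ref{lem:del} (i.e.\ your Lemmas~\ref{lem:cons} and~\ref{lem:size}) to place $v_1$ near $x_1$, $v_2$ near $x_8$, and (when needed) $v_3$ near $x_k$. The only part you leave as a sketch---the within-block distance bound---is exactly where the paper also argues informally by locating each $v_i$ on the boundary cycle or in a side block attached near its anchor $x_j$; your outline points at the same mechanism.
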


\begin{proof}
Let $P$ be the set of pendant neighbours of $u$ in $H$. Let $X=N_H(u)\setminus P$. Notice that $|X|\geq 15$, because $|P|\leq 7$.

First suppose that the vertices of $X$ belong to at least three connected components of $H-u$. Then there are three distinct blocks $F_1$, $F_2$ and $F_3$ of $H$ containing $u$ and at least~one vertex of $X$ each. 
By Lemma~\ref{lem:adj}, there are vertices 
$v_1,v_2,v_3\in N_H^2(u)$ 
such that $v_i$ is adjacent to a vertex of 
$V_{F_i}\cap X$
in $G$ for $i\in\{1,2,3\}$. Recall that $v_1$, $v_2$ and $v_3$ are in distinct connected components of $H-u$. We find that 
$v_1$, $v_2$, $v_3$ are pairwise at distance at least~3 in $G-u$ 
by Lemma~\ref{lem:dist-cut}.

Now suppose that the vertices of $X$ belong to exactly two connected components of $H-u$. Then there are two blocks $F_1$ and $F_2$ of $H$ containing $u$ and at least~one vertex of $X$ each. Since $|X|\geq 15$, we can assume that $F_1$ contains at least eight vertices of $X$, which we denote by $x_1,\ldots,x_k$ for $k\geq 8$ in clockwise order in the boundary cycle of $F_1$ with respect to $u$. By Lemma~\ref{lem:adj}, there are vertices 
$v_1,v_2\in N_H^2(u)$
such that $v_1$ is adjacent to $x_1$ in $G$ and $v_2$ is adjacent to $x_k$ in $G$. 
By Lemma~\ref{lem:del}~(ii), we find that
$v_1$ is not adjacent to $x_5,\ldots,x_k$ in $G$ and that $v_k$ is not adjacent to $x_1,\ldots,x_{k-4}$ in $G$. 
We observe that $v_1$ is either lying on the boundary cycle of $F_1$ or belongs to some other block of $H$ containing $x_1$ or $x_2$. Similarly, $v_2$ is either lying on the boundary cycle of $F_1$ or belongs to some other block of $H$ containing $x_{k-1}$ or $x_k$, respectively. Then $\dist_{G-u}(v_1,v_2)\geq 3$ 
(distance~3 is possible if $v_1$ lies on the boundary cycle between $u$ and $x_1$, and $v_2$ lies on the boundary cycle between
$x_k$ and $u$).
By Lemma~\ref{lem:adj}, there exists a vertex
$v_3\in N_H^2(u)$ 
such that $v_3$ is adjacent to a vertex of $F_2$ in $G$. 

Then $v_3$ is in a connected component of $H-u$ distinct from the connected component of $H-u$ to which $v_1$ and $v_2$ belong.
Hence $v_3$ is  at distance at least~3 from $v_1$ and $v_2$ in $G-u$ by Lemma~\ref{lem:dist-cut}.

Finally suppose that the vertices of $X$ all belong to the same connected component of $H-u$. That is, all
vertices of $X$ are in the same block $F$ of $H$, which also contains $u$. Denote them by $x_1,\ldots,x_k$ in their order in the clockwise order in the boundary cycle of $F$ with respect to $u$. By Lemma~\ref{lem:adj}, there exist vertices 
$v_1,v_2,v_3\in N_H^2(u)$ 
such that $v_1$ is adjacent to $x_1$ in $G$, $v_2$ is adjacent to $x_8$ and $v_3$ is adjacent to $x_k$. 
By Lemma~\ref{lem:del}~(ii), we find that
$v_1$ is not adjacent in $G$ to $x_5,\ldots,x_k$; $v_2$ is not adjacent to $x_1,\ldots,x_4$ and $x_{k-3},\ldots,x_k$; and $v_k$ is not adjacent to $x_1,\ldots,x_{k-4}$. Each of $v_1$, $v_2$ and $v_2$ is either lying on the boundary cycle of $F$ or is in another block of $H$ containing $x_1$ or $x_2$; $x_7$ or $x_8$ or $x_9$; or $x_{k-1}$ or $x_k$, respectively. This means that
$v_1,v_2,v_3\in N_G(u)$ are pairwise at distance at least~$3$ in $G-u$. 
\end{proof}

The next and final lemma of Section~\ref{sec:tech} will be crucial for our algorithm. In order to state it, we need to introduce some additional notation. Let $H$ be a minimal outerplanar root of a graph $G$, such that each vertex of $H$ is adjacent to at most seven pendant vertices. Let $U$ be the set of vertices that have degree at least~$22$ in~$H$.
 For every $u\in U$ and every block~$F$ of~$H$ containing~$u$, we 
consider the set  $X=N_H(u)\cap V_F$ and denote the vertices of $X$ by $x_1,\ldots,x_k$, where these vertices are numbered in the clockwise order with respect to $u$ in the boundary cycle of $F$.  Then we modify $G$ as follows:
\begin{itemize}
\item for  $i,j\in\{1,\ldots,k\}$ with $|i-j|\geq 4$, delete the edge $x_ix_j$ from $G$ (note that this edge exists in $G$);
\item for $i\in \{1,\ldots,k\}$ and  $y\in N_H(u)\setminus V_F$, delete the edge $x_iy$ from $G$ (note that this edge exists in $G$).
\end{itemize}
We denote the resulting graph by $G_H$; observe that $G_H$ is a spanning subgraph of $G$. 

In our final structural lemma we prove that $\tw(G_H)\leq 3\cdot 42^3$.

\begin{lemma}\label{lem:bound-tw} 
Let $G$ be a graph with a minimal outerplanar root~$H$, such that each vertex of $H$ is adjacent to at most seven pendant vertices. 
Then  $\tw(G_H)\leq 3\cdot 42^3$.
\end{lemma}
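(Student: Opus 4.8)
The plan is to reduce the statement to Lemma~\ref{lem:tw-power} applied to a bounded-degree outerplanar graph. Since $G=H^2$ and $H$ is outerplanar, Lemma~\ref{obs:tw-op} gives $\tw(H)\le 2$, so one is tempted to apply Lemma~\ref{lem:tw-power} with $k=2$ directly to $H$; this fails because the vertices of $U$ make $\Delta(H)$ arbitrarily large. My strategy is therefore to build from $H$ an auxiliary \emph{outerplanar} graph $H'$ of bounded maximum degree such that $G_H$ is a minor of a \emph{fixed} power $(H')^k$, and then to combine Lemma~\ref{lem:tw-power} with the minor-monotonicity of treewidth (Lemma~\ref{obs:minor}).

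First I would construct $H'$ by \emph{splitting} every high-degree vertex $u\in U$. Replace $u$ by a path $P_u$ and distribute the incident structure of $u$ along $P_u$: for each block $F$ containing $u$, attach the neighbours $x_1,\dots,x_k=N_H(u)\cap V_F$, taken in the boundary-cycle order of $F$, to \emph{consecutive} vertices of $P_u$ (using single, non-overlapping attachments), chain the sub-paths coming from the different blocks of $u$ together, and hang the at most seven pendant neighbours of $u$ on the corresponding path vertices. The crucial point is that, by Lemma~\ref{lem:del}, the graph $G_H$ has already had exactly those distance-two edges through $u$ deleted that join neighbours lying far apart on a boundary cycle ($|i-j|\ge 4$) or in different blocks; this is precisely what allows the attachments to be made without forcing two path vertices to share three common neighbours, so that no $K_{2,3}$ or $K_4$ minor is created and $H'$ stays outerplanar (Lemma~\ref{lem:out-minor}), whence $\tw(H')\le 2$. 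A count of the path edges, the single attachments, and the (at most seven) pendants incident to each vertex should then yield $\Delta(H')\le 42$.

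Next I would verify that $G_H$ is a minor of $(H')^k$ for a fixed $k$ with $\lfloor k/2\rfloor+1=3$ (so $k\in\{4,5\}$). Contracting each path $P_u$ back to a single vertex $u$ turns $(H')^k$ into a graph on $V_{G_H}$; it remains to check that every edge of $G_H$ appears. Edges of $H$ survive as paths of length~$1$; the distance-two edges of $G_H$ whose common neighbour is \emph{not} a vertex of $U$ survive as paths of length~$2$; and a surviving edge $x_ix_j$ with $|i-j|\le 3$ corresponds, after the splitting, to a path through $P_u$ of length at most~$k$, so it lies in $(H')^k$, whereas a deleted edge ($|i-j|\ge 4$, or cross-block) corresponds only to paths of length exceeding~$k$ and is correctly absent. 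Thus $G_H$ is a spanning subgraph of a contraction of $(H')^k$, that is, a minor of $(H')^k$.

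Finally, Lemma~\ref{lem:tw-power} gives $\tw((H')^k)\le(\tw(H')+1)\,\Delta(H')^{\lfloor k/2\rfloor+1}\le 3\cdot 42^{3}$, and Lemma~\ref{obs:minor} yields $\tw(G_H)\le\tw((H')^k)\le 3\cdot 42^{3}$. The main obstacle I anticipate is the construction and verification of $H'$ in the second step: I must simultaneously split \emph{all} vertices of $U$ (which may be adjacent to one another, share blocks, and carry pendants) while keeping the result outerplanar and of degree at most~$42$, and then confirm that the distances in $H'$ reproduce the edge set of $G_H$ exactly, neither losing a kept edge nor recreating a deleted one. The outerplanarity constraint is what forces single rather than overlapping attachments, and hence the use of a power $k\ge 4$, which is the source of the exponent~$3$ in the bound.
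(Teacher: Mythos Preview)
Your proposal is correct and follows essentially the same route as the paper: replace each $u\in U$ by a path whose vertices receive the neighbours of $u$ block by block in boundary-cycle order, observe that the resulting graph $\hat H$ is outerplanar with $\Delta(\hat H)\le 42$, show that $G_H$ is a minor of $\hat H^4$ by contracting these paths back, and conclude via Lemmas~\ref{lem:tw-power} and~\ref{obs:minor}. One small simplification: you do not need to check that deleted edges are ``not recreated''---it suffices that $G_H$ be a \emph{subgraph} of the contraction of $(H')^4$, so extra edges in the power are harmless.
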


\begin{proof}
We first do the following for each vertex $u\in U$ 
(see Figure~\ref{fig:modif} for an example):
\begin{itemize}
\item Let $F_1,\ldots,F_r$ be the blocks of $H$ containing $u$.
\item For each $i\in\{1,\ldots,r\}$, denote by $x_1^i,\ldots,x_{k_i}^i$ the neighbours of $u$ in $F_i$ numbered according to the clockwise ordering with respect to $u$ in the boundary cycle of $F_i$. Assume that $x_0,\ldots,x_k$ is the ordering of $N_H(u)$ obtained by the consecutive concatenation of the sequences $x_1^i,\ldots,x_{k_i}^i$ for $i=1,\ldots,r$.
\item Modify $H$ as follows: delete $u$ from $H$ and add  a path $u_1\ldots u_k$ such that $u_i$ adjacent to $x_{i-1}$ and $x_i$ for $i=1,\ldots,k$.
\end{itemize}
Let $\hat{H}$ be the graph obtained by the above procedure. 
Note that the procedure modifies vertices and degrees of vertices of $H$.

\begin{figure}[ht]
\centering
\scalebox{0.78}{
\input{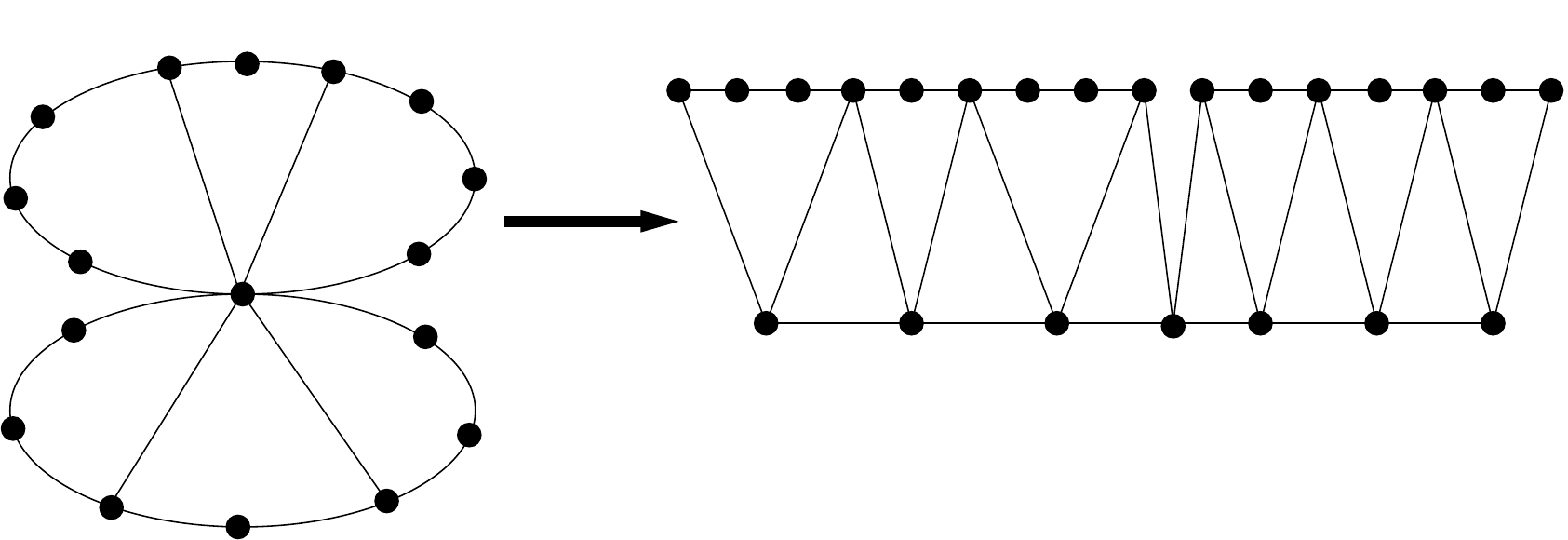_t}}
\caption{An example of the modification of $H$ for one vertex~$u$, where $r=2$. 
If there were three blocks, the third block would have been placed after the second block and so on.
\label{fig:modif}}
\end{figure}

We observe that $H$ is a contraction of $\hat{H}$, as $H$ can be obtained from $\hat{H}$ by contracting the path $u_1\ldots u_k$ constructed for each $u\in U$. Moreover, $\hat{H}$ is an outerplanar graph, because each step maintains outerplanarity (see Figure~\ref{fig:modif}). By Lemma~\ref{obs:tw-op}, we find that $\tw(\hat{H})\leq 2$.

We are first going to prove that $\Delta(\hat{H})\leq 42$. Let $v\in V_{\hat{H}}$. Suppose first that $v\in V_H\setminus U$. We have that $d_H(v)\leq 21$ and, in particular, $v$ has at most~$21$ neighbours in $U$ in the graph $H$. In the construction of $\hat{H}$, each neighbour of this type is replaced by two neighbours and all other neighbours remain the same. Therefore, $d_{\hat{H}}(v)\leq 42$. Suppose now that $v$ is a vertex of one of the paths $u_1\ldots u_k$ constructed for $u\in U$. When $u$ is replaced by $u_1\ldots u_k$,  the degree of each vertex $u_i$ is at most~$4$, and at most two neighbours of $u_i$ are modified in the subsequent construction steps. This implies that $d_{\hat{H}}(v)\leq 6$.

We are now going to prove that $G_H$ is a minor of $\hat{H}^4$. 
Let $\hat{G}$ be the graph obtained from  $\hat{H}^4$ after contracting each constructed path $u_1\ldots u_k$ into a single vertex, which we denote by $u$ again. Hence $V_{\hat{G}}=V_G$. We show that $G_H$ is a subgraph of $\hat{G}$. 

We already observed that $H$ can be obtained from $\hat{H}$ by contracting paths $u_1\ldots u_k$ constructed for $u\in U$. Hence, each edge 
of $G_H$ that is an edge
of $H$ is an edge of $\hat{G}$. Let $xy$ be an edge of $G_H$ that is not an edge of $H$. Then there is 
a vertex $u\in V_G$ such that $xu,yu\in E_H$. Denote by $X'$ and~$Y'$, respectively, the sets of vertices of $\hat{H}$ that are contracted to $x$ and $y$ 
in $\hat{G}$, 
respectively. If $u\notin U$, then by the construction of $\hat{H}$, there are 
vertices $x'\in X'$ and $y'\in Y'$ such that $x'u,y'u\in E_{\hat{H}}$. Hence, $x'y'\in \hat{H}^4$ and thus $xy\in E_{\hat{G}}$. Suppose that $u\in U$. By the definition of $G_H$, the vertices $x$ and $y$ are in the same block $F$ of $H$. Denote by $z_1,\ldots,z_k$ the vertices of $N_H(u)$ in $F$ in the clockwise order with respect to $u$ along the boundary cycle of $F$. We have that $x=z_i$ and $y=z_j$ for some $i,j\in \{1,\ldots,k\}$. By the definition of $G_H$, 
$|i-j|\leq 3$. 
By the construction of $\hat{H}$, there are 
vertices $x'\in X'$ and $y'\in Y'$ that are 
joined by the path 
$x'u_{i+1}\ldots u_jy'$ in $\hat{H}$.
Since this path has length at most~$4$, 
we find that $x'y'\in \hat{H}^4$, and therefore, $xy\in E_{\hat{G}}$.  

Since $G_H$ is a subgraph of $\hat{G}$ and $\hat{G}$ is a contraction of $\hat{H}^4$, we conclude that $G_H$ is a minor of $\hat{H}^4$. 
Since $G_H$ is a minor of $\hat{H}^4$, we find that $\tw(G_H)\leq \tw(\hat{H}^4)$ by Lemma~\ref{obs:minor}. Because $\hat{H}$ is outerplanar, $\tw(\hat{H})\leq 2$ by Lemma~\ref{obs:tw-op}, and because $\Delta(\hat{H})\leq 42$, $\tw(\hat{H}^4)\leq (\tw(\hat{H})+1)\cdot 42^3$ by Lemma~\ref{lem:tw-power}. Hence, $\tw(G_H)\leq 3\cdot 42^3$. 
\end{proof}

\subsection{The Algorithm}\label{sec:alg}
In this section, we construct our $O(n^4)$-time algorithm for \textsc{Outerplanar Root}, that is, we are now ready to prove Theorem~\ref{thm:outerplanar}.

\medskip
\noindent
{\bf  Theorem~\ref{thm:outerplanar} (restated).}
{\it \textsc{Outerplanar Root} can be solved in $O(n^4)$ time.}

\begin{proof}
Let $G$ be the input graph.  
We may assume without loss of generality that $G$ is connected and has $n\geq 2$ vertices. 
We first exhaustively apply the following rule in order to
reduce the number of pendant vertices adjacent to the same vertex in a (potential) outerplanar root of $G$.

\medskip
\noindent
{\bf Deleting a simplicial true twin.} 
If $G$ has a set $X$ of simplicial true twins of size at least~8, then delete an arbitrary
vertex $u\in X$ from $G$.

\smallskip
\noindent
The following claim shows that this rule is safe.

\medskip
\noindent
{\bf Claim 1.}
{\it If $G'=G-u$ is obtained from $G$ by the application of {\bf deleting a simplicial true twin}, then $G$ has an outerplanar root if and only if $G'$ has an outerplanar root.}

\medskip
\noindent
We prove Claim~1 as follows.
First suppose that $G$ has an outerplanar root~$H$, which we may assume to be minimal.
By Lemma~\ref{lem:pend}, $H$ has a pendant vertex $u\in X$. It is readily seen that $H'=H-u$ is an outerplanar root of $G'$. 
Now suppose that $G'$ has an outerplanar root~$H'$, which we may assume to be minimal. 
By  Lemma~\ref{lem:pend}, $H'$ has a pendant vertex $w\in X\setminus \{u\}$, since the vertices of $X\setminus\{u\}$ are simplicial true twins of $G'$ and  $|X\setminus\{u\}|\geq 7$. Let $v$ be the unique neighbour of $w$ in $H'$. We construct $H$ from $H'$ by adding $u$ and making $u$ adjacent to $v$. It is readily seen that $H$ is an outerplanar root of $G$. 
This proves Claim~1.

\medskip
\noindent
For simplicity, we call the graph obtained by the exhaustive application of {\bf deleting a simplicial true twin} $G$ again. 
The next claim immediately follows from the observation that any two pendant vertices of a square root $H$ of $G$ adjacent to the same vertex in $H$ are simplicial true twins of $G$. 

\medskip
\noindent
{\bf Claim~2.} {\it Every outerplanar root of $G$ has at most seven pendant vertices adjacent to the same vertex.}

\medskip
\noindent
In the next stage of our algorithm we are going to label some edges of $G$ \emph{red} or \emph{blue} in such a way that the red edges are included in every minimal outerplanar root of $G$, whereas the  blue edges are excluded from any minimal outerplanar root of $G$. Let $R$ be the set of red edges and $B$ be the set of blue edges. We will also construct a set of vertices~$U$ 
of~$G$ such that for every $u\in U$, 
all edges incident to $u$ are labeled red or blue. 

\medskip
\noindent
{\bf Labeling edges.} Set $U=\emptyset$, $R=\emptyset$ and $B=\emptyset$. For each $u\in V_G$ such that there are three distinct vertices $v_1,v_2,v_3\in N_G(u)$ that are at distance at least~$3$ from each other in $G-u$, do the following:
\begin{itemize}
\item[(i)] set $U=U\cup\{u\}$;
\item[(ii)] set $B'=\{ux\in E_G\mid \text{there is an } 1\leq i\leq 3\text{ such that }\dist_{G-u}(x,v_i)\geq 3\}$;
\item[(iii)] set $R'=\{ux\mid x\in N_G(u)\}\setminus B'$;
\item[(iv)] set $R=R\cup R'$ and $B=B\cup B'$;
\item[(v)] if $R\cap B\neq\emptyset$, then return a no-answer and stop.
\end{itemize}
\noindent
Note that the above rule does not change the graph $G$ itself. 
Lemmas~\ref{lem:three} and \ref{lem:high}, combined with Claim~2, imply the following claim.

\medskip
\noindent
{\bf Claim~3.}
{\it If $G$ has a minimal outerplanar root~$H$, then
{\bf labeling edges} does not stop in step~{\rm (v)}. Moreover, $R\subseteq E_H$ and $B\cap E_H=\emptyset$, and every vertex $u\in V_G$ with $d_H(u)\geq 22$ is included in $U$.}

\medskip
\noindent
Next, we are going to find, for each $u\in U$, a set~$S$ of edges $xy$ with $xu,yu\in R$ that may be removed from~$G$.
This way we will reduce the treewidth of $G$.

\medskip
\noindent
{\bf Deleting irrelevant edges.} Set $S=\emptyset$. For every vertex $u\in U$ and every pair of distinct vertices $x,y\in N_G(u)$ such that $xu,uy\in R$ do the following:
\begin{itemize}
\item[(i)] if $xy\notin E_G$, then return a no-answer and stop;
\item[(ii)] if
 there is no  $v\in N_G(u)$ such that $vu\in B$ and $x,y\in N_G(v)$,
then include $xy$ in $S$;
\item[(iii)] if $R\cap S\neq\emptyset$, then return a no-answer and stop; 
\item [(iv)] remove the edges of $S$ from $G$.
\end{itemize}

\noindent
By combining Lemma~\ref{lem:non-edges} with Claim~3 we obtain the following claim.

\medskip
\noindent
{\bf Claim 4.}
{\it If $G$ has a minimal outerplanar root~$H$, then {\bf deleting irrelevant edges} does not stop in step~{\rm (i)} or~{\rm (iii)}, and moreover,  $S\cap E_H=\emptyset$.}

\medskip
\noindent
Assume that we have not returned a no-answer after the execution of {\bf deleting irrelevant edges}.
Let $G'=G-S$. Because of the edge deletions, a square root of $G$ may not be a square root of $G'$ and vice versa. Nevertheless, the edge labels and the properties of the edges of $S$ allow us to recover the structure of square roots of $G$ from $G'$. In order to show this, we prove the following claim.

\medskip
\noindent
{\bf Claim 5.} 
{\it The graph $G$ has an outerplanar root if and only if there is a set $L\subseteq E_{G'}$ such that 
\begin{itemize}
\item[(i)] $R\subseteq L$ and $B\cap L=\emptyset$;
\item[(ii)]  for every $xy\in E_{G'}$, $xy\in L$ or there exists  a vertex~$z\in V_{G'}$ with $xz,zy\in L$;
\item[(iii)] for every two distinct edges $xz,yz\in L$, 
$xy\in E_{G'}$ or there is a vertex $u\in U$ with $xu,uy\in R$; and
\item[(iv)] the graph $H=(V_G,L)$ is outerplanar.
\end{itemize}}

\noindent
We prove Claim~5 as follows.
First suppose that $H$ is a minimal outerplanar root of $G$. By Claim~4 we find that
$E_H\cap S=\emptyset$, that is, $E_H\subseteq E_{G'}$. 
Let $L=E_H$.
Then (i) holds due to Claim~3, whereas (ii) and (iv) hold because $H=(V_G,L)$ is an outerplanar root of $G$. 
To prove (iii) suppose that $xz$ and $zy$ are distinct edges of $L$ such that $xy\notin E_{G'}$.
As $H=(V_G,L)$ is a square root of $G$, this means that $xy\in E_G\setminus E_{G'}$, that is, $xy\in S$.
By definition of the rule {\bf deleting irrelevant edges}, this means that there must exist a vertex $u\in U$ such that $xu,uy\in R$.

Now suppose that there is a subset $L\subseteq E_{G'}$ such that (i)--(iv) hold.
Let $xy\in E_G$. If $xy\in E_{G'}$, then  $xy\in L$ or there is a vertex $z\in V_{G'}$ such that $xz,yz\in L$ by~(ii).
If $xy\in E_G\setminus E_{G'}=S$, then there is a vertex $u\in U$ such that $xu,uy\in R$ by 
(iii).
As $R\subseteq L$ by~(i), 
we find that $xu,uy\in L$. Hence $G$ is a subgraph of $(V_G,L)^2$. As $L\subseteq E_{G'}$, we find that $G=(V_G,L)^2$.
We conclude that $H=(V_G,L)$ is a square root of $G$. By (iv) we find that $H$ is an outerplanar root of $G$.
Hence we have proven Claim~5.

\medskip
\noindent
It remains to check the existence of a set of edges $L$ satisfying (i)--(iv) of Claim~5 for a given triple $G'$, $R$, $B$, which is the final step of the algorithm. Notice that, if $G$ has a minimal outerplanar root $H$, then $G'$ is a subgraph of the graph $G_H$ constructed in Section~\ref{sec:tech}; this is due to 
Lemmas~\ref{lem:non-edges} and~\ref{lem:del}. 
By Lemma~\ref{lem:bound-tw},  we have that
$\tw(G_H)\leq 3\cdot 42^3$. Hence we must return a no-answer and stop if $\tw(G')>3\cdot 42^3$.

Now suppose  $\tw(G')\leq 3\cdot 42^3$.
It is straightforward to verify that properties (i)--(iv) in Claim~5 can be expressed in MSO. In particular, to express outerplanarity in (iv), we combine  
Lemma~\ref{lem:out-minor} with Lemma~\ref{l-engel}. Afterwards we use Lemma~\ref{l-courcelle}.

The correctness of our algorithm follows from the above description and proofs of Claims~1--5.
It remains to evaluate the running time of our algorithm, which we do below.

\medskip
\noindent
It is well-known that the classes of true twins can be constructed in linear time (see, for example,~\cite{GHL}).
Then we can check whether each class contains simplicial vertices in $O(n^2)$ time.
Therefore, the exhaustive application of {\bf deleting a simplicial true twin} costs 
$O(n^2)$ time.
For every vertex $u$, we can compute the distances between the vertices of $N_G(u)$ in $G-u$ in $O(n^3)$ time. 
This implies that {\bf labeling edges} can be done in $O(n^4)$ time. 
Applying {\bf deleting irrelevant edges} takes $O(n^4)$ time as well, as it takes $O(n^2)$ to process a pair $x,y$ and the number of 
such pairs is $O(n^2)$. We construct $G'$  in linear time. Finally, checking whether $\tw(G')\leq 3\cdot 42^3$ and deciding whether there is a set of edges $L$ satisfying the required properties can be done in linear time by Lemma~\ref{l-bod} and~\ref{l-courcelle}, respectively. 
Hence the total running time is $O(n^4)$. This completes the proof of Theorem \ref{thm:outerplanar}.
\end{proof}

We conclude the section by the remark that instead of merely checking the existence of
a set $L$ as in Claim~5, we can also find $L$ if it exists. We can do this by constructing a dynamic programming algorithm for graphs of bounded treewidth 
(see~\cite{CochefertCGKP13} for a sketch of such an approach).
Hence, if $G$ has an outerplanar root, then we can find it in polynomial time.

\section{Roots of Pathwidth at Most~2}\label{sec:pw}
We say that a square root $H$ of $G$ is a \emph{pathwidth-$2$ root} if $H$ has pathwidth at most~2, and we define the following problem:

\problemdef{\sc Pathwidth-2 Root}{a graph $G$.}{does $G$ have a pathwidth-$2$ root?}

The main result of this section is the following theorem.

\begin{theorem}\label{thm:pw2}
\textsc{Pathwidth-2 Root} can be solved in $O(n^6)$ time.
\end{theorem}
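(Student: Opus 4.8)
The plan is to mirror the structure of the outerplanar case (Theorem~\ref{thm:outerplanar}) exactly, following the three-step framework from Section~\ref{sec:idea}, but replacing the outerplanarity-specific structural lemmas with analogues tailored to pathwidth at most~$2$. As with outerplanar roots, I would first restrict attention to \emph{minimal} pathwidth-$2$ roots via Lemma~\ref{obs:min}, which applies because pathwidth at most~$2$ is closed under vertex and edge deletion (Lemma~\ref{obs:minor} guarantees pathwidth is minor-monotone). The engine of the argument is the same: exhaustively applying reduction rules until the remaining graph $G'$ has bounded treewidth, at which point I express the existence of a suitable edge set $L$ satisfying properties~(i)--(iv) (now with ``$H=(V_G,L)$ has pathwidth at most~$2$'' in place of outerplanarity) in CMSO and invoke Courcelle's theorem (Lemma~\ref{l-courcelle}). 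Crucially, property~(iv) is expressible in CMSO by combining the finite forbidden-minor characterization of pathwidth-$2$ graphs (Lemma~\ref{lem:pw2c}) with Lemma~\ref{l-engel}.

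The structural work is where this proof genuinely diverges from the outerplanar one. First I would need a \textbf{twin-reduction rule}: unlike the outerplanar case, where one bounds the number of simplicial true twins that can be pendant, here the relevant object (noted in Section~\ref{sec:idea}) is true twins of an \emph{arbitrary} vertex, since $K_{2,t}$ has pathwidth~$2$ and shows false twins behave differently. So I would prove a lemma asserting that if $G$ has sufficiently many true twins, at least one lies in a controllable position (e.g. a pendant or a vertex of a bounded-size ``tail'') of any minimal pathwidth-$2$ root, justifying deletion of one copy. Second, I would reprove the analogues of Lemmas~\ref{lem:three} and~\ref{lem:high}: the goal is to show that any vertex $u$ of sufficiently high degree in $H$ forces the existence of three neighbours $v_1,v_2,v_3\in N_G(u)$ pairwise at distance at least~$3$ in $G-u$, so that Lemma~\ref{lem:not-incl} lets me label all edges at $u$ red or blue. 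The key structural fact to exploit is that a pathwidth-$2$ graph admits a path decomposition with bags of size~$3$, so around any vertex $u$ the neighbourhood $N_H(u)$ is ``laid out linearly'' along the path decomposition; high degree then spreads $N_H(u)$ across many bags, and I can pick three neighbours whose bag-intervals are far apart, forcing large distance in $G-u$.

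The analogue of Lemma~\ref{lem:bound-tw} is the step I expect to be the main obstacle, and it requires the most care. After labeling and deleting irrelevant edges, I must show that the reduced graph $G_H$ has treewidth bounded by an absolute constant. In the outerplanar case this was done by the clever ``split the high-degree vertex along its blocks into a path $u_1\ldots u_k$'' construction, yielding an outerplanar graph $\hat H$ of bounded degree with $H$ a contraction of $\hat H$ and $G_H$ a minor of $\hat H^4$; then Lemmas~\ref{obs:tw-op} and~\ref{lem:tw-power} finished the job. For pathwidth~$2$ I would pursue the parallel strategy: build $\hat H$ by splitting each high-degree vertex along the path decomposition so that $\hat H$ has pathwidth at most~$2$ (hence treewidth at most~$2$) \emph{and} bounded maximum degree, with $H$ a contraction of $\hat H$; then argue $G_H$ is a minor of $\hat H^4$ and apply Lemma~\ref{lem:tw-power} to bound $\tw(\hat H^4)\le(\tw(\hat H)+1)\Delta(\hat H)^{3}$. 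The delicate point is controlling $\Delta(\hat H)$: I must verify that the linear structure of pathwidth-$2$ decompositions lets each split replace a high-degree vertex by a bounded-degree gadget without destroying the pathwidth bound, and that only edges within a bounded ``window'' of the decomposition survive the deletion rule so that $G_H$ embeds as a minor of the fourth power.

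With all structural lemmas in hand, the algorithm and its correctness (Claims~1--5 in the proof of Theorem~\ref{thm:outerplanar}) transfer almost verbatim. The running time becomes $O(n^6)$ rather than $O(n^4)$ because the pathwidth-$2$ structural constraints force larger thresholds and, more significantly, because verifying the more intricate twin and distance conditions costs an extra $O(n^2)$ factor over the outerplanar bookkeeping. The final CMSO evaluation again runs in linear time on the bounded-treewidth reduced graph by Lemmas~\ref{l-bod} and~\ref{l-courcelle}, so the dominant cost is the exhaustive application of the labeling and edge-deletion rules.
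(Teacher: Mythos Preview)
Your overall plan matches the paper's proof: the same three-rule framework, the same Claims~1--5 structure, and the same CMSO endgame via Lemma~\ref{lem:pw2c} and Lemma~\ref{l-courcelle}. But two of the constants you transplant from the outerplanar case are wrong, and both mark points where the pathwidth-$2$ argument genuinely diverges.

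First, three far-apart neighbours are not enough for the labeling step; the paper needs \emph{five} vertices $v_1,\ldots,v_5\in N_G(u)$ pairwise at distance at least~$3$ in $G-u$ (Lemmas~\ref{lem:pwfive} and~\ref{lem:pwlabelu}). The analogue of Lemma~\ref{lem:three} must, for an arbitrary $x\notin N_H[u]$, produce two ``barrier'' bags of the form $\{v_i,x_i,u\}$ on one side of $x$'s position in the path decomposition to force $\dist_{G-u}(x,x_j)\ge 3$; with only three $v_j$'s there is a single middle barrier and the distance argument collapses. The outerplanar proof got away with three because the boundary cycle supplies extra separation that the linear path-decomposition structure lacks. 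Second, $G_H$ is \emph{not} a minor of $\hat H^4$. The outerplanar bound $|i-j|\le 3$ came from $|X|\le 4$ in Lemma~\ref{lem:size}; the pathwidth-$2$ analogue (Lemma~\ref{lem:c3}) only bounds the number of bags containing $u$ and a vertex of $X_v$ by $c_3=15c_1+4$, so surviving edges between $H$-neighbours of $u$ can span up to $c_3$ steps along the split path $u_1\ldots u_k$. The paper therefore shows $G_H$ is a minor of $\hat H^{\,c_3+1}$ and applies Lemma~\ref{lem:tw-power} with exponent $\lfloor(c_3+1)/2\rfloor+1$. Your ``bounded window'' intuition is right, but the window size is $c_3$, not~$3$. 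A smaller point: your twin-reduction lemma needs a second case beyond pendant vertices---Lemma~\ref{lem:pwreduction1} also allows three pairwise nonadjacent degree-$2$ twins with common $H$-neighbourhood $\{w,y\}$---and the safeness proof (Claim~1) relies on this case explicitly to reinsert the deleted vertex with both edges $uw,uy$ while preserving pathwidth~$2$.
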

\noindent
We first show a number of structural results in Section~\ref{secpw:strc}.
We then use these results in the design of our polynomial-time algorithm for {\sc Pathwidth-2 Root} in Section~\ref{secpw:algo}.

\subsection{Structural Lemmas} \label{secpw:strc}

Recall our assumption that $A\nsubseteq B$
for every two distinct bags $A$ and $B$ of a path decomposition.
The class of graphs of pathwidth at most~2 is closed under 
vertex deletion and edge deletion.
Hence, by Lemma~\ref{obs:min}, we may focus on minimal  pathwidth-2 roots.

A graph $H$ of pathwidth at most~2 may have several different path decompositions of width at most~2. We can use any such path decomposition in our arguments below. For ease of notation, we will refer to such a path decomposition as {\it the} path decomposition of $H$.

\begin{lemma} \label{lem:neighxi}
Let $H$ be a minimal pathwidth-$2$ root of a graph~$G$. If there are distinct vertices $u, v, x_1, \ldots, x_k$ such that 
the path decomposition of $H$ contains
bags $\{x_1, u, v\}$, $\{x_2, u, v\}$,$\ldots$, $\{x_k, u, v\}$ in this order, then $N_H(x_i)\subseteq \{u,v\}$ for $i=2,\ldots, k-1$. 
\end{lemma}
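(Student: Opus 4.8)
The plan is to prove this statement purely from the axioms of a width-$2$ path decomposition; neither minimality of $H$ nor the square-root property will actually be needed, which makes the argument self-contained. Fix an index $i$ with $2 \le i \le k-1$. The goal is to show that $x_i$ lies in exactly one bag of the path decomposition, namely $\{x_i,u,v\}$, for then every neighbour of $x_i$ must share that bag with it and hence lie in $\{u,v\}$.

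First I would invoke condition~(iii) of the definition of a decomposition: the set of bags containing $x_i$ induces a connected (hence contiguous) subpath of the decomposition. This subpath certainly contains the bag $\{x_i,u,v\}$. The key observation is that this subpath is confined \emph{strictly between} the bags $\{x_{i-1},u,v\}$ and $\{x_{i+1},u,v\}$: since the vertices $x_1,\dots,x_k$ are distinct, neither of these two bags contains $x_i$, and they occur respectively before and after $\{x_i,u,v\}$ in the given order, so the contiguous range of bags containing $x_i$ cannot reach either of them.

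Next I would apply condition~(iii) again, now to $u$ and to $v$: both $u$ and $v$ appear in $\{x_{i-1},u,v\}$ and in $\{x_{i+1},u,v\}$, so by connectivity they appear in \emph{every} bag lying between these two. In particular, every bag that contains $x_i$ also contains both $u$ and $v$. Since the width is at most~$2$, every bag has at most three vertices, so any bag containing $x_i$ must equal $\{x_i,u,v\}$. Because the bags of the decomposition are assumed to be pairwise distinct, this forces $x_i$ to lie in a single bag. Finally, by condition~(ii), any edge $x_iw \in E_H$ places $w$ together with $x_i$ in some bag, which can only be $\{x_i,u,v\}$; hence $w \in \{u,v\}$ and $N_H(x_i)\subseteq\{u,v\}$, as required.

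I do not anticipate a genuine obstacle here: the only point demanding care is the \emph{interlacing} argument, namely correctly combining the connectivity property for $x_i$ (to bound the range of bags containing it) with the connectivity property for $u$ and $v$ (to force those bags to be full). One must also be careful that the hypothesis ``in this order'' permits other bags to appear between consecutive bags $\{x_j,u,v\}$; the argument above is written to tolerate this, since it relies only on the relative order of the three bags $\{x_{i-1},u,v\}$, $\{x_i,u,v\}$, $\{x_{i+1},u,v\}$ and not on their being consecutive. This is also exactly where the requirement $2 \le i \le k-1$ enters, as the endpoints $x_1$ and $x_k$ lack a neighbouring bag on one side and need not satisfy the conclusion.
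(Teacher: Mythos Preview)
Your argument is correct and is essentially the same as the paper's proof: both use the connectivity property (iii) twice, once for $x_i$ to trap its bags between two bags not containing it, and once for $u$ and $v$ to force those bags to be full, yielding a size-$4$ bag. The only cosmetic differences are that the paper uses $\{x_1,u,v\}$ and $\{x_k,u,v\}$ as the bounding bags rather than your $\{x_{i-1},u,v\}$ and $\{x_{i+1},u,v\}$, and argues directly by contradiction on a hypothetical neighbour $w\notin\{u,v\}$ rather than first showing $x_i$ lies in a unique bag; your observation that minimality and the square-root property are unused is also correct.
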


\begin{proof}
Suppose $x_i$ has a neighbour $w$ in $H$ such that $w\neq u$ and $w\neq v$. There exists a bag $B$ in the path decomposition of $H$ that contains $x_i$ and $w$. As $B$ contains $x_i$, we find that $B$ is between the bags $\{x_1,u,v\}$ and $\{x_k,u,v\}$ in the path decomposition and hence must contain $u$ and $v$. Then $|B|\geq 4$, a contradiction with $\pw(H)\leq 2$. 
\end{proof}

The \emph{Ramsey number} $R(p,q)$ is the smallest integer $n$ such that every graph on $n$ vertices has either a clique of size $p$ or an independent set of size $q$. 
By Ramsey's Theorem~\cite{Ra30}, $R(p,q)$ is finite for every pair of integers $p,q\geq 0$.
We use Ramsey's Theorem in the proof of the following lemma.

\begin{lemma} \label{lem:pwreduction1}
Let $H$ be a minimal pathwidth-$2$ root of a graph $G$.
Then there is a constant $c_1$ such that for every set $W$ of true twins in $G$ with $|W|\geq c_1$, one of the following holds:
\begin{itemize}
\item[(i)] $W$ contains a pendant vertex of $H$.
\item[(ii)] $W$ contains three pairwise nonadjacent vertices $x,y,z$ of degree~$2$ in $H$ with  $N_H(x)=N_H(y)=N_H(z)$.
\end{itemize}
\end{lemma}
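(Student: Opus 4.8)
The plan is to assume that~(i) fails — i.e.\ no vertex of $W$ is a pendant of $H$, so every vertex of $W$ has degree at least~$2$ in $H$ — and then to force the $K_{2,3}$-pattern required by~(ii). Since any two distinct true twins are adjacent in $G$, the set $W$ is a clique of $G=H^2$, so the vertices of $W$ are pairwise at distance at most~$2$ in $H$. Fix a path decomposition of $H$ of width at most~$2$ and, for each $w\in W$, let $[l_w,r_w]$ be the interval of bags containing $w$. Because every bag has at most three vertices, no four of these intervals can pairwise intersect (by the Helly property of intervals, a pairwise intersecting family shares a common bag, which would then have size at least~$4$). Hence the interval graph of $W$ has clique number at most~$3$; as interval graphs are perfect, it is $3$-colourable, and I take a colour class $W_1\subseteq W$ with $|W_1|\ge |W|/3$ consisting of vertices with pairwise disjoint intervals. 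Disjoint intervals never share a bag, so the vertices of $W_1$ are pairwise non-adjacent in $H$ and therefore pairwise at distance exactly~$2$.

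Next I would extract a first hub. Order $W_1=\{w_1,\dots,w_p\}$ from left to right and pick a bag index $t$ lying in the gap between $w_{\lfloor p/2\rfloor}$ and $w_{\lfloor p/2\rfloor+1}$, so that a set $A$ of $\lfloor p/2\rfloor$ vertices lies entirely to the left of $X_t$ and the remaining set $B$ lies entirely to the right. If $w\in A$ and $w'\in B$, then any common $H$-neighbour of $w$ and $w'$ meets a bag left of $t$ and a bag right of $t$, hence lies in $X_t$; thus all the common neighbours realising the (distance-$2$) pairs of $A\times B$ come from the at most three vertices of $X_t$. Writing $A_c=A\cap N_H(c)$ and $B_c=B\cap N_H(c)$ for $c\in X_t$, every pair of $A\times B$ lies in some $A_c\times B_c$, so $\sum_{c\in X_t}|A_c||B_c|\ge |A|\,|B|$ and some vertex $a\in X_t$ satisfies $|A_a|\,|B_a|\ge |A|\,|B|/3$; in particular $a\notin W_1$ is adjacent to a linear-sized subset $I_a\subseteq W_1$. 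Discarding the extreme elements, I may assume that every remaining $u\in I_a$ has its whole interval inside the range of bags containing~$a$ (as $a$'s contiguous range meets both the leftmost and rightmost surviving intervals), so that every bag witnessing an edge at $u$ has the form $\{u,a,z\}$.

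The engine for producing~(ii) is Lemma~\ref{lem:neighxi}: if five vertices $u$ of $I_a$ occur, in the path decomposition, in bags $\{u,a,b\}$ sharing one common partner $b$, then the three inner ones satisfy $N_H(u)\subseteq\{a,b\}$, hence $N_H(u)=\{a,b\}$ since their degree is at least~$2$; being in $W_1$ they are pairwise non-adjacent, which is exactly~(ii). So for each $u\in I_a$ I choose a bag $\{u,a,z_u\}$ and ask whether some value $b$ is attained by $z_u$ at least five times. If yes, Lemma~\ref{lem:neighxi} finishes the proof, and composing the constant factors lost in the reductions above ($1/3$ at the colouring step, a further constant at the hub step, and the threshold $5$) yields a suitable $c_1$.

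The hard part is the complementary case, in which the second partners $z_u$ take many distinct values, so that no vertex $b\ne a$ serves as a common second partner for five members of $I_a$. This is precisely where both the true-twin hypothesis and the minimality of $H$ enter. True-twinness means that all $w\in W$ share one closed radius-$2$ ball $N^\ast$ in $H$; hence every neighbour $b\in N_H(u)\setminus\{a\}$ lies in $N^\ast$ and so is within distance~$2$ of every other member of $I_a$. I would separate the cases $b\sim a$ and $b\not\sim a$: a triangle $\{a,u,b\}$ lets minimality speak through Lemma~\ref{obs:triangle}, which yields a neighbour outside $N_H[a]$ and thereby additional forced adjacencies that one plays against the equality of the radius-$2$ balls, while the non-adjacent case is controlled through the interval structure of the many far, pairwise interval-disjoint twins. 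I expect reconciling these constraints to be the main obstacle. That minimality is indispensable here is shown by the book graph $B_m$ (one vertex $a$ joined to a perfect matching $u_1b_1,\dots,u_mb_m$): its square is complete, so $u_1,b_1,\dots,u_m,b_m$ are true twins of degree~$2$, none is a pendant, and no three of them share a common pair of neighbours; yet $B_m$ is not minimal — deleting all matching edges leaves a star whose leaves are pendant — so its minimal root satisfies~(i). Any proof of the lemma must therefore genuinely use minimality to rule out such book-like configurations.
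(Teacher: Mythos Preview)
Your plan follows the same arc as the paper's proof: extract a large subfamily of twins whose bag-intervals are pairwise disjoint (the paper does this via Ramsey with $c_1=R(4,16)$; your route through the $3$-colourability of interval graphs is a little cleaner), locate a hub $a$ lying in all their bags (the paper simply takes the midpoint of an $H$-path of length~$2$ between the two extreme twins; your separator-plus-pigeonhole achieves the same), and then try to force a second vertex into five of those bags so that Lemma~\ref{lem:neighxi} delivers~(ii).

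The genuine gap is the step you yourself flag as ``the main obstacle''. The paper's resolution is concrete. Take the second neighbour $v\ne a$ of the leftmost surviving twin $x_1'$ (it exists since~(i) is assumed to fail). If $v$ already lies in five of the relevant bags, Lemma~\ref{lem:neighxi} finishes. Otherwise $v$'s interval ends early, so $vx_{16}\notin E_H$; but true-twinness gives $v\in N_G(x_1')=N_G(x_{16})$, hence $\dist_H(v,x_{16})\le 2$, producing an $H$-path $x_1'vwx_{16}$ of length~$3$. If $w\ne a$, the interior vertices of this $a$-avoiding path must jointly span all the bags from $B_1'$ to $B_{16}$ (each of which contains~$a$), and since there are at least~$12$ of them, one interior vertex lies in at least five; apply Lemma~\ref{lem:neighxi}. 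The only obstruction is $w=a$, i.e.\ the triangle $\{a,v,x_1'\}$, and this is precisely where Lemma~\ref{obs:triangle} enters: it supplies a vertex $z\notin N_H[a]$ adjacent to $v$ or to $x_1'$, and restarting the argument from $z$ now guarantees an $a$-avoiding path of length~$3$ or~$4$. You named both ingredients; what is missing is this ``build a short $a$-avoiding path, then pigeonhole on the intervals of its interior vertices'' manoeuvre, which is the actual content of the hard case.
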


\begin{proof}
Let $c_1=R(4,16)$ and consider a set~$W$ of true twins in $G$ with $|W|\geq c_1$.
We first construct an auxiliary graph $F$. Let $V_F=W$. We add an edge between two vertices of $F$ 
if and only if
there exists a bag in the path decomposition of $H$ that contains both of them. We claim that $F$ has an independent set of size~16. As $c_1=R(4,16)$, it suffices to 
prove that $F$ does not contain a $K_4$. For contradiction assume that $F$ has a $K_4$ with vertex set $\{x_1,x_2,x_3,x_4\}$. Let $P_i$ be the path formed by the bags containing vertex~$x_i$ in the path decomposition of $H$. 
As $\{x_1,x_2,x_3,x_4\}$ is a clique in $F$, any two paths $P_i$ and $P_j$ are intersecting.
By the Helly property, there exists a bag containing all four vertices, a contradiction with $\pw(H)\leq 2$. 
Hence, $F$ does not contain a $K_4$.

Let $W'=\{x_1,x_2,\ldots,x_{16}\}$ be an independent set of $F$ (so $W'\subseteq W$). By the construction of $F$, there are no two vertices of $W'$ that are contained in the same bag of the path decomposition of $H$. Let $B_1,B_2,\ldots, B_{16}$ be (distinct) bags 
that appear in this order in the path decomposition of $H$,
such that $x_i\in B_i$ for $i=1,\ldots,16$.
As the vertices of $W'$ are true twins in $G$ and they are not adjacent in $H$, there must exist a path of length~2 in $H$ between any two of them. Let $x_1ux_{16}$ be such a path in $H$ between $x_1$ and $x_{16}$. We may assume without loss of generality that $u\in B_1$ and $u\in B_{16}$, 
since there exists a bag that contains both $u$ and $x_1$ and a bag that contains both $u$ and $x_{16}$, and $B_1$ and $B_{16}$ can be chosen to be any bags containing $x_1$ and $x_{16}$, respectively. 
By definition,  $u\in B_i$ for $i=2,\ldots,15$. 

First assume that there are three distinct vertices $x_i$, $x_j$ and $x_k$ in $W'$ with $2\leq i<j<k\leq 15$ that are not adjacent to $u$ in $H$. Let $v$ be the vertex in a path of length~2 in~$H$ between $x_i$ and $x_k$. We may assume without loss of generality that $B_i=\{x_i,u,v\}$ and $B_k=\{x_k,u,v\}$. Then $v\in B_j$. By Lemma~\ref{lem:neighxi}, we
obtain $N_H(x_j)\subseteq \{u,v\}$. Then, as $x_ju\notin E_H$, we find that $N_H(x_j)=\{u\}$.
 Hence condition~(i) holds.

Now assume that at most two vertices of $W'\setminus \{x_1,x_{16}\}$ are not adjacent to $u$ in $H$. Let $W''=\{x_1',\ldots,x_p'\}$ 
consist of all vertices of $W'\setminus \{x_1,x_{16}\}$ that are adjacent to $u$ in $H$; note that $p\geq 12$ and that $W''$ might be a proper subset of $W'\setminus \{x_1,x_{16}\}$.
If some vertex of $W''$ has degree~1 in $H$, then condition~(i) holds. Suppose that all the vertices of $W''$ have degree at least~2 in $H$. 
Let $\mathcal{B'}=\{B_1',\ldots,B_p'\}$ be bags that appear in this order in the path decomposition of $H$
such that $x_i'\in B_i'$ for $i=1,\ldots,p$. 

Consider a neighbour $v_1\neq u$ of $x_1'$ in $H$. We may assume without loss of generality that $B_1'=\{x_1',u,v_1\}$. 
First suppose that  
$v_1$ appears in at least five bags of $\mathcal{B'}$. Then, by definition, $v_1$ must be in $B_1'$, $B_2'$, $B_3'$, $B_4'$ and $B_5'$.
As $x_2'$, $x_3'$ and $x_4'$ do not have degree~1 in $H$, we use Lemma~\ref{lem:neighxi} to find that 
$N_H(x_2')=N_H(x_3')=N_H(x_4')=\{u,v_1\}$.
Hence condition~(ii) holds.
From now on assume that no neighbour of $x_1'$ in $H$ appears in more than five bags of $\mathcal{B'}$, that is, any neighbour of $x_1'$ may only appear in $B_1',\ldots,B_4'$. 
This implies that in order to prove condition~(ii) it suffices to find a vertex $v_i$ that appears in at least five bags of $\mathcal{B}\cup B_{16}$.

Suppose that $H$ contains a path $x_1'v_1v_2x_{16}$ for some vertices $v_1,v_2$ with $u\notin \{v_1,v_2\}$. 
We may assume without loss of generality that $B_1'=\{x_1',u,v_1\}$ and $B_{16}=\{x_{16},u,v_2\}$. 
Recall that $v_1$ does not belong to any bag $B_i'$ for $i\geq 5$ and $x_{16}$ belongs to $B_{16}$, 
while $u$ belongs to any bag between $B_1$ and $B_{16}$. Then there exists a bag $\{v_1,v_2,u\}$, which  has to be between $B_1'=\{x_1',u,v_1\}$ and $B_{16}=\{x_{16},u,v_2\}$. Since $p\geq 12$ and $v_1$ does not belong to any $B_i'$ for $i\geq 5$, this means that
$v_2$ appears in at least five bags of $\mathcal{B'}$. 

Suppose that $H$ contains a path $x_1'v_1v_2v_3x_{16}$ for some vertices $v_1,v_2,v_3$ with $u\notin \{v_1,v_2,v_3\}$. 
We may assume without loss of generality that $B_1'=\{x_1',u,v_1\}$ and $B_{16}=\{x_{16},u,v_3\}$.
Recall that $v_1$ does not belong to any bag $B_i'$ for $i\geq 5$, 
while $u$ belongs to any bag between $B_1$ and $B_{16}$. Then there exists bags $\{v_1,v_2,u\}$ and $\{v_2,v_3,u\}$, which have to be between
$B_1'=\{x_1',u,v_1\}$ and $B_{16}=\{x_{16},u,v_3\}$. Since $p\geq 12$ and $v_1$ does not belong to any $B_i'$ for $i\geq 5$, this means that either
$v_2$ or $v_3$ appear in at least five bags of $\mathcal{B'}\cup B_{16}$. 

We continue as follows.
Let $v\neq u$ be a neighbour of $x_1'$. 
By the above assumption, $v$ does not belong to $B_i'$ for $i\geq 5$. In particular, this means that $v$ is not adjacent to $x_{16}$.
As $x_1'$ and $x_{16}$ are true twins in~$G$, we find that $v$ is a neighbour of $x_{16}$ in $G$. 
As  $v$ and $x_{16}$ are not adjacent in $H$, this means that $G$ contains a path $vwx_{16}$ for some vertex~$w$.
If $w\neq u$, then $H$ contains a path $x_1'vwx_{16}$ with $u\notin \{v,w\}$. Hence, condition~(ii) holds.
Now suppose that that $w=u$.
Then $v$, $u$ and $x_1'$ form a triangle in $H$. By Lemma~\ref{obs:triangle}, $H$ contains a vertex $z\neq u$ that is adjacent to at least one of $v$ or $x_1'$, but not to~$u$. 

First suppose $vz\notin E_H$. Then $x_1'z\in E_H$. As $x_1'$ and $x_{16}$ are true twins in $G$, we find that $z$ is also adjacent to $x_{16}$ in $G$. Hence $H$ either contains a path $x_1'zx_{16}$ or a path $x_1'zz'x_{16}$ for some vertex $z'$. Note that $u\notin \{z,z'\}$, as $z$ is neither equal to $u$ nor adjacent to $u$. Hence we find that condition~(ii) holds. 

Finally suppose $vz\in E_H$. 
As $\dist_H(z,x_1')\leq 2$, $z$ is adjacent to $x_1'$ in $G$.
As $x_1'$ and $x_{16}$ are true twins in $G$, we find that $z$ is also adjacent to $x_{16}$ in $G$. 
If $zx_{16}\in E_H$, then we have found a path $x_1'vzx_{16}$ with $u\notin \{v,z\}$ and thus condition~(ii) holds.
If $zx_{16}\notin E_H$, then $H$ contains a path $x_1'vzz'x_{16}$ for some vertex~$z'$. 
Note that $u\notin \{z,z'\}$, as $z$ is neither equal to $u$ nor adjacent to $u$. 
Hence, as $u\neq v$ either, condition~(ii) also holds in this case.    
\end{proof}

A graph $G$ that contains no set of more than $c_1$ vertices that are true twins of each other is called {\it $c_1$-twin-bounded}.

\begin{lemma} \label{lem:pwboundxi}
Let $G$ be a $c_1$-twin-bounded graph that has a minimal pathwidth-$2$ root~$H$. If there are distinct vertices $u, v, x_1, \ldots, x_k$ such that the bags $\{x_1, u, v\}$, $\{x_2, u, v\}$,$\ldots$, $\{x_k, u, v\}$ appear in this order in the path decomposition of $H$, then $k\leq 3c_1+2$. 
\end{lemma}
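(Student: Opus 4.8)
The plan is to use Lemma~\ref{lem:neighxi} to pin down the neighbourhoods of the interior vertices $x_2,\dots,x_{k-1}$ and then to bound their number via the $c_1$-twin-bounded hypothesis. Since the bags $\{x_1,u,v\},\{x_2,u,v\},\dots,\{x_k,u,v\}$ appear consecutively in the path decomposition, Lemma~\ref{lem:neighxi} yields $N_H(x_i)\subseteq\{u,v\}$ for every $i\in\{2,\dots,k-1\}$. As $H$ is a square root of $G$ and we may assume that $G$, and hence $H$, has no isolated vertex, each interior $x_i$ has a nonempty neighbourhood, so $N_H(x_i)$ equals exactly one of $\{u\}$, $\{v\}$, or $\{u,v\}$. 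This splits the interior vertices into three classes according to their neighbourhood in $H$.

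The heart of the argument is to show that two interior vertices lying in the same class are true twins in $G$. I would verify this by computing closed neighbourhoods in $G=H^2$. If $N_H(x)=\{u\}$, then the vertices at distance at most~$2$ from $x$ in $H$ are exactly $u$ and the neighbours of $u$, so $N_G[x]=\{u\}\cup N_H(u)$; note that $x\in N_H(u)$, so this set does not depend on the particular vertex $x$ of the class. Likewise, if $N_H(x)=\{u,v\}$, then $N_G[x]=\{u,v\}\cup N_H(u)\cup N_H(v)$, which is again independent of $x$ because $x\in N_H(u)\cap N_H(v)$; the case $N_H(x)=\{v\}$ is symmetric to the first. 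Hence any two interior vertices in a common class have the same closed neighbourhood in $G$ and are therefore true twins.

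It then follows from the $c_1$-twin-bounded hypothesis that each of the three classes contains at most $c_1$ vertices, so the number of interior vertices $x_2,\dots,x_{k-1}$ is at most $3c_1$. Adding the two endpoints $x_1$ and $x_k$ gives $k\leq 3c_1+2$, as required.

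I expect the main obstacle to be the borderline case $N_H(x_i)=\emptyset$: an isolated interior vertex would be a \emph{false} twin rather than a true twin of the other isolated vertices, and such vertices are not controlled by the $c_1$-twin-bounded hypothesis. This is exactly where the standing assumption that $G$ is connected (so that $H$ contains no isolated vertex) is needed; without it one would have to argue separately that isolated vertices cannot populate a block of bags of the form $\{x_i,u,v\}$. The only other point requiring care is the closed-neighbourhood computation, where it is essential to observe that each $x_i$ itself belongs to $N_H(u)$ (and, in the degree-$2$ case, also to $N_H(v)$), which is precisely what makes the sets $N_G[x_i]$ coincide exactly across a class.
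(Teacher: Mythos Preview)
Your proposal is correct and follows essentially the same route as the paper's proof: apply Lemma~\ref{lem:neighxi} to force $N_H(x_i)\subseteq\{u,v\}$ for the interior vertices, partition them into the three possible nonempty neighbourhood types, observe that each type forms a set of true twins in $G$, and bound each type by $c_1$. Your version is more explicit than the paper's in computing $N_G[x_i]$ to justify the true-twin claim, and in flagging the degenerate case $N_H(x_i)=\emptyset$ (which the paper silently ignores); one minor slip is the word ``consecutively''---the hypothesis only says the bags appear \emph{in this order}, but since Lemma~\ref{lem:neighxi} already works under that weaker assumption, nothing is lost.
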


\begin{proof}
By Lemma~\ref{lem:neighxi}, we have $N_H(x_i)\subseteq \{u,v\}$ for $i=2,\ldots, k-1$. .
Vertices adjacent only to $u$ in~$H$ are true twins in $G$. The same applies for vertices only adjacent to~$v$ in~$H$ and to vertices only adjacent to $u$ and $v$ in~$H$. As the size of every set of true twins in $G$ is bounded by $c_1$, we obtain $|\{x_2,\ldots,x_{k-1}\}|\leq 3c_1$ and thus $k\leq 3c_1+2$. 
\end{proof} 

\begin{lemma} \label{lem:pwcommonneigh}
Let $G$ be a $c_1$-twin-bounded graph that has a minimal pathwidth-$2$ root~$H$. 
Any two vertices $u$ and $v$ have at most $c_1+2$ common neighbours in $H$.  
\end{lemma}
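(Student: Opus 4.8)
The plan is to analyse the positions of $u$ and $v$ in the path decomposition of $H$ and to split into two cases according to whether the bags containing $u$ and the bags containing $v$ overlap. Write $P_u$ and $P_v$ for the (contiguous) sets of bags containing $u$ and $v$ respectively, and for a common neighbour $x$ write $P_x$ for the set of bags containing $x$. Since $x$ is adjacent to both $u$ and $v$ in $H$, the interval $P_x$ meets both $P_u$ and $P_v$. The whole argument then rests on the size-$3$ bag constraint coming from $\pw(H)\le 2$ together with the Helly property for subpaths of a path.

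First I would treat the case $P_u\cap P_v\neq\emptyset$, i.e.\ some bag contains both $u$ and $v$. Here the three intervals $P_u$, $P_v$, $P_x$ pairwise intersect, so by the Helly property (the same property used in the proof of Lemma~\ref{lem:pwreduction1}) there is a bag containing $u$, $v$ and $x$ simultaneously; as $\pw(H)\le 2$ this bag is exactly $\{x,u,v\}$, and distinct common neighbours give distinct such bags (two of them in one bag would force a bag of size~$4$). Ordering these bags along the path decomposition and invoking Lemma~\ref{lem:neighxi}, every common neighbour except the two whose bags are leftmost and rightmost satisfies $N_H(x)\subseteq\{u,v\}$, hence $N_H(x)=\{u,v\}$. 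All such interior vertices are pairwise true twins in $G$: for any such $x$ we have $N_G[x]=\{x\}\cup\{u,v\}\cup N_H(u)\cup N_H(v)$, and two such vertices lie in each other's closed neighbourhood, so their closed neighbourhoods coincide. By $c_1$-twin-boundedness there are at most $c_1$ of them, giving at most $c_1+2$ common neighbours in total.

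The remaining case is $P_u\cap P_v=\emptyset$; say $P_u$ lies entirely to the left of $P_v$. Let $q$ be the rightmost bag containing $u$. Every common neighbour $x$ has $P_x$ meeting $P_u$ (at a bag at position at most $q$) and meeting $P_v$ (at a bag strictly to the right of $q$), so the interval $P_x$ contains the bag at position $q$. Thus the bag at position $q$ contains $u$ together with \emph{every} common neighbour; since it has at most three vertices, there are at most two common neighbours in this case, which is at most $c_1+2$.

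The main obstacle is the bookkeeping in the first case: one must verify that the bags $\{x,u,v\}$ genuinely exist for every common neighbour and are mutually distinct, so that Lemma~\ref{lem:neighxi} applies to them after ordering, and one must separately confirm the true-twin claim for the interior vertices so that the twin-bound $c_1$ may be invoked. The Helly step and the size-$3$ bag constraint are the two facts doing the real work; once they are in place, the bound $c_1+2$ follows by combining Lemma~\ref{lem:neighxi} with the definition of $c_1$-twin-boundedness in the first case and a direct single-bag counting argument in the second.
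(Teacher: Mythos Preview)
Your proposal is correct and follows essentially the same approach as the paper: establish a bag $\{u,v,x\}$ for each common neighbour~$x$, order these bags, apply Lemma~\ref{lem:neighxi}, and use $c_1$-twin-boundedness on the interior vertices. The only cosmetic differences are that you invoke the Helly property explicitly (the paper argues the existence of the bags $\{u,v,x_i\}$ by hand) and that you split off the case $P_u\cap P_v=\emptyset$ separately, whereas the paper simply disposes of $t\le 2$ at the outset---your Case~2 in fact shows that $P_u\cap P_v=\emptyset$ already forces $t\le 2$, so the two treatments coincide.
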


\begin{proof}
Let $N_H(u)\cap N_H(v)=\{x_1,\ldots,x_t\}$. 
If $t\leq 2$, then $t\leq c_1+2$. Suppose $t\geq 3$.
The path decomposition of $H$ must have a bag containing $u$ and $x_i$ and a bag containing $v$ and $x_i$ for $i=1,\ldots,t$. As $\pw(H)\leq 2$, this implies the existence of the bag $B_i=\{u,v,x_i\}$ for each~$1\leq i\leq t$. 
In order to see this, assume that the bags containing $\{x_1,u\}$, $\{x_2,u\}$, \ldots, $\{x_t,u\}$ appear in this order in the path decomposition of $H$.  For 
$2\leq i<j\leq t-1$, there is no bag containing both $x_i$ and $x_j$, since there must exist bags containing
$\{x_1,v\}$, $\{x_1,u\}$, $\{x_t,v\}$ and $\{x_t,u\}$.
Assume that the bag containing $\{x_1,v\}$ appears after the one containing $\{x_1,u\}$. Since there exists a bag containing $\{x_t,u\}$, there exists a bag  $B_1=\{x_1,u,v\}$. Now, for every $2\leq i\leq t-1$, the bag containing $\{x_i,v\}$ must also contain $u$, because of the existence of the bag containing $\{x_t,u\}$. 
Hence, for every $2\leq i\leq t-1$, there exists a bag $B_i=\{u,v,x_i\}$.
Finally, since there exist bags containing both $\{x_t,v\}$ and $\{x_t,u\}$ and a bag $\{x_{t-1},u,v\}$, we conclude that there is a bag $B_t=\{x_t,u,v\}$.
The above implies that we may also assume that $B_1,\ldots, B_{t}$ appear in the path decomposition of $H$ in this order. By Lemma~\ref{lem:neighxi}
we find that $N_H(x_i)\subseteq \{u,v\}$ for $i=2,\ldots t-2$. As each $x_i$ is adjacent to $u$ and $v$, this means that
$N_H(x_i)= \{u,v\}$ for $i=2,\ldots t-2$. Consequently, $x_2,x_3,\ldots,x_{t-1}$ are true twins in $G$. Hence, as $G$ is $c_1$-twin-bounded, $t-2\leq c_1$, and thus $t\leq c_1+2$. 
\end{proof}

\begin{lemma} \label{lem:pwfive}
Let $G$ be a $c_1$-twin-bounded graph that has a minimal pathwidth-$2$ root~$H$.  Let $c_2=6\cdot 21(c_1+2)$.
Let $u$ be a vertex with $d_H(u)\geq c_2$. Then there are five distinct vertices $x_1,\ldots, x_5\in N_G(u)$ that are pairwise
at distance at least~$3$ in $G-u$.
\end{lemma}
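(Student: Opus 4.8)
The plan is to find five vertices of $N_H^2(u)$ that are pairwise at distance at least~$3$ in $G-u$; since $N_H^2(u)\subseteq N_G(u)$, these will be the desired $x_1,\dots,x_5$. The engine of the proof will be the following distance criterion, which I would prove first: for $v,v'\in N_H^2(u)$ we have $\dist_{G-u}(v,v')\ge 3$ whenever $\dist_{H-u}(v,v')\ge 5$. Indeed, since $vu,v'u\notin E_H$, any $H$-path of length at most~$2$ from $v$ (respectively $v'$) to a vertex $z\neq u$ cannot use $u$ as an internal vertex, so such a path survives in $H-u$; hence a walk of length at most~$2$ from $v$ to $v'$ in $G-u$ would force $\dist_{H-u}(v,v')\le 4$. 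This criterion also covers the case where $v,v'$ lie in different components of $H-u$ (where $\dist_{H-u}=\infty$), which is exactly the content of Lemma~\ref{lem:dist-cut}. Thus the whole task reduces to producing five vertices of $N_H^2(u)$ that are pairwise at distance at least~$5$ in $H-u$.

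Next I would collect many well-placed candidates. Two pendant neighbours of $u$ in $H$ are true twins in $G$ (both have closed $G$-neighbourhood $\{u\}\cup N_H(u)$), so $c_1$-twin-boundedness leaves at most $c_1$ pendant neighbours; the remaining at least $c_2-c_1$ neighbours are non-pendant. For each non-pendant neighbour $x$, Lemma~\ref{lem:adj} supplies a vertex $v_x\in N_H^2(u)$ with $xv_x\in E_G$, and the key observation is that the realizing $H$-path of length at most~$2$ avoids $u$ (reaching a distance-$2$ vertex cannot pass through $u$ in two steps), so $x$ and $v_x$ lie in the \emph{same} component of $H-u$. Hence every component of $H-u$ that meets a non-pendant neighbour of $u$ also contains a vertex of $N_H^2(u)$. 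If at least five such components exist, the criterion (via Lemma~\ref{lem:dist-cut}) finishes immediately; otherwise some single component $C$ carries a constant fraction, at least $(c_2-c_1)/4$, of the non-pendant neighbours, and I must spread candidates \emph{within} $C$.

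For the within-component spreading I would use the linear structure of the path decomposition of $H$. The vertex $u$ occupies a contiguous block of bags, each of size at most~$3$ and hence containing at most two neighbours of $u$; this gives a left-to-right ordering of the neighbours of $u$ in $C$, and Lemma~\ref{lem:neighxi}-style locality plus the fact that every bag-separator between two positions contains $u$ (so it separates $C$ by at most one vertex) lets me lower-bound $H-u$-distances by the number of bag-boundaries crossed. The plan is then to order the witnesses $v_x$ by position and select five whose positions are separated by a fixed window, chosen large enough to guarantee $\dist_{H-u}\ge 5$; twin-boundedness together with the common-neighbour bound (Lemma~\ref{lem:pwcommonneigh}, bounding by $c_1+2$ how many neighbours of $u$ cluster around a single witness) caps how many witnesses fall into one window, so that $c_2=6\cdot 21(c_1+2)$ neighbours force five sufficiently separated ones, the factor $6$ providing the margin to extract five blocks. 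I would track the constants to confirm the threshold $c_2$.

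The main obstacle I anticipate is precisely this last distance estimate in $H-u$ from position separation: a vertex may appear in many consecutive bags (a long interval), so a naive ``positions differ, hence far'' argument fails, and one must argue through the sequence of size-$\le 2$ separators (each contributing at most one non-$u$ vertex) that crossing enough bag-boundaries genuinely costs five edges in $H-u$. A secondary technical point is the handling of neighbours $x$ whose only $H$-neighbours lie in $N_H[u]$, for which Lemma~\ref{lem:adj} yields merely a $G$-adjacent witness; there I would invoke Lemma~\ref{obs:triangle} exactly as in the proof of Lemma~\ref{lem:adj} to relocate the witness to a co-neighbour, keeping it in $N_H^2(u)$ and in the same component.
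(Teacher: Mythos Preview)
Your approach is correct and essentially the same as the paper's: both exploit the linear structure of the bags containing $u$, bound the pendant neighbours of $u$ by $c_1$, and use Lemma~\ref{lem:pwcommonneigh} to extract enough well-spaced witnesses so that bag separators force pairwise distance at least~$3$ in $G-u$. The paper's execution is a bit more direct---it avoids your component case split and the intermediate $H-u$ distance criterion by greedily selecting an independent set $v_1,\dots,v_{21}\subseteq N_H(u)$ (each $v_j$ chosen to miss the boundary bags $B_{k_{j-1}},B_{k_j}$), picking distinct $H$-neighbours $x_i\neq u$ for them via Lemma~\ref{lem:pwcommonneigh}, and then using the bags $\{v_j,x_j,u\}$ as separators to conclude that $x_1,x_6,x_{11},x_{16},x_{21}$ are the desired five vertices.
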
  

\begin{proof}
Choose a set of bags $B_1,\ldots, B_l$ in the path decomposition of $H$, such that $u\in B_i$ for $i=1,\ldots,l$ and $N_H(u)\subseteq
\cup_{i=1}^lB_i$. Note that some neighbours of $u$ might appear in more than one bag of this set. 

Let $k_1$ be the smallest integer such that $\cup_{i=1}^{k_1}B_i$ contains at least three distinct vertices of $N_H(u)$. Since $u$ belongs to all bags and every bag has size at most~3, at least one of these three neighbours in $\cup_{i=1}^{k_1}B_i$ does not appear in $B_{k_1}$. Let $v_1$ be such vertex.
For $j\geq 2$,
let $k_j$ be the smallest integer greater than $k_{j-1}$ such that $\cup_{i=k_{j-1}}^{k_j}B_i$ contains at least five new vertices of $N_H(u)$. 
As $u$ belongs to all bags, there is at least one vertex~$v_j$ among these five vertices that appears neither in $B_{k_{j-1}}$ nor in $B_{k_j}$. This yields an independent set $\{v_1,\ldots,v_t\}\subset N_H(u)$. Since $d_H(u)\geq 6\cdot 21(c_1+2)$, we have $t\geq 21(c_1+2)$. 

Since $G$ is $c_1$-twin-bounded and vertices that have the same vertex as their unique neighbour in $H$ are true twins in $G$, at least $t-c_1$ vertices from $\{v_1,\ldots v_t\}$ have another neighbour in $H$ besides $u$.
By  Lemma~\ref{lem:pwcommonneigh}, two vertices can have at most $c_1+2$ common neighbours in $H$.
Hence, we can pick 21 vertices from $\{v_1,\ldots,v_t\}$, say without loss of generality, $v_1,\ldots v_{21}$, such that 
$v_i$, for $i=1,\ldots,21$, is adjacent to a distinct vertex $x_i\neq u$.

For $i=1,\ldots,21$, let $A_i$ be a bag of the path decomposition of $H$ that contains $v_i$ and $u$ (such a bag exists as $uv_i\in E_H$). Then, for $2\leq i \leq 20$, we may assume that $A_i=\{v_i,x_i,u\}$. Note that $x_i$ and $x_{i+1}$ might be adjacent in $H$, but $x_i$ cannot be a neighbour of $x_k$, with $k\geq i+2$, because of the existence of bag $\{v_{i+1},x_{i+1},u\}$. 
For the same reason, $x_i$ cannot be adjacent to $v_k$ for some $k\geq i+2$. Also, if $k\geq i+2$, all paths in $H$ from $x_i$ to $x_k$ contain either $x_{i+1}$ or $v_{i+1}$. The same applies for the paths from $x_i$  to $v_k$ for some $k\geq i+2$.  
Then $\{x_1,x_6,x_{11},x_{16},x_{21}\}$ are vertices that are pairwise at distance at least~3 in $G-u$.  
\end{proof}

\begin{lemma} \label{lem:pwlabelu}
Let $G$ be a $c_1$-twin-bounded graph that has a minimal pathwidth-$2$ root~$H$. 
Let $u$ be a vertex such that there are five distinct vertices $x_1,\ldots,x_5\in N_G(u)$ that are pairwise at distance at least~$3$
in $G-u$. Then, for any $x\in N_G(u)$, it holds that $xu\notin E_H$ if and only if $\dist_{G-u}(x,x_i)\geq 3$ for some $1\leq i\leq 5$.
\end{lemma}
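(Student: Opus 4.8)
The plan is to prove the biconditional in the same spirit as Lemma~\ref{lem:three}, splitting it into the two implications. The direction ``$\dist_{G-u}(x,x_i)\ge 3$ for some $i$ $\Rightarrow xu\notin E_H$'' is immediate: $x$ and $x_i$ are both neighbours of $u$ in $G$ that are at distance at least~$3$ in $G-u$, so Lemma~\ref{lem:not-incl} applied to the pair $\{x,x_i\}$ yields $ux,ux_i\notin E_H$, in particular $xu\notin E_H$. For the converse I would argue contrapositively: assuming $xu\notin E_H$, I would exhibit an index~$i$ with $\dist_{G-u}(x,x_i)\ge 3$.

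First I would record some consequences of the hypothesis. Applying Lemma~\ref{lem:not-incl} to every pair $\{x_i,x_j\}$ gives $ux_i\notin E_H$ for all $i$, so none of $x,x_1,\dots,x_5$ lies in $N_H[u]$; since each is a neighbour of $u$ in $G=H^2$, each is at distance exactly~$2$ from $u$ in $H$, attached to $u$ through some $H$-neighbour of $u$ (call these $y$ for $x$ and $y_i$ for $x_i$). Two trivial subcases are then discharged as in Lemma~\ref{lem:three}: if $x=x_i$ for some $i$ the claim holds by taking any other $x_j$, and if $x$ and some $x_i$ lie in different components of $H-u$, then $u$ is a cut vertex of $H$ and Lemma~\ref{lem:dist-cut} directly gives $\dist_{G-u}(x,x_i)\ge 3$. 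This leaves the main case in which $x,x_1,\dots,x_5$ all lie in one component of $H-u$.

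For the main case I would translate everything into distances inside $H-u$, which is where pathwidth~$2$ is exploited. The two key routine facts are: (a) for $a,b\notin N_H[u]$, the inequality $\dist_{G-u}(a,b)\le 2$ implies $\dist_{H-u}(a,b)\le 4$, because a walk of length at most~$2$ in $G-u$ corresponds to $H$-paths of length at most~$2$ whose interiors cannot pass through $u$ (otherwise $a$ or $b$ would be $H$-adjacent to $u$); and (b) since $(H-u)^2$ is a subgraph of $G-u$, the hypothesis $\dist_{G-u}(x_i,x_j)\ge 3$ forces $\dist_{H-u}(x_i,x_j)\ge 5$. Thus the five vertices $x_1,\dots,x_5$ are pairwise at distance at least~$5$ in $H-u$, each hanging off a neighbour $y_i$ of $u$, and it suffices to find an index $i$ with $\dist_{H-u}(x,x_i)\ge 5$. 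I would fix a path decomposition of $H$ of width at most~$2$ and note that deleting $u$ from every bag yields a path decomposition of $H-u$ of width at most~$2$ in which the bags meeting the former interval of $u$ contain at most two other vertices each. Ordering $x_1,\dots,x_5$ by position along this decomposition, their pairwise separation in $H-u$ forces the attachment points $y_1,\dots,y_5$ into well-separated positional regions; the single attachment point $y$ of $x$ lies in one such region, and a path of length at most~$4$ in $H-u$ starting at $x$ cannot reach an $x_i$ whose region is positionally far from that of $y$. Because there are five well-separated targets and $x$ sits near only boundedly many of them, at least one $x_i$ satisfies $\dist_{H-u}(x,x_i)\ge 5$, hence $\dist_{G-u}(x,x_i)\ge 3$, as required.

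The hard part will be making the positional/separation step precise: unlike the cyclic boundary-cycle argument of Lemma~\ref{lem:three}, here I must quantify, using only that bags have size at most~$3$ (so that $H-u$ is ``linearly'' structured near $u$), how many of the five well-separated vertices $x_i$ a vertex $x$ at distance~$2$ from $u$ can be close to in $G-u$. This is exactly where the value five (rather than three) is needed, reflecting that a linear decomposition has two unbounded directions and that the attachment of $x$ must be bracketed away from at least one extreme; the $c_1$-twin-boundedness hypothesis is available to keep the number of vertices crowding any single region bounded, so that ``positionally near'' can be converted into a genuine bound on $\dist_{H-u}$.
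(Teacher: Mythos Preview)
Your overall strategy matches the paper's: Lemma~\ref{lem:not-incl} for the easy direction, the trivial subcase $x=x_j$, and then a positional argument in a path decomposition. Your facts (a) and (b) are correct, but the translation to $H-u$ distances is a detour the paper does not take, and it leaves what you yourself flag as the ``hard part'' only sketched. In particular, your expectation that $c_1$-twin-boundedness will be needed to finish is a red herring: the paper's proof does not use it at all.

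The idea you are missing is that one can argue \emph{directly in $G-u$} using three specific bags of the path decomposition of $H$ as separators. Since each $x_i$ is in $N_G(u)\setminus N_H[u]$, it reaches $u$ through some $v_i\in N_H(u)$ with $x_iv_i\in E_H$; the distance hypothesis forces $v_iv_j,\,v_ix_j\notin E_H$ for $i\neq j$. Ordering $v_1,\dots,v_5$ along the decomposition, one checks that the bags $\{v_2,x_2,u\}$, $\{v_3,x_3,u\}$, $\{v_4,x_4,u\}$ must occur. Now locate $x$ relative to these bags. If, say, $x$ lies before $\{v_2,x_2,u\}$, then any shortest $(x,x_4)$-path in $G-u$ must cross both separators $\{v_2,x_2\}$ and $\{v_3,x_3\}$: either it hits some $y\in\{x_2,x_3\}$ (giving length at least $1+\dist_{G-u}(y,x_4)\ge 4$), or it hits some $v_j\in\{v_2,v_3\}$ (giving length at least $1+\dist_{G-u}(v_j,x_4)\ge 3$, since $\dist_{G-u}(v_j,x_4)\ge 2$ lest $\dist_{G-u}(x_j,x_4)\le 2$). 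The remaining positions of $x$ are symmetric, always targeting an $x_i$ two separator bags away. This is precisely why five witnesses are needed: they guarantee three \emph{middle} bags $\{v_i,x_i,u\}$, so that wherever $x$ falls, two of them lie between $x$ and some $x_i$. Your intuition about ``two unbounded directions'' is related but not quite the mechanism; the count is $2$ (separators needed) $+$ $2$ (endpoints $v_1,v_5$ needed to pin down the middle bags) $+$ $1$.
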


\begin{proof}
Let $x\in N_G(u)$. First suppose that $\dist_{G-u}(x,x_i)\geq 3$ for some $1\leq i\leq 5$. 
Then, by Lemma~\ref{lem:not-incl} we find that $xu\notin E_H$. 

Now suppose that $xu\notin E_H$. If $x=x_j$ for some $j\in \{1,\ldots,5\}$, then $\dist_{G-u}(x,x_i)\geq 3$ for $i\neq j$.
Hence we may assume that
$x\notin \{x_1,\ldots,x_5\}$. As $\dist_{G-u}(x_i,x_j)\geq 3$ for $1\leq i<j\leq 5$, Lemma~\ref{lem:not-incl} tells us that  $ux_i\notin E_H$ 
for $i=1,\ldots,5$. As $ux_i\in E_G$ for $i=1,\ldots,5$, this means that for $i=1,\ldots,5$, there exists a vertex~$v_i$ such that $x_iv_i\in E_H$ and $v_iu\in E_H$. 
We observe that
$v_iv_j\notin E_H$ and $v_ix_j\notin E_H$ for $i\neq j$, 
as otherwise $\dist_{G-u}(x_i,x_j)\leq 2$. 
Assume that $v_1,\ldots,v_5$ appear in this order in the path decomposition of~$H$. Then the path decomposition of $H$ contains
the sets $\{v_2,x_2,u\}$, $\{v_3,x_3,u\}$ and $\{v_4,x_4,u\}$ as bags. 

First consider the case where $x$ appears before $v_2$ in the path decomposition of~$H$. 
If a shortest path between $x$ and $x_4$ in $G-u$ contains
$y\in \{x_2,x_3\}$, then $\dist_{G-u}(x,x_4)\geq \dist_{G-u}(y,x_4)+1\geq 4$.
Otherwise a shortest path between $x$ and~$x_4$ in $G-u$ must contain either $v_2$, $v_3$, which are both not adjacent to $x_4$ in $G-u$, or another neighbour of $u$ that appeared previously in the path decomposition and has no common neighbour with $x_4$ in $H$.
Assume without loss of generality that it contains $v_2$.
We have $\dist_{G-u}(v_2,x_4)\geq 2$, as otherwise $\dist_{G-u}(x_2,x_4)<3$. As $\dist_{G-u}(v_2,x_4)\geq 2$, we obtain $\dist_{G-u}(x,x_4)\geq 3$. 

Now consider the case where $x$ appears between $\{v_2,x_2,u\}$ and $\{v_3,x_3,u\}$. Then we consider $x_5$ instead of $x_4$. By the same argument as above we find that $\dist_{G-u}(x,x_5)\geq 3$ due to the existence of bags $\{v_3,x_3,u\}$ and $\{v_4,x_4,u\}$. The other cases follow by symmetry.      
\end{proof}

Let $G$ be a $c_1$-twin-bounded graph that has a minimal pathwidth-$2$ root~$H$.
We define the following two sets for a vertex $u$ with $d_H(u)\geq c_2$:
$$R_u=\{ w\in N_G(u)~|~uw\in E_H\} \mbox{ and } B_u=\{ w\in N_G(u)~|~uw\notin E_H\}.$$
For a vertex $v\in B_u$ we define the set 
$$X_v=\{ x\in R_u ~|~ vx\in E_G\}.$$
Using the above notions we prove the following lemma, in which we identify edges that do not belong to a minimal pathwidth-2 root.

\begin{lemma} \label{lem:pwexclude}
Let $G$ be a $c_1$-twin-bounded graph that has a minimal pathwidth-$2$ root~$H$.
Let $x,y\in N_H(u)$ for some vertex $u$ with $d_H(u)\geq c_2$. If there is no vertex $v\in B_u$ with $x,y\in X_v$, then $xy\notin E_H$.
\end{lemma}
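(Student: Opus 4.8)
The plan is to prove this by contradiction, recognising it as the pathwidth analogue of Lemma~\ref{lem:non-edges}. First I would unwind the definitions. Since $uw\in E_H$ forces $w\in N_G(u)$, we have $R_u=N_H(u)$; since $w\in N_G(u)$ with $uw\notin E_H$ means exactly $\dist_H(u,w)=2$, we have $B_u=N_H^2(u)$; and for $v\in B_u$, $X_v=\{x\in N_H(u)\mid vx\in E_G\}=N_G(v)\cap N_H(u)$. Consequently the family $\{X_v\mid v\in B_u\}$ is precisely the family $S(H,u)$ of Lemma~\ref{lem:non-edges}, and the hypothesis ``no $v\in B_u$ has $x,y\in X_v$'' is literally its hypothesis. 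So one option is to invoke Lemma~\ref{lem:non-edges} directly to conclude $xy\notin E_H$.

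For a self-contained argument I would instead assume $xy\in E_H$ and derive a contradiction. As $x,y\in N_H(u)$, the three vertices $u,x,y$ are pairwise adjacent in $H$, so Lemma~\ref{obs:triangle} yields a vertex $z\neq u$ with $zu\notin E_H$ that is adjacent in $H$ to exactly one of $x,y$; say $zx\in E_H$. (Lemma~\ref{obs:triangle} is stated for minimal square roots, but its proof only needs that deleting an edge of $H$ keeps $H$ in the class while destroying the root property; since graphs of pathwidth at most~$2$ are closed under edge deletion, it applies verbatim to the minimal pathwidth-$2$ root $H$. The degree bound $d_H(u)\geq c_2$ is not used; it is inherited only because $R_u,B_u$ are defined under it.)

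The two memberships then need to be checked, and this is the only place requiring care, namely translating $H$-adjacency into $G$-adjacency. For $z\in B_u$: from $zx\in E_H$ and $xu\in E_H$ we get $\dist_H(z,u)\leq 2$, so $zu\in E_G$, while $zu\notin E_H$; also $z\neq x,y$ since $xu,yu\in E_H$ but $zu\notin E_H$. For $x,y\in X_z$: we have $x\in N_H(u)=R_u$ and $zx\in E_H\subseteq E_G$, so $x\in X_z$; and $y\in N_H(u)=R_u$, while $zx\in E_H$ together with $xy\in E_H$ gives $\dist_H(z,y)\leq 2$, hence $zy\in E_G$ and $y\in X_z$. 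Thus $z\in B_u$ satisfies $x,y\in X_z$, contradicting the hypothesis, and therefore $xy\notin E_H$.
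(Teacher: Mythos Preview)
Your proof is correct and follows essentially the same route as the paper: assume $xy\in E_H$, apply Lemma~\ref{obs:triangle} to the triangle $u,x,y$ to obtain a vertex $z$ adjacent in $H$ to one of $x,y$ but not to $u$, and then verify $z\in B_u$ and $x,y\in X_z$. Your write-up is in fact a bit more careful than the paper's (you explicitly check $zu\in E_G$ and $z\neq x,y$), and your observation that the statement is literally Lemma~\ref{lem:non-edges} once one unwinds $R_u=N_H(u)$, $B_u=N_H^2(u)$, and $X_v=N_G(v)\cap N_H(u)$ is a clean way to see it that the paper does not make explicit.
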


\begin{proof}
We prove the lemma by contraposition. Assume that $xy\in E_H$.
By Lemma~\ref{obs:triangle}, there exists a vertex $v$ that is, in $H$, adjacent to at least one of $x,y$, but not to $u$. The latter implies that $v\in B_u$. Hence the set $X_v$ is defined. Say $vx\in E_H$, which implies that $vx\in E_G$. As $xy\in E_H$, we also find that $vy\in E_G$. Hence, as $\{x,y\}\subseteq R_u$, both $x$ and $y$ are in $X_v$.
\end{proof}

We will also need the following lemma.

\begin{lemma}\label{lem:c3}
Let $G$ be a $c_1$-twin-bounded graph that has a minimal pathwidth-$2$ root~$H$.
Let $v\in B_u$ for some vertex $u$ with $d_H(u)\geq c_2$. 
Then the number of bags in the path decomposition of $H$ containing $u$ and a vertex of $X_v$ is at most $c_3=15c_1+4$.
\end{lemma}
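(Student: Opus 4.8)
The plan is to argue inside a fixed width-$2$ path decomposition of $H$ and to localise the relevant bags around the position of $v$. First I would record the reformulations that make the statement concrete: since $E_H\subseteq E_G$ we have $R_u=N_H(u)$, and since $vx\in E_G$ is equivalent to $\dist_H(v,x)\le 2$, the set $X_v$ is exactly $\{x\in N_H(u):\dist_H(v,x)\le 2\}$. Moreover $v\in B_u$ means $uv\in E_G\setminus E_H$, so $\dist_H(u,v)=2$ and $u,v$ have a common neighbour $w$ in $H$. For every vertex $y$ let $I_y$ denote the (contiguous) set of bags containing $y$; then $I_w$ meets both $I_u$ and $I_v$, tying the two intervals together. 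The quantity to bound is $|\mathcal{A}|$, where $\mathcal{A}$ is the set of bags containing $u$ that also contain a vertex of $X_v$.

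Next I would split $I_u$ according to the interval $I_v=[a,b]$ into the part lying inside $[a,b]$ (the \emph{middle}) and the two parts lying strictly to the left of $a$ and strictly to the right of $b$. In the middle every bag contains both $u$ and $v$, so by Lemma~\ref{lem:pwboundxi} (applied to the pair $u,v$) there are at most $3c_1+2$ such bags. For the right part I would prove the key localisation claim: if a bag $C_i$ with $i>b$ contains $u$ and some $x\in X_v$, then either $x$ itself, or a vertex $z$ realising $\dist_H(x,v)=2$ (so $zv,zx\in E_H$), lies in the boundary bag $C_{b+1}$. This follows from a short interval argument — whichever of $x,z$ has an interval meeting both $[a,b]$ and a position $>b$ must pass through position $b+1$ — together with $\pw(H)\le 2$, which guarantees that $C_{b+1}$ has at most two vertices besides $u$, say $p_1,p_2$. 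Hence every $X_v$-vertex occurring strictly to the right of $b$ is one of the $\le 2$ boundary vertices $p_1,p_2$ or a common neighbour in $H$ of $u$ and one of $p_1,p_2$; the left part is symmetric.

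Finally I would turn this localisation into a count. For a fixed boundary vertex $p_j$, Lemma~\ref{lem:pwcommonneigh} bounds the number of common neighbours of $u$ and $p_j$ by $c_1+2$, and Lemma~\ref{lem:neighxi} (applied to the pair $u,p_j$) forces all but the two extreme such common neighbours to have neighbourhood exactly $\{u,p_j\}$, so that the bags witnessing them are confined to $I_u\cap I_{p_j}$, an interval of at most $3c_1+2$ bags by Lemma~\ref{lem:pwboundxi}. Summing the middle contribution, the two boundary vertices on each side, and these confined windows yields the stated bound $c_3=15c_1+4$ after routine bookkeeping (the case $I_u\cap I_v=\varnothing$ is easier, as then all of $\mathcal{A}$ lies to one side of $I_v$ and is anchored at the single first $u$-bag). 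I expect the main obstacle to be precisely the distance-$2$ case: a connector $z$ may have a long interval, so it is not a priori clear that the $u$-bags containing the corresponding $X_v$-vertices stay within a window of bounded size. The resolution is to anchor everything at the width-$2$ boundary bag $C_{b+1}$ and to exploit that only the two extreme common neighbours of each anchor can escape the window $I_u\cap I_{p_j}$.
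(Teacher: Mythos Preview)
Your plan and the paper's proof pursue genuinely different strategies. The paper fixes the first and last bags $A,A'$ that contain $u$ together with some element of $X_v$, and then does a case analysis on where $v$ sits relative to $A$ and $A'$. In each case one picks the $X_v$-vertex $x$ in $A'$ (and symmetrically in $A$), observes that either $xv\in E_H$ or there is a connector $y$ with $xy,yv\in E_H$, and then notes that the sequence of bags $A,\{v,y,u\},\{y,x,u\},A'$ (or a shorter subsequence, depending on the case) chops $[A,A']$ into at most five consecutive sub-intervals, each of which carries a fixed pair of vertices $\{u,\cdot\}$ and is therefore bounded by Lemma~\ref{lem:pwboundxi}. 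The constant $15c_1+4$ comes from five applications of that lemma in the worst case. There is no use of Lemma~\ref{lem:pwcommonneigh} or Lemma~\ref{lem:neighxi} at all.

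Your boundary-bag anchoring at $C_{b+1}$ is a valid alternative for proving that the count is $O(c_1)$, and your localisation claim (either $x$ or a distance-$2$ connector lies in $C_{b+1}$) is correct. However, the phrase ``routine bookkeeping'' conceals a real step. You correctly note that the two extreme common neighbours of each anchor $p_j$ may escape the window $I_u\cap I_{p_j}$; what you do not say is how many bags they can drag along. One has to apply Lemma~\ref{lem:pwboundxi} once more to each such extreme vertex $x$ to bound $|I_u\cap I_x|$, and then chain this window to $I_u\cap I_{p_j}$ via the bag $\{x,u,p_j\}$. With two anchors per side and two extreme vertices per anchor this yields a constant of the order $15c_1+O(1)$, but not the paper's $15c_1+4$ on the nose; your argument as written does not reach that specific value. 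Since only a constant is needed downstream in Lemma~\ref{lem:pwbdtw}, this is a cosmetic rather than a fatal gap, but you should either carry out the chaining explicitly or relax the claimed constant.
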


\begin{proof}
Let $A$ and $A'$ be the first and the last bag in the path decomposition of $H$ containing $u$ and a vertex of $X_v$. 
First suppose that $v$ belongs to both $A$ and $A'$. Then all the bags between $A$ and $A'$ (including $A$ and $A'$ themselves) contain both $u$ and $v$. 
Recall that $X\nsubseteq Y$ for any two bags $X$ and $Y$. Hence, for every vertex appearing between $A$ and $A'$ we have exactly one new bag. 
By Lemma~\ref{lem:pwboundxi}, the number of such vertices, and thus the number of bags between $A$ and $A'$, is at most $3c_1$.

Now suppose that $v$ appears before $A$ but is not contained in $A$. By definition, $A'$ contains a vertex $x\in X_v$. Since the bags containing $v$ appear before $A$, we find that $xv\notin E_H$. As $x\in X_v$, this means that $\dist_H(x,v)=2$.
Hence there exists a vertex $y$ such that $xy,yv\in E_H$. This means that there exists a bag containing $\{v,y\}$. As this bag contains~$v$, it is before $A$ in the path decomposition of $H$.
It also means that there exists a bag 
$\{x,y,u\}$.
As $x\in X_v$, this bag must be between $A$ and $A'$. 
As bags between $A$ and $\{x,y,u\}$  contain $\{y,u\}$, there are at most  $3c_1$ of them  due Lemma~\ref{lem:pwboundxi}.
By the same arguments as in the first case, the number of bags between $\{x,y,u\}$ and $A'$ is at most $3c_1$ as well.
Hence, the number of bags between $A$ and $A'$ is at most $6c_1+1$.
By symmetry, we find the same bound if $v$ appears after $A'$ but is not contained in $A'$.

Now suppose that $v$ belongs to $A$ but not to $A'$.
Let $x\in X_v$ be such that $x\in A'$. 
If $xv\in E_H$, the number of bags between $A$ and $A'$ can again be bounded by $6c_1+1$, by a similar argument as used in the previous case. Assume $xv\notin E_H$ and
let $xyv$ be a path between $x$ and $v$. There exists a bag containing $\{v,y,u\}$ and a bag containing $\{y,x,u\}$ that appears after $\{v,y,u\}$. 
By the same arguments as before, the constant $3c_1$ bounds the number of bags between $A$ and $\{v,y,u\}$; between $\{v,y,u\}$ and $\{y,x,u\}$; and between $\{y,x,u\}$ and $A'$.
Hence, the number of bags between $A$ and $A'$ is at most $9c_1+2$.
By symmetry, we find the same bound if $v$ belongs to $A'$ but not to $A$.

Finally suppose that $v$ appears between $A$ and $A'$ but is not contained in them. We proceed in the same way as before,
and the worst scenario is when the vertices of $X_v$ contained in $A$ and $A'$ are not adjacent to $v$. Let $x\in X_v\cap A$ and $y\in V_H$ be such that $xyv$ is a path between $x$ and $v$. We take $x'$ and $y'$ analogously with respect to $A'$. By Lemma~\ref{lem:pwboundxi}, the constant $3c_1$ bounds the number of bags between the following pairs of bags: $A$ and $\{x,y,u\}$; $\{x,y,u\}$ and $\{v,y,u\}$; $\{v,y,u\}$ and $\{v,y',u\}$; $\{v,y',u\}$ and $\{x',y',u\}$; and $\{x',y',u\}$ and $A'$.  
The total number of bags between $A$ and $A'$ is therefore at most $c_3=15c_1+4$. 
\end{proof}

Let $G$ be a $c_1$-twin-bounded graph that has a minimal pathwidth-$2$ root~$H$. 
Let $U$ be the set of vertices of $H$ with $d_H(u)\geq c_2$. 
For every $u\in U$ we do the following:
\begin{itemize}
\item for every two distinct vertices $x,y\in N_H(u)$ for which no vertex $v\in B_u$ exists with $x,y\in X_v$, delete the edge $xy$ from $G$ (note that this edge exists in $G$).
\end{itemize}
We denote the resulting graph by $G_H$; note that $G_H$ is a spanning subgraph of $G$.
We now prove, in our last structural lemma, that the class of graphs~$G_H$ has bounded pathwidth.

\begin{lemma} \label{lem:pwbdtw}
Let $G$ be a $c_1$-twin-bounded graph that has a minimal pathwidth-$2$ root~$H$.
Let $U$ be the set of vertices of $H$ with $d_H(u)\geq c_2$. 
Then  $\pw(G_H)\leq c_4$ for $c_4=3(c_2-1)^{\lfloor\frac{c_3+1}{2}\rfloor+1}$.
\end{lemma}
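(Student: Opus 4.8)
The plan is to follow the template of the proof of Lemma~\ref{lem:bound-tw} in the outerplanar setting. I will build from $H$ an auxiliary graph $\hat H$ of pathwidth at most~$2$ and bounded maximum degree such that $H$ is a contraction of $\hat H$ and $G_H$ is a minor of a bounded power of $\hat H$; the bound on $\pw(G_H)$ then falls out of Lemma~\ref{lem:tw-power} together with Lemma~\ref{obs:minor}. Concretely, I aim to show $\Delta(\hat H)\le c_2-1$ and that $G_H$ is a minor of $\hat H^{\,c_3+1}$, so that Lemma~\ref{lem:tw-power} gives $\pw(\hat H^{\,c_3+1})\le(\pw(\hat H)+1)\Delta(\hat H)^{\lfloor(c_3+1)/2\rfloor+1}\le 3(c_2-1)^{\lfloor(c_3+1)/2\rfloor+1}=c_4$, since $\pw(\hat H)\le 2$.

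\textbf{Construction of $\hat H$.} Fix a width-$2$ path decomposition $(X_1,\dots,X_r)$ of $H$. For each $u\in U$ the bags containing $u$ form a contiguous interval $X_a,\dots,X_b$; I replace $u$ by a path $u_au_{a+1}\cdots u_b$, putting $u_i$ in place of $u$ in $X_i$, attaching each original neighbour $w$ of $u$ to the \emph{single} copy $u_i$ for one bag $X_i$ containing both $u$ and $w$, and inserting a size-$2$ bag $\{u_i,u_{i+1}\}$ between the modified bags $X_i$ and $X_{i+1}$ to realise the path edges. Performing this for all $u\in U$ yields $\hat H$. I would verify that the three path-decomposition axioms still hold and that every bag has size at most~$3$, so $\pw(\hat H)\le 2$; since contracting each inserted path $u_a\cdots u_b$ recovers exactly $u$ and its $H$-neighbourhood, $H$ is a contraction of $\hat H$. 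For the degree bound, a copy $u_i$ has at most two path-neighbours and at most two further neighbours coming from the at most two non-$u$ vertices of $X_i$, so $d_{\hat H}(u_i)\le 4$; a vertex $v\notin U$ keeps exactly one $\hat H$-edge per incident $H$-edge (each such edge is assigned to one copy of its split endpoint), whence $d_{\hat H}(v)=d_H(v)\le c_2-1$. As $c_2\ge 5$, this gives $\Delta(\hat H)\le c_2-1$.

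\textbf{$G_H$ is a minor of $\hat H^{\,c_3+1}$.} Let $\hat G$ be obtained from $\hat H^{\,c_3+1}$ by contracting each inserted path back to its vertex $u$, so $V_{\hat G}=V_G$; it suffices to show $G_H$ is a subgraph of $\hat G$. Edges of $G_H$ that are edges of $H$ survive because $H$ is a contraction of $\hat H\subseteq\hat H^{\,c_3+1}$. For an edge $xy$ of $G_H$ with $xy\notin E_H$, there is a common neighbour $w$ of $x,y$ in $H$. If $w\notin U$, then $w$ is unchanged and suitable representatives of $x,y$ lie at distance at most~$2$ in $\hat H$, so $xy\in E_{\hat G}$. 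The crucial case is $w=u\in U$: since $xy$ was not deleted when forming $G_H$, its defining rule yields a vertex $v\in B_u$ with $x,y\in X_v$. The edges $ux$ and $uy$ are realised in bags $X_i$ and $X_j$ that each contain $u$ together with a vertex of $X_v$ (namely $x$, resp.\ $y$); by Lemma~\ref{lem:c3} all bags containing $u$ and a vertex of $X_v$ fit into a window of at most $c_3$ bags, and as the bags containing $u$ are consecutive this forces $|i-j|\le c_3-1$. Hence $\hat H$ contains the path $x\,u_i\,u_{i+1}\cdots u_j\,y$ of length at most $c_3+1$, so representatives of $x$ and $y$ are adjacent in $\hat H^{\,c_3+1}$ and $xy\in E_{\hat G}$.

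\textbf{Conclusion and main obstacle.} Combining the two parts, $G_H$ is a minor of $\hat H^{\,c_3+1}$, and the claimed bound follows from Lemma~\ref{obs:minor} and Lemma~\ref{lem:tw-power}. The step I expect to demand the most care is the construction of $\hat H$: I must split all vertices of $U$ simultaneously while keeping every bag of size at most~$3$ and all three path-decomposition axioms intact, and I must argue the one-copy attachment so that no vertex $v\notin U$ acquires extra degree (giving $\Delta(\hat H)\le c_2-1$ rather than a larger multiple). The conceptual heart, however, is the use of Lemma~\ref{lem:c3}: the edge-deletion rule defining $G_H$ is tailored precisely so that every surviving edge $xy$ with common neighbour $u\in U$ is witnessed by some $v\in B_u$ with $x,y\in X_v$, which confines $x$ and $y$ to within distance $c_3+1$ in $\hat H$ and thereby controls the exponent in the final bound.
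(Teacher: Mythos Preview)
Your proposal is correct and follows essentially the same approach as the paper: build an auxiliary graph $\hat H$ of pathwidth at most~$2$ and maximum degree at most $c_2-1$ by splitting each high-degree vertex $u\in U$ into a path along its interval of bags, show that $G_H$ is a minor of $\hat H^{\,c_3+1}$ by invoking Lemma~\ref{lem:c3} for the surviving edges witnessed through some $v\in B_u$, and conclude via Lemmas~\ref{obs:minor} and~\ref{lem:tw-power}. The paper's construction differs only cosmetically---it introduces a new copy of $u$ each time a \emph{new neighbour} of $u$ appears rather than one copy per bag---and both versions leave the same minor bookkeeping (splitting several vertices of $U$ simultaneously while keeping bag size at most~$3$) to the reader, exactly as you anticipated in your ``main obstacle'' paragraph.
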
 

\begin{proof}
For each $u\in U$, we do the following. Consider the bags $B_1,\ldots, B_t$ in the path decomposition of $H$ containing $u$ and its neighbours. Starting from $B_1$, we pick the first bag where a new neighbour of $u$ appears. Let $B_i$ be such a bag. As $B_i$ contains at least one vertex that is not contained in $B_{i-1}$, we have $|B_i\cap B_{i-1}|\leq 2$, while we already know that $u\in B_i\cap B_{i-1}$. 
In the bags $B_1,\ldots B_{i-1}$, we replace~$u$ by a new vertex~$u_1$. We create a new bag between $B_{i-1}$ and $B_i$ containing $u_1$, $u_2$ and $(B_i\cap B_{i-1})\setminus \{u\}$. In the bags $B_i,\ldots, B_t$, we replace $u$ by $u_2$. In general, for every bag $B_k$ found containing a new neighbour of $u$ we do the following:
\begin{enumerate}
\item Create a new bag between $B_{k-1}$ and $B_k$ containing $u_{j+1}$ and the vertices of $B_{k-1}\cap B_k$ (note that $u_j\in B_{k-1}\cap B_k$).
\item In the bags $B_k,\ldots, B_t$, replace $u_j$ by $u_{j+1}$.
\item In $H$, add an edge between $u_j$ and $u_{j+1}$ and an edge between $u_{j+1}$ and the newly found neighbour of $u$. 
\end{enumerate} 
Let $\hat{H}$ be the graph obtained from $H$ by the above procedure. Note that $H$ is a contraction of $\hat{H}$, as $H$ can be obtained by contracting the edges of the paths created for each vertex of $U$. As we constructed a path decomposition of $\hat{H}$ with the same width as the one  $H$, we have $\pw(\hat{H})\leq 2$. 

If $v\in V_H\setminus U$, then $d_H(v)<c_2$
and, in each step of the above procedure, the degree of $v$ is maintained. The vertices $u_i$ created for each vertex of $U$ have degree 
at most~$3<c_2$.
Thus the graph $\hat{H}$ has degree
at most $c_2-1$.

We claim that $G_H$ is a minor of $\hat{H}^{c_3+1}$. Let $\hat{G}$ be obtained from $\hat{H}^{c_3+1}$ by 
contracting all edges 
of the paths created for each vertex of $U$. We may assume that $V_{\hat{G}}=V_{G_H}$ and show below that $G_H$ is a subgraph of $\hat{G}$. 

Every edge of $G_H$ that belongs to $E_H$ is also an edge of $\hat{H}^{c_3+1}$. Let $xy\in E_{G_H}$ be such that $xy\notin E_H$. As $H$ is a square root of $G$, there exists $u\in V_{G_H}$ such that $xu,yu\in E_H$. Let $X'$ and $Y'$ be the sets of vertices of $\hat{H}$ that were contracted to $x$ and $y$, respectively. If $u\notin U$, then by the construction of $\hat{H}$ there are vertices $x'\in X'$ and $y'\in Y'$ such that $x'u,y'u\in E_{\hat{H}}$ and therefore $x'y'\in\hat{H}^{c_3+1}$ and $xy\in E_{\hat{G}}$. If $u\in U$, there exists a path $u_i\ldots u_j$ in $\hat{H}$ and vertices $x'\in X$ and $y'\in Y$ such that $x'u_i\in E_{\hat{H}}$ and $u_jy'\in E_{\hat{H}}$. Since $xy\in E_{G_H}$ and $u\in U$, we know that $x,y\in X_v$ for some $v$, otherwise we would have deleted the edge $xy$ when constructing $G_H$. As the number of bags containing $u$ and vertices of $X_v$ is at most~$c_3$ by Lemma~\ref{lem:c3}, the length of the path $u_i\ldots u_j$ is at most~$c_3$. This implies that $\dist_{\hat{H}}(x',y')\leq c_3+1$ 
and hence $x'y'\in E_{\hat{H}^{c_3+1}}$, which in turn implies that $xy \in E_{\hat{G}}$.
Since $G_H$ is a subgraph of $\hat{G}$ and $\hat{G}$ is a contraction of $\hat{H}^{c_3+1}$, we conclude that $G_H$ is a minor of $\hat{H}^{c_3+1}$.

As $\pw(\hat{H})\leq 2$
and $\hat{H}$ has bounded degree, we find that $\pw(\hat{H}^{c_3+1})\leq 3(c_2-1)^{\lfloor\frac{c_3+1}{2}\rfloor+1}$  due to Lemma~\ref{lem:tw-power}.
 Since $G_H$ is a minor of $\hat{H}^{c_3+1}$, we find that $\pw(G_H)\leq \pw(\hat{H}^{c_3+1})$ due to Lemma~\ref{obs:minor}.
Hence, $\pw(G_H)  \leq 3(c_2-1)^{\lfloor\frac{c_3+1}{2}\rfloor+1}$ and we can take $c_4=3(c_2-1)^{\lfloor\frac{c_3+1}{2}\rfloor+1}$.    
\end{proof}

\subsection{The Algorithm}\label{secpw:algo}

In this section, we construct our $O(n^6)$-time algorithm for \textsc{Pathwidth-$2$ Root}, that is, we are now ready to prove Theorem~\ref{thm:pw2}. 
In order to dot this we follow the proof of Theorem~\ref{thm:outerplanar} and replace in that proof the basic results for outerplanar graphs from
Section~\ref{s-square} and the structural results for graphs with outerplanar roots from Section~\ref{sec:tech} with the basic results for graphs of pathwidth at most~2 from Section~\ref{s-treepath} and the structural results for graphs with pathwidth-2 roots from Section~\ref{secpw:strc}.

\medskip
\noindent
{\bf  Theorem~\ref{thm:pw2} (restated).}
{\it \textsc{Pathwidth-$2$ Root} can be solved in $O(n^6)$ time.}

\begin{proof}
Let $G$ be the input graph. We may assume without loss of generality that $G$ is connected and has $n\geq 2$ vertices. We first exhaustively apply the following rule in order to reduce the number of true twins each vertex can have in a (potential) 
pathwidth-2 root of $G$.

\medskip
\noindent
{\bf Deleting a true twin.} If $G$ has a set $X$ of true twins of size at least $c_1+1$,
then delete an arbitrary vertex $u\in X$ from $G$.

\medskip
\noindent
The following claim shows that this rule is safe.

\medskip
\noindent
{\bf Claim 1.}
{\it If $G'=G-u$ is obtained from $G$ by the application of {\bf deleting a true twin}, then $G$ has a pathwidth-$2$ root if and only if $G'$ has a pathwidth-$2$ root.}

\medskip
\noindent
We proof Claim~1 as follows.
First suppose that $G$ has a pathwidth-$2$ root~$H$. 
We may assume without loss of generality that $H$ is minimal.
Note that $H-z$ has pathwidth at most~2 for every $z\in V_H$.
Since $|W|\geq c_1+1$, there is a vertex $v\in W$ 
satisfying condition~(i) of Lemma~\ref{lem:pwreduction1} or there are three vertices $v_1,v_2,v_3\in W$ satisfying condition~(ii) of Lemma~\ref{lem:pwreduction1}.
As the vertices of $W$ are true twins, we take $u=v$ in the first case and $u=v_1$ in the second case to find that
 $H-u$ is a pathwidth-2 root of $G-u$.

Now suppose that $G-u$ has a pathwidth-2 root~$H'$, which we may assume to be minimal.
Since $|W\setminus \{u\}|\geq c_1$, there is a vertex $v$ satisfying condition~(i) of Lemma~\ref{lem:pwreduction1}
or there are three vertices $v_1,v_2,v_3\in W$ satisfying condition~(ii) of Lemma~\ref{lem:pwreduction1}. 

In the first case, let $w$ be the (unique) vertex of $H'$ that is adjacent to~$v$. We add $u$ and the edge $uw$ to $H'$ to obtain a square root $H$ of $G$. We still need to prove that $\pw(H)\leq 2$. We may assume that $v$ appears in only one bag (which also contains $w$) in the path decomposition of $H'$. Otherwise we can delete all other occurrences of $v$ and obtain another path decomposition of $H'$ that has width at most~2. Let $A_i$ be the bag containing $\{v,w\}$, and let $A_{i+1}$ be the next bag of the path decomposition. If $w\in A_{i+1}$, then  we create a new bag between $A_i$ and $A_{i+1}$ containing $(A_i\cap A_{i+1})\cup \{u\}$. If $w\notin A_{i+1}$, then $|A_i\cap A_{i+1}|\leq 1$, and the new bag will contain $(A_i\cap A_{i+1})\cup \{u,w\}$. Note that in both cases
the new bag contains at most three vertices. Hence we obtained a path decomposition of $H$ that has width at most~2. 

In the second case, let $N_H(v_1)=N_H(v_2)=N_H(v_3)=\{w,y\}$.
We add $u$ and the edge $uw$, $uy$ to $H'$ to obtain a square root $H$ of $G$. We still need to prove that $\pw(H)\leq 2$.
Since $N_H(v_1)=N_H(v_2)=N_H(v_3)=\{w,y\}$, the path decomposition of $H'$ contains a bag~$A_i=\{w,y,v_i\}$ for some $i\in\{1,2,3\}$. Since $v_i$ is only adjacent to $w$ and $y$, we may assume that $A_i$ is the only bag in the path decomposition containing $v_i$. Let $A_{i+1}$ be the next bag of the path decomposition. We create a new bag $\{u,w,y\}$ between $A_i$ and $A_{i+1}$ to obtain a path decomposition of $H$ that has width at most~2.      
This proves Claim~1.

\medskip
\noindent
For simplicity, we call the graph obtained by exhaustive application of {\bf deleting a true twin} $G$ again.
The next claim immediately follows from the rule {\bf deleting a true twin}.

\medskip
\noindent
{\bf  Claim 2.} {\it The graph $G$ is $c_1$-twin-bounded.}

\medskip
\noindent
In the next stage of our algorithm we are going to label some edges of $G$ \emph{red} or \emph{blue} in such a way that the red edges are included in every minimal pathwidth-2 root of $G$, whereas the blue edges are excluded from any minimal pathwidth-2 root
 of~$G$. We let $R$ denote the set of red edges and $B$ the set of blue edges. We also construct a set of vertices~$U$ of~$G$ such that for every $u\in U$, the edges incident to $u$ are labeled red or blue. 

\medskip
\noindent
{\bf Labeling edges.} Set $U=\emptyset$, $R=\emptyset$ and $B=\emptyset$. For each $u\in V_G$ such that there are five distinct vertices $v_1,\ldots,v_5 \in N_G(u)$ that are at distance at least~$3$ from each other in $G-u$, do the following:
\begin{itemize}
\item[(i)] set $U=U\cup\{u\}$;
\item[(ii)] set $B'=\{ux\in E_G\mid \text{there is an } 1\leq i\leq 5\text{ such that }\dist_{G-u}(x,v_i)\geq 3\}$;
\item[(iii)] set $R'=\{ux\mid x\in N_G(u)\}\setminus B'$;
\item[(iv)] set $R=R\cup R'$ and $B=B\cup B'$;
\item[(v)] if $R\cap B\neq\emptyset$, then return a no-answer and stop.
\end{itemize}
Note that the above rule does not change the graph $G$ itself. 
Lemmas~\ref{lem:pwfive} and \ref{lem:pwlabelu}, combined with Claim~2, imply the following claim.

\medskip
\noindent
{\bf Claim 3.}
{\it If $G$ has a minimal pathwidth-$2$ root~$H$, then {\bf labeling edegs} does not stop in step~(v). Moreover, $R\subseteq E_H$ and $B\cap E_H=\emptyset$, and every vertex $u\in V_G$ with $d_H(u)\geq c_2$ is included in $U$.}

\medskip
\noindent
Next, we are going to find, for each $u\in U$, a set~$S$ of edges $xy$ with $xu,yu\in R$ that may be removed from $G$.

\medskip
\noindent
{\bf Deleting irrelevant edges.} Set $S=\emptyset$. For each $u\in U$ and every pair of distinct vertices $x,y\in N_G(u)$ such that $xu,uy\in R$ 
do the following:
\begin{itemize}
\item[(i)] if $xy\notin E_G$, then return a no-answer and stop;
\item[(ii)] if there is no $v\in N_G(u)$ such that $vu\in B$ and $x,y\in  N_G(v)$, then include $xy$ in $S$;
\item[(iii)] if $R\cap S\neq\emptyset$, then return a no-answer and stop;
\item[(iv)] remove the edges of $S$ from $G$.
\end{itemize}

By  combining Lemma~\ref{lem:pwexclude} with Claim~3 we obtain the following claim.

\medskip
\noindent
{\bf Claim 4.}
{\it If $G$ has a minimal pathwidth-$2$ root~$H$, then {\bf deleting irrelevant edges} does not stop in step~(i) or~(iii), and moreover,  $S\cap E_H=\emptyset$.}

Assume that we have not stopped and returned a no-answer after the execution of {\bf deleting irrelevant edges}.
Let $G'=G-S$. Again we find that a square root of $G$ may not be a square root of $G'$ and vice versa. However, we can prove the following claim.

\medskip
\noindent
{\bf Claim~5.}
{\it The graph~$G$
has a pathwidth-$2$ root if and only if there is a set $L\subseteq E_{G'}$ such that 
\begin{itemize}
\item[(i)] $R\subseteq L$ and $B\cap L=\emptyset$;
\item[(ii)] for every $xy\in E_{G'}$,  $xy\in L$ or there exists a vertex $z\in V_{G'}$ with $xz,zy\in L$;
\item[(iii)] for every two distinct edges $xz,yz\in L$, it holds that $xy\in E_{G'}$ or there is a vertex $u\in U$ with $ux,uy\in R$;
\item[(iv)] the graph $H=(V_G,L)$ has pathwidth at most~$2$.
\end{itemize}}

\medskip
\noindent
We prove Claim~5 as follows.
First suppose that $H$ is a minimal outerplanar root of $G$. By Claim~4 we find that
$E_H\cap S=\emptyset$, that is, $E_H\subseteq E_{G'}$. 
Let $L=E_H$.
Then (i) holds due to Claim~3, whereas (ii) and (iv) hold because $H=(V_G,L)$ is a pathwidth-2 root of $G$. 
To prove (iii) suppose that $xz$ and $zy$ are distinct edges of $L$ such that $xy\notin E_{G'}$.
As $H=(V_G,L)$ is a square root of $G$, this means that $xy\in E_G\setminus E_{G'}$, that is, $xy\in S$.
By definition of the rule {\bf deleting irrelevant edges}, this means that there must exist a vertex $u\in U$ such that $xu,uy\in R$.

Now suppose that there is a subset $L\subseteq E_{G'}$ such that (i)--(iv) hold.
Let $xy\in E_G$. If $xy\in E_{G'}$, then  $xy\in L$ or there is a vertex $z\in V_{G'}$ such that $xz,yz\in L$ by~(ii).
If $xy\in E_G\setminus E_{G'}=S$, then there is a vertex $u\in U$ such that $xu,uy\in R$ by (iii).
As $R\subseteq L$ by~(i), 
we find that $xu,uy\in L$. Hence $G$ is a subgraph of $(V_G,L)^2$. As $L\subseteq E_{G'}$, we find that $G=(V_G,L)^2$.
We conclude that $H=(V_G,L)$ is a square root of $G$. By (iv) we find that $H$ is a pathwidth-2 root of $G$.
Hence we have proven Claim~5.

\medskip
\noindent
It remains to check the existence of a set of edges $L$ satisfying (i)--(iv) of Claim~5 for a given triple $G'$, $R$, $B$,  
which is the final step of the algorithm. Notice that 
If $G$ has a minimal pathwidth-2 root $H$, then $G'$ is a subgraph of $G_H$ constructed in Section~\ref{secpw:strc}; this is due to
 Lemmas~\ref{lem:pwfive} and \ref{lem:pwexclude}. 
 By Lemma~\ref{lem:pwbdtw}, we find that $\pw(G')\leq \pw(G_H)\leq c_4$. 
 Hence we must return a no-answer and stop if $\pw(G') > c_4$.
 
Now suppose  $\pw(G')\leq c_4$. As $\tw(G')\leq \pw(G')$, this means that $\tw(G')\leq c_4$.
It is straightforward to verify that properties (i)--(iv) in Claim~5 can be expressed in MSO. In particular, to express outerplanarity in (iv), we combine  
Lemma~\ref{lem:pw2c} with Lemma~\ref{l-engel}. Afterwards we use Lemma~\ref{l-courcelle}.

The correctness of our algorithm follows from the above description and proofs of Claims~1--5.
It remains to evaluate the running time of our algorithm, which we do below.

\medskip
\noindent
We can verify in $O(n)$ time if two vertices of $G$ are true twins. This means that the classes of true twins can be constructed in $O(n^3)$ time. Therefore, the exhaustive application of {\bf deleting a simplicial true twin} costs $O(n^3)$ time.
For every vertex $u$, we can compute the distances between the vertices of $N_G(u)$ in $G-u$ in $O(n^3)$ time. 
This implies that {\bf labeling edges} can be done in $O(n^6)$ time. 
Applying {\bf deleting irrelevant edges} takes $O(n^4)$ time, as it takes $O(n^2)$ to process a pair $x,y$ and the number of 
such pairs is $O(n^2)$. We construct $G'$  in linear time. Finally, checking whether $\tw(G')\leq 3\cdot c_4$ and deciding whether there is a set of edges $L$ satisfying the required properties can be done in linear time by Lemma~\ref{l-bod} and~\ref{l-courcelle}, respectively. 
Hence the total running time is $O(n^6)$. This completes the proof of Theorem \ref{thm:pw2}.
\end{proof}

Similarly to {\sc Outerplanar Root}, we remark that one can find a a pathwidth-2 root of a graph if it exists using a dynamic programming algorithm.

\section{Conclusions}\label{s-con}
We proved that {\sc ${\cal H}$-Square Root} is polynomial-time solvable when ${\cal H}$ is the class of outerplanar graphs or the class of graphs of pathwidth at most~2. 
In fact, our technique allows us to obtain results that are more general than Theorems~\ref{thm:outerplanar} and~\ref{thm:pw2}.
Namely, we can  solve $\mathcal{H}$-{\sc Square Root} in polynomial time for every subclass~${\cal H}$ of outerplanar graphs or graphs of pathwidth at most~2, respectively, that satisfies the following two conditions:
\begin{itemize}
\item [(i)] ${\cal H}$ is closed 
under vertex deletion and edge deletion, and
\item [(ii)] ${\cal H}$ can be defined in CMSO.
\end{itemize}

We briefly sketch how this generalization 
can be obtained for subclasses of outerplanar graphs that satisfy conditions~(i) and~(ii). The proof for subclasses of pathwidth at most~2 is similar.

Let ${\cal H}$ be a subclass of outerplanar graphs that satisfy conditions~(i) and~(ii).
It is straightforward to show  
the result
if $\mathcal{H}$ is closed under {\it pendant vertex addition}, which means that every graph obtained from a graph $H\in\mathcal{H}$ by creating a new vertex and making it adjacent to a vertex of $H$ belongs to $\mathcal{H}$. In this case, we can simply repeat the proof of Theorem~\ref{thm:outerplanar}, as this property, together with condition~(i) 
ensures that {\bf deleting a simplicial true twin} is safe, while condition~(ii) guarantees that the remaining part of the algorithm remains correct. 

However, if $\mathcal{H}$ is not closed under pendant vertex addition, then we cannot claim that {\bf deleting a simplicial true twin} is sound. We can still show that the graph
$G-u$, where $u$ is a twin vertex, has a square root $H'\in {\cal H}$ if $G$ has a square root $H\in {\cal H}$, but the opposite might be false.
The reason is that  we cannot duplicate a pendant vertex of $H'$ to obtain a square root of $G$. This situation happens, for example, if $\mathcal{H}$ is a class of outerplanar graphs of bounded degree. 
To overcome this difficulty, we need some additional properties of CMSO. In particular, it is known that every CMSO-definable property on structures has a finite state. This fact was first explicitly proved by Bodlaender et al. in~\cite{BodlaenderFLPST16} and we refer to this paper for the definitions. Lemma~3.2 of~\cite{BodlaenderFLPST16} implies the following lemma.

\begin{lemma}\label{lem:finite-state}
Let $\varphi$ be a CMSO formula on graphs. For every positive integer~$d$, there exists positive integers $s$ and $t$ with $s<t$ that only depend on $\varphi$ and $d$, such that the following holds: if a graph $H$ has a family $X$ of false twins of degree~$d$, such that $|X|\geq t$ and $Y\subset X$ with $|Y|=s$, then $H\models\varphi$ if and only if $H-Y\models \varphi$.
\end{lemma}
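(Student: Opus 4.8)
The plan is to derive Lemma~\ref{lem:finite-state} from the \emph{finite-state} property of CMSO recorded in Lemma~3.2 of~\cite{BodlaenderFLPST16}. The guiding intuition is that a large family of false twins of a fixed degree~$d$ behaves like the repeated application of one fixed gluing operation to a boundaried graph, so the CMSO-type of the whole graph can take only finitely many values as the number of twins grows, and must therefore stabilise periodically. Extracting the period of this stabilisation will produce the desired constants $s<t$.

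First I would set up the boundaried-graph framework. Let $X$ be the family of false twins and write $S=N_H(x)$ for their common open neighbourhood; by assumption $|S|=d$. I regard $S$ as a labelled boundary of size~$d$ and decompose $H$ as a gluing $H=C\oplus G_k$, where $C=H-X$ is the \emph{core} (a boundaried graph that contains $S$ together with all edges among the vertices of $S$ and all edges from $S$ to the remainder of $H$) and $G_k$ is the \emph{gadget} consisting of $k=|X|$ pairwise non-adjacent vertices, each joined to every boundary vertex of $S$ and to nothing else. The key structural point is that $G_{k+1}$ is obtained from $G_k$ by one fixed operation, namely adding a single new vertex adjacent precisely to the boundary; thus the whole family $(G_k)_{k\geq 0}$ is generated by iterating a single gluing step.

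Next I would invoke the finite-state property. For boundaries of size~$d$, a CMSO sentence~$\varphi$ induces an equivalence $\equiv_\varphi$ on boundaried graphs of finite index $N=N(\varphi,d)$, and this equivalence is a congruence for gluing: if $A\equiv_\varphi A'$ then $A\oplus D$ and $A'\oplus D$ agree on~$\varphi$ for every context~$D$. Assigning to each $G_k$ its $\equiv_\varphi$-class $\mathrm{state}(G_k)$, the ``add one twin'' operation acts as a fixed transition on these finitely many states, so the sequence $\mathrm{state}(G_0),\mathrm{state}(G_1),\dots$ is eventually periodic with pre-period~$p$ and period~$q$, both at most~$N$. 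I would then take $s=q$ and $t=p+q+1$; note $s<t$ always holds. For $|X|=k\geq t$ we have $k>p$ and $k-s=k-q\geq p+1>p$, so both indices lie in the periodic regime and are congruent modulo~$q$, whence $\mathrm{state}(G_k)=\mathrm{state}(G_{k-s})$. Since $H=C\oplus G_k$ and $H-Y=C\oplus G_{k-s}$ for any $Y\subset X$ with $|Y|=s$, applying the congruence property with context~$C$ gives $H\models\varphi$ if and only if $H-Y\models\varphi$. As $s,t$ depend only on~$p,q$, they depend only on~$\varphi$ and~$d$, as required.

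The main obstacle, and the step I would treat most carefully, is the bookkeeping that turns ``adding a false twin'' into a genuine gluing operation: I must ensure that every edge incident to~$S$ that does not go to~$X$ is placed in the core, that each new twin vertex attaches only to the labelled boundary, and, crucially, that the cited congruence of~\cite{BodlaenderFLPST16} is instantiated for exactly this open-neighbourhood (false-twin) gluing rather than for the closed-neighbourhood (true-twin) variant used elsewhere in the paper. Once this alignment is in place, the pigeonhole and eventual-periodicity argument, together with the extraction of $s<t$, is routine.
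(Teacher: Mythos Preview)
Your proposal is correct and is precisely the standard derivation the paper has in mind: the paper does not actually give its own proof of this lemma, but merely states that it is implied by Lemma~3.2 of~\cite{BodlaenderFLPST16} and refers the reader there for the definitions. Your boundaried-graph decomposition with boundary $S=N_H(x)$ of size~$d$, together with the eventual periodicity of the $\equiv_\varphi$-state sequence under the ``add one false twin'' operation, is exactly how one spells out this implication, so there is nothing to compare against beyond noting that you have supplied the details the paper omits.
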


We use Lemma~\ref{lem:finite-state} to modify the  {\bf deleting a simplicial true twin} rule as follows. Let $\varphi$ be a CMOS formula 
 such that
$H\in\mathcal{H}$ if and only of $H\models\varphi$. We take the constants $s$ and $t$ for $\varphi$ and $d=1$. Then we construct the new rule:

\medskip
\noindent
{\bf Deleting a simplicial true twin$^*$.} 
If $G$ has a set $X$ of simplicial true twins of size at least~$t+7$, then delete the vertices of an arbitrary set $Y\subset X$ of size~$s$ from $G$.

\smallskip
\noindent
By using the same arguments as in the proof of Claim~1, we can show that if $G'=G-Y$ is obtained from $G$ by the application of {\bf deleting a simplicial true twin$^*$}, then $G$ has a square root $H\in \mathcal{H}$ if and only if $G'$ has a square root $H'\in\mathcal{H}$.
Afterwards we apply the same {\bf labeling edges} and {\bf deleting irrelevant edges} rules and show Claim~5 in the same way as before 
(namely, by using the fact that condition~(i) holds).
For the final stage, we have to adjust the constant upper bound on the treewidth, which has increased due the modified rule of deleting simplicial true twins.

\medskip
\noindent
We conclude our paper by posing the following two open problems. First, is {\sc ${\cal H}$-Square Root} polynomial-time solvable for every class $\mathcal{H}$ of graphs
of bounded pathwidth?
Second, is {\sc ${\cal H}$-Square Root} polynomial-time solvable if ${\cal H}$ is the class of planar graphs?
Both these problems require additional proof techniques to solve them.

\medskip
\noindent
{\it Acknowledgements.} We thank Dimitrios M. Thilikos for helpful comments 
on the generalizations of Theorems~\ref{thm:outerplanar} and~\ref{thm:pw2} in Section~\ref{s-con},
and we thank an anonymous reviewer for helpful comments on our paper.

\end{document}